\newtheorem{Theorem}{Theorem}[section]
\newtheorem{Lemma}[Theorem]{Lemma}
\newtheorem{Corollary}[Theorem]{Corollary}
\newtheorem{Remark}[Theorem]{Remark}
\newcommand{\E}{\mathbb{E}}
\begin{document}
	
	\title{Stochastic leverage effect in high-frequency data: a Fourier based analysis. \\
	}
	\author{\small{Imma Valentina Curato\footnote{Corresponding author: Ulm University, Institute of Mathematical Finance, Helmholtzstrae 18, 89069 Ulm, Germany. {\sc E-mail:} imma.curato@uni-ulm.de} and Simona Sanfelici\footnote{University of Parma, Department of Economics, Via J. Kennedy, 6, 43125 Parma, Italy. {\sc E-mail:} simona.sanfelici@unipr.it}}}
	
\onehalfspacing
	
	\maketitle
	
	\textwidth=160mm \textheight=225mm \parindent=8mm \frenchspacing
	\vspace{3mm}

\begin{abstract}
The stochastic leverage effect, defined as the standardized covariation between the returns and their related volatility, is analyzed in a stochastic volatility model set-up. A novel estimator of the effect is defined using a pre-estimation of the Fourier coefficients of the return and the volatility processes. The consistency of the estimator is proven. Moreover, its finite sample properties are studied in the presence of microstructure noise effects. The Fourier methodology is applied to S\&P500 futures prices to investigate the magnitude of the stochastic leverage effect detectable at high-frequency.
\end{abstract}
	
	{\bf JEL Classification: } C13, C14, C51, C58
	
	{\it Keywords: } Fourier analysis, leverage effect, high-frequency data, microstructure noise
	
	\section{Introduction}
\label{sec1}

The leverage effect is one of the most striking empirical regularity observed in financial time series. In its classical interpretation given by \cite{B} and \cite{C82}, the effect refers to the negative and constant correlation typically observed between returns and their respective volatilities. \cite{BW00,C87,C92,F87,FW01,G05,N91} and \cite{W01} empirically investigate the presence of constant and negative correlation between returns and volatilities across different financial asset types. They observe that the effect is in general larger for aggregate market index returns than for individual stocks, see discussion in \cite{T96}, and detectable at frequencies lower than or equal to $1$ day. 

In the empirical literature employing high-frequency data, i.e. intra-daily data, there is no consensus in interpreting the leverage effect as above. \cite{AFL13,B06,T96} demonstrate the presence of a constant and negative correlation between returns and volatilities, the analysis in \cite{BR10,FW01,Y05} support the claim of a time-varying effect, and \cite{ASFLWY13,AJ14,CW,KX17,MT20,MW,MZ09} analyze the presence of stochastic correlation between returns and volatilities. We follow this last strand of literature.

In a high-frequency framework, it is more appropriate to define the leverage effect following \cite{AFL13}, namely, as the instantaneous correlation  
\begin{equation}
	\label{rhoi}
	R(t)= \frac{\langle{dp,d\sigma^2} \rangle}{\sqrt{\langle dp,dp \rangle \langle d\sigma^2, d\sigma^2, \rangle}},
\end{equation}
which corresponds to the standardized quadratic covariation between the increments of the logarithmic asset price $p$, i.e. the return process, and the increments of the volatility process $\sigma^2$.

\cite{AFL13} observe that the magnitude of the leverage effect (\ref{rhoi}) detected in the data (using a classical realized covariance estimator) is near zero if we use data in a daily time window, and becomes negative if we use a weekly to a monthly time window. However, the leverage effect should not change its value on different time horizons, as it is an intrinsic feature of the model underlying the data. The authors observe that several sources of bias arise. One is due to the latent (i.e. non-observable) volatility path, and a second one to the presence of microstructure noise. We observe the latter when using data at a frequency higher than $5$ minutes, e.g. \emph{tick-data}. \cite{AFL13} employ different proxies of the volatility path to overcome these problems, namely, local averages of integrated volatility estimators and bias corrections. Unfortunately, the methodology described in \cite{AFL13} only works under the assumption that the volatility is a stationary process and  $R(t)$ is equal to a constant, as in the \cite{HES} model set-up.

To avoid the problems related to the estimation (\ref{rhoi}), \cite[Formula 8.42]{AJ14} examine an alternative measure of the leverage effect. They estimate the standardized quadratic covariation between $p$ and $\sigma^2$ in a time window $[0,T]$, 
\begin{equation}
	\label{rho}
	R_T= \frac{ \langle{p,\sigma^2} \rangle_T}{\sqrt{\langle p,p \rangle_T \langle \sigma^2, \sigma^2 \rangle}_T},
\end{equation}
where the \emph{integrated volatility}, i.e. the quadratic variation of $p$, and the \emph{integrated volatility of volatility}, i.e. the quadratic variation of $\sigma^2$, appear at the denominator, and the \emph{integrated leverage} appears at the numerator. We call $R_T$ the \emph{stochastic leverage effect}. For instance, if we assume that $p$ and $\sigma^2$ follow the Heston model, then $R_T=\rho$; otherwise, $R_T$ is in general a random quantity. 

In this paper, we analyze an estimation methodology for the stochastic leverage effect designed to work in the presence of microstructure noise. To this end, it is crucial to make a clear distinction between an estimator of the integrated leverage (appearing at the numerator of $R_T$) and an estimator of the stochastic leverage effect. The latter is a plug-in estimator that combines estimates of the former, of the integrated volatility and the integrated volatility of volatility. \cite{ASFLWY13} and \cite{MW} discuss the estimation of the integrated leverage in the presence of microstructure noise. In the former, the authors also analyze models with jumps.  \cite{KX17} and \cite{ASFLWY13} employ the plug-in estimator for $R_T$ defined in \cite{AJ14} in different simulation analysis. \cite{ASFLWY13} examine an estimation of $R_T$ in a Heston model set-up and in the absence of microstructure noise effects. Their analysis depends on several bias corrections, especially applied to the volatility of volatility estimation, and the tuning of parameters identifying, for example, the length of the time windows of data used to estimate the latent volatility path. On the other hand, \cite{KX17} use a volatility instrument as the VIX to perform their estimation.

We present a methodology to estimate the stochastic leverage effect, which is based on a continuous-time model. We refer the reader to Remark \ref{tick} for more details on this modelling assumption. Therefore, when comparing our methodology with the state-of-the-art literature, we refer to \cite{MW} for an estimation of the integrated leverage, and to \cite{AJ14} and \cite{ASFLWY13} for discussing theoretical and numerical features of an estimator of $R_T$. The paper of \cite{KX17} is based on a different modelling framework, and we do not consider it further. 

The target of this paper is twofold. First of all, we want to develop an estimation strategy for $R_T$ that constitutes an alternative to the one proposed in \cite{AJ14}, and handles data contaminated by microstructure noise. Secondly, we want to determine a selection strategy for the tuning parameters appearing in our proposed methodology and analyze the performance of the estimator in set-up  other than the Heston model. Notably, this latter point is critical for using \emph{tick-data} because they are not always well described by a Heston model, see Remark \ref{tick}. To the best of our knowledge, the numerical and empirical analysis conducted in the paper is the first examining the presence of the stochastic leverage effect (\ref{rho}) at high frequency in the presence of microstructure noise effects.

Our estimator employs the Fourier methodology introduced in \cite{MM}, see also \cite{MRS} for a complete overview. We call it the \emph{Fourier estimator of the stochastic leverage effect} (in short, FESL). We choose this methodology because it avoids estimating the latent volatility path. This step is mandatory in the estimators appearing in \cite{ASFLWY13}  and is one reason behind several bias corrections applied to their estimations of $R_T$.

To define a Fourier estimator of $R_T$, we then combine three different estimators: the Fourier estimator of the integrated leverage (FEL), of the integrated volatility (FEV) and the integrated volatility of volatility (FEVV) which have been defined in \cite{CS15}, \cite{MM} and \cite{CMS15}, respectively. When estimating the numerator and denominator of (\ref{rho}), we handle the latent volatility by computing $N$ Fourier coefficients of the volatility process. This step requires the preliminary computation of $M$ Fourier coefficients of the returns. We call the parameters $M$ and $N$ \emph{cutting frequency parameters} in the following. In the Fourier set-up described in this paper, $M$ and $N$ play a role similar to the tuning parameters in \cite{ASFLWY13}.

The consistency and the finite sample properties of the FESL are strictly related to a thorough analysis of the consistency and finite sample properties of the FEL, the FEV, and the FEVV used in the estimation. \cite{MM09} analyze the consistency of the FEV, whereas \cite{MS08} study its finite sample properties and a selection strategy for the cutting frequency parameters appearing in the estimation. Regarding the FEVV, \cite{CMS15} analyze its consistency, finite sample properties, and selection of parameters $M$ and $N$. Unfortunately, the theoretical results available for the FEL are not sufficient to directly obtain the consistency of the FESL because the three consistency theorems related to the FEL, the FEV, and the FEVV, available in the literature, hold under different assumptions. Hence, we prove a new consistency result for the FEL as detailed in Section \ref{sec3}. Moreover, the finite sample properties of the FEL, in the presence of microstructure noise,  have not yet been analyzed in the literature. We focus on them in Section \ref{sec5} and conclude that the FEL is asymptotically unbiased, although it has a diverging mean squared error. 

We propose a variance corrected estimator of the FEL in the presence of microstructure noise. Moreover, in an extensive simulation study, we analyze selection strategies for the parameters $M$ and $N$ appearing in the FEL and its variance corrected version. We use Monte-Carlo data sets drawn from \cite{HES}, and the generalized Heston model presented in \cite{VV} and study how the selection of the parameters $M$ and $N$ impacts the mean squared error and the sample variance of the estimation. Our findings suggest that the parameters obtained by minimizing the mean squared error of the FEL are equivalent to those obtained minimizing its sample variance. Moreover, we note that using the variance corrected estimator reduces the sample variance of the final estimation by a half. Having a selection strategy for the parameters $M$ and $N$, we show a comparison between the performance of the FEL and the realized covariance-based estimator of the integrated leverage presented by \cite{MW}. To conclude, we also perform a sensitivity analysis on the FEL on data sets generated from the generalized Heston model defined in \cite{VV}: note that $R_T$ is a random quantity in this set-up.

The paper has the following structure. In Section \ref{sec3}, we introduce the model set-up, the definition of the FEL, and the FESL together with their asymptotic properties in the absence of microstructure noise. In Section \ref{sec5}, we analyze the finite sample properties of the FEL in the presence of microstructure noise and the selection strategy for the cutting frequency parameters. In Section \ref{sec6}, we discuss how to implement the FESL in the presence of microstructure noise. Section \ref{sec7} applies our results to S\&P500 futures prices. Section \ref{sec8} concludes. The Appendix contains the proofs of all statements presented in the paper.

\section{Estimation of the stochastic leverage effect in the absence of microstructure}
\label{sec3}

We assume throughout that the logarithmic asset price and the volatility process are a solution to the system of equations
\begin{equation}
	\label{mod}
	\left\{ \begin{array}{ll}
		dp(t)&= a(t) \, dt + \sigma(t) \, dW(t)  \\
		d\sigma^2(t)&= b(t) \, dt + \gamma(t) \, dZ(t),
	\end{array} \right. 
\end{equation}
where $W(t), \, t \geq 0$ and $Z(t), \, t \geq 0$ are two correlated standard Brownian motions. Their correlation process is $\rho(t)$ with values in $[-1,1]$. We consider $p(t)$ the underlying \emph{efficient} logarithmic price process.

\begin{Remark}
	\label{tick}
	The choice of a continuous-time modelling set-up for $p(t)$ is motivated by the empirical work of \cite{COP14} where the authors analyze the presence of jumps in \emph{tick data}. They observe that our ability to distinguish true discrete jumps from continuous diffusive variation diminishes as we increase the sampling frequency.
	For example, a short-lived burst of volatility is likely to be identified as a jump when working with data sampled at a frequency lower than $5$ minutes but it is compatible with a continuous path when working with tick data. Thus, we do not consider jumps in our model. We add instead different randomness sources in the dynamics of the logarithmic asset price and its respective volatility that aim to describe the variability observed in \emph{tick data}. An exemplary model in this set-up is the generalized Heston model defined by \cite{VV}.
\end{Remark}

We perform our analysis in a time window $[0,T]$ for $T>0$, and such that the processes appearing in model (\ref{mod}) satisfy the following assumption:
\begin{itemize}
	\item {\bf (H1)} $a(t)$, $b(t)$, $\sigma(t)$, $\gamma(t)$ and $\rho(t)$ are $\mathbb{R}$-valued processes, almost surely continuous on $[0,T]$ such that
	
	\[
	\mathbb{E}\Big[ \sup_{t\in[0,T]}|a(t)|^2\Big]<\infty,  \, \, \, \,
	\,\mathbb{E}\Big[ \sup_{t\in[0,T]}|b(t)|^2\Big]<\infty,
	\]
	\[
	\mathbb{E}\Big[ \sup_{t\in[0,T]}|\sigma(t)|^4\Big]<\infty, \, \,\, \,\,
	\mathbb{E}\Big[ \sup_{t\in[0,T]}|\gamma(t)|^4\Big]<\infty,
	\]
	\[
	\mathbb{E}\Big[ \sup_{t\in[0,T]}|\rho(t)|^2\Big]<\infty.
	\]
\end{itemize}

We start by developing an estimation strategy for (\ref{rho}) in the absence of microstructure noise and studying its consistency. We aim to define a plug-in estimator in Section \ref{rt} which employs the FEL, the FEV, and the FEVV, respectively defined in  \cite{CS15}, \cite{MM}, and \cite{CMS15}.
As the first step, we analyze under which set of assumptions the estimators above are all consistent. Let 
$ {\cal S}_n := \{0=t_{0}\leq t_{1} \leq \ldots \leq t_{n}=T \},$ be the set of observation times,
and define $\tau(n)=\max_{i=0,\ldots,n-1} |t_{i+1} -t_{i}|$. The FEV and FEVV are both consistent under the assumptions that $\frac{N^4}{M} \to 0$, and $M \tau(n) \to 0$ as $n,M, N \to \infty$ and $\tau(n) \to 0$, see \cite{CMS15} and \cite{MM09}. Unfortunately, the consistency of the FEL has not been proved in \cite{CS15} under an assumption of type $M \tau(n) \to 0$. Hence, to define a consistent Fourier estimator of (\ref{rho}), we need to prove that the FEL is consistent under a new set of assumptions, see Remark \ref{new_ass}. To start with, we briefly remind the definition of the FEL.

\subsection{Fourier estimator of the integrated leverage}
Let $(p(t),\sigma^2(t))$ be a solution to (\ref{mod}). We define the leverage process $\eta(t)$ as
\begin{equation}
	\label{def}
	\langle dp(t),d\sigma^2(t) \rangle=\sigma(t)\gamma(t)\rho(t) dt= \eta(t) dt.
\end{equation}
We are interested in estimating the integrated covariation between the logarithmic price and the volatility process, which appears at the numerator of (\ref{rho}), by determining an estimator for
\begin{equation}
	\label{I}
	\eta=\int_{0}^{T} \eta(t) dt.
\end{equation}

We follow a methodology based on the use of the Fourier coefficients of the process $\eta(t)$.

Following \cite{MM}, we define the Fourier coefficients of the returns and of the increments of the volatility process as
\begin{equation}
	\label{cdp}
	c(l;dp)= \frac{1}{T} \int_{0}^{T} \mathrm{e}^{-\mathrm{i}\frac{2\pi}{T}l t} dp(t),
\end{equation}
and
\begin{equation}
	\label{cdv}
	c(l;d\sigma^2)= \frac{1}{T} \int_{0}^{T} \mathrm{e}^{-\mathrm{i}\frac{2\pi}{T}l t} d\sigma^2(t),
\end{equation}
for each $l \in \mathbb{Z}$.
Note that for all $l\neq0$ and by using the integration by parts formula, we can rewrite (\ref{cdv}) as
\begin{equation}
	\label{mod3}
	c(l;d\sigma^2)= \mathrm{i}l \frac{2\pi}{T} c(l;\sigma^2) + \frac{1}{T}(\sigma^2(T)-\sigma^2(0)),
\end{equation}
where
\[
c(l;\sigma^2)=\frac{1}{T} \int_{0}^{T} \mathrm{e}^{-\mathrm{i}\frac{2\pi}{T}l t} \sigma^2(t) dt.
\]

Given two functions $\Phi$ and $\Psi$ on the integers $\mathbb{Z}$,
we say that their Bohr convolution product exists if the following limit exists
for all integers $h$
\[
(\Phi \ast \Psi)(h):= \lim_{N \to \infty} \frac{1}{2N+1} \sum_{|l|\leq N}
\Phi(l) \Psi(h-l).
\]

Under Assumption (H1) and for a fixed $h$, we define $\Phi(l) := c(l;d\sigma^2)$ and $\Psi(h-l) :=c(h-l,dp)$, then the limit in probability of the Bohr convolution product exists and converges to the $h$-th Fourier coefficient of the leverage process.
This result immediately follows from \cite[Theorem 2.1]{MM09}.
The $h$-th Fourier coefficient of $\eta(t)$ is then defined as
\begin{equation}
	\label{limit}
	c(h;\eta) = \lim_{N \to \infty} \frac{T}{2N+1} \sum_{|l|\leq N} c(l;d\sigma^2)
	c(h-l;dp)= \frac{1}{T} \int_{0}^{T} \mathrm{e}^{-\mathrm{i}\frac{2\pi}{T}h t} \eta(t) dt.
\end{equation}

The definition of the Fourier coefficients of $\eta(t)$ has the obvious drawback to being feasible only when continuous observations of the logarithmic price and the volatility paths are available.
By using the methodology described in \cite{CS15,C18}, it is possible to give an estimator of the Fourier coefficients of $\eta(t)$ when discrete and non-equidistant observations of $p(t)$ are available on the time grid ${\cal{S}}_n$ and the volatility is latent. Hereafter, we indicate the discrete observed returns by $\delta_{i}=p(t_{i+1})-p(t_{i})$ for all $i=0,...,n-1$.

An estimator of the $h$-th Fourier coefficient of the leverage process can be defined as
\begin{equation}
	\label{f_eta}
	c_{n,M,N}(h;\eta)= \frac{T}{2N+1} \sum_{|l|\leq N} \mathrm{i}l \frac{2\pi}{T} c_{n,M}(l;\sigma^2)c_{n}(h-l;dp),
\end{equation}
for any integer $h$ such that $|h| \leq N$, where $c_n(s;dp)$ are the discrete Fourier coefficients of the returns
\begin{equation}
	\label{d1}
	c_n(s;dp)= \frac{1}{T} \sum_{i=0}^{n-1} \mathrm{e}^{-\mathrm{i}s\frac{2\pi}{T}t_{i}}
	\delta_{i}(p)
\end{equation}
for $|s|\leq  N+M$, and $c_{n,M}(h;\sigma^2)$ are the Fourier coefficients of the volatility process introduced in \cite{MM} for $|l| \leq  N$
\begin{equation}
	\label{d3_alt}
	c_{n,M}(l;\sigma^2)=\frac{T}{2M+1} \sum_{|s|\leq M}  c_{n}(s;dp)c_{n}(l-s;dp).
\end{equation}

The estimators are written as functions of $n$, $M$, and $N$ which stand for the number of observations available, the number of the discrete Fourier coefficients of the returns, and of the Fourier coefficients of the volatility process, respectively.

Finally, the FEL is obtained from Definition (\ref{f_eta}) for $h=0$ 
\[
\hat{\eta}_{n,M,N}= T c_{n,M,N}(0;\eta)= \frac{T^2}{2N+1} \sum_{|l|\leq N} \mathrm{i}l \frac{2\pi}{T} c_{n,M}(l;\sigma^2)c_{n}(-l;dp).
\]
We can also give a more explicit form of the FEL. By employing the normalized Dirichlet kernel 
\begin{equation}
	\label{dir}
	D_N(t)= \frac{1}{2N+1} \sum_{|l| \leq N} \mathrm{e}^{\mathrm{i}\frac{2\pi}{T}lt},
\end{equation}
and its first derivative 
\begin{equation}
	\label{der}
	D_N^{\prime}(t)= \frac{1}{2N+1} \sum_{|l| \leq N} il \frac{2\pi}{T} \mathrm{e}^{\mathrm{i}\frac{2\pi}{T}lt},
\end{equation}
we obtain
\begin{equation}
	\label{decomp}
	\hat{\eta}_{n,M,N}= \sum_{i=0}^{n-1} \sum_{j=0}^{n-1} \sum_{k=0}^{n-1} D_M(t_{i}-t_{j}) D_N^{\prime}(t_{k}-t_{j})
	\delta_{i} \delta_{j} \delta_{k}.
\end{equation}

We now focus on showing the consistency of the estimator (\ref{decomp}). 

\begin{Remark}
	\label{new_ass}
	In \cite{CS15}, it is shown that the FEL (\ref{decomp}), in the absence of microstructure noise, is consistent if $N^2/M \to 0$ and $M\tau(n)\to a$ with $a >0$ as $n,M,N \to \infty$ and $\tau(n) \to 0$. We now prove consistency under the assumptions $N^2/M \to 0$ and $MN\tau(n)\to 0$ as $n,M,N \to \infty$ and $\tau(n) \to 0$. Note, that the assumption $MN\tau(n)\to 0$ implies that $M \tau(n) \to 0$ which is the assumption under which the consistency of the FEV and the FEVV holds. Changing the asymptotic rate between the parameters $M$, $N$, and $\tau(n)$ implies a complete different consistency's proof of the FEL with respect to the one given by \cite{CS15}. In particular, we do not employ Malliavin calculus as in the proof of \cite[Theorem 3.1]{CS15}.
\end{Remark}

\begin{Theorem}
	\label{T0}
	We assume that Assumption (H1) and 
	\begin{equation}
		\label{Star2}
		\frac{N^2}{M} \to 0 \,\,\,\,\,\textrm{and} \,\,\,\,\, MN\tau(n)\to 0
	\end{equation}
	hold true as $n,M,N \to \infty$ and $\tau(n) \to 0$. 
	Then
	\begin{equation}
		\label{Equi}
		\hat{\eta}_{n,M,N} \xrightarrow{\mathbb{P}} \eta.
	\end{equation}
	
\end{Theorem}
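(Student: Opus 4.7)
The plan is to decompose the error through an intermediate ``ideal'' truncation built from continuous-time Fourier coefficients. Define
$$\eta_N := \frac{T^2}{2N+1} \sum_{|l|\leq N} \mathrm{i} l \frac{2\pi}{T}\, c(l;\sigma^2)\, c(-l;dp),$$
so that identity (\ref{due}) at $h=0$, together with $\eta = T c(0;\eta)$, yields $\eta_N \xrightarrow{\mathbb{P}} \eta$ as $N \to \infty$. Since $\hat{\eta}_{n,M,N} - \eta = (\hat{\eta}_{n,M,N} - \eta_N) + (\eta_N - \eta)$, it remains to prove $\hat{\eta}_{n,M,N} - \eta_N \xrightarrow{\mathbb{P}} 0$. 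The telescoping identity
$$c_{n,M}(l;\sigma^2)\, c_n(-l;dp) - c(l;\sigma^2)\, c(-l;dp) = [c_{n,M}(l;\sigma^2) - c(l;\sigma^2)]\, c_n(-l;dp) + c(l;\sigma^2)\, [c_n(-l;dp) - c(-l;dp)]$$
splits the residual into two pieces $I_1 + I_2$ to be bounded separately in $L^2$.

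For $I_2$ (return-side discretization), I would write $c_n(-l;dp) - c(-l;dp) = \frac{1}{T}\int_0^T (e^{\mathrm{i}\frac{2\pi}{T}l\, t_{i(t)}} - e^{\mathrm{i}\frac{2\pi}{T}l t})\, dp(t)$ with $t_{i(t)}$ the grid point immediately left of $t$, and interchange the finite $l$-sum with the It\^o integral. The Lipschitz estimate $|e^{\mathrm{i}\frac{2\pi}{T}l\, t_{i(t)}} - e^{\mathrm{i}\frac{2\pi}{T}l t}| \leq 2\pi |l|\tau(n)/T$, combined with the It\^o isometry and the eighth-moment bound in (H1), produces $\mathbb{E}[|I_2|^2] \leq C N^4 \tau(n)^2$. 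This vanishes because $N^2 \tau(n) = (N^2/M)\cdot(M\tau(n)) \to 0$ under (\ref{Star2}), using that $M\tau(n) \leq MN\tau(n) \to 0$ whenever $N\geq 1$.

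For $I_1$ (volatility-side approximation), I would further split $c_{n,M}(l;\sigma^2) - c(l;\sigma^2)$ into the continuous Malliavin--Mancino truncation residual (the finite $M$-cut of the Bohr convolution defining $c(l;\sigma^2)$) and the discretization error obtained by substituting $c_n(s;dp)$ for $c(s;dp)$ in (\ref{d3_alt}) for $|s|\leq M$. After Cauchy--Schwarz in $l$, the weighted sum of the first component against $|l|/(2N+1)$ contributes squared $L^2$-norm of order $\bigl(\sum_{|l|\leq N} |l|/(2N+1)\bigr)^2 \cdot O(1/M) = O(N^2/M)$, with $O(1/M)$ being the variance of each Malliavin--Mancino Fourier volatility coefficient. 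The second component inherits $\|c_n(s;dp) - c(s;dp)\|_2 \leq C|s|\tau(n)$ through the $|s|\leq M$ convolution, producing a per-frequency bias of order $M\tau(n)$ and, after the outer $|l|$-sum over $|l|\leq N$, an overall order $MN\tau(n)$. Both contributions vanish by (\ref{Star2}).

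The main obstacle is the covariance bookkeeping underlying $I_1$: because $c_{n,M}(l;\sigma^2)$ is quadratic in discrete returns, the squared $L^2$-norm of the weighted $l$-sum is a sextic functional of the increments $\delta_i$. The cleanest implementation is to substitute the Dirichlet-kernel representation (\ref{decomp}), expand the product $\delta_i \delta_j \delta_k\, \delta_{i'} \delta_{j'} \delta_{k'}$ via It\^o's formula, and bound each pairing of indices using Proposition \ref{cor_dir} together with the moment bounds in (H1). The delicate point is to balance the leading diagonal contributions against the oscillatory off-diagonal cancellations encoded by $D_M$ and $D_N^\prime$, so that precisely the orders $N^2/M$ and $MN\tau(n)$ emerge without spurious logarithmic or polynomial losses that would require strengthening the hypotheses.
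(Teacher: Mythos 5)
Your proposal is correct and follows essentially the same route as the paper: the paper likewise telescopes through the continuous-time truncated sum (its terms $\eta^1_{n,N}$ and $\eta^2_{n,M,N}$), and its error bounds are exactly the orders you identify, namely $O(N/\sqrt{M})$ for the $M$-truncation of the volatility coefficients, $O(N^2\tau(n))$ for the return-side discretization, $O(MN\tau(n))$ for the grid error inside $c_{n,M}(l;\sigma^2)$, and an $O(N^{-1/2}+N^{-1/p})$ remainder for $\eta_N-\eta$ controlled via Proposition \ref{cor_dir}. The sextic bookkeeping you flag as the main obstacle is in fact avoided by the per-frequency Cauchy--Schwarz step you already invoke (bounding $\E|\sum_l w_l A_l B_l|\le\sum_l|w_l|\,\E[|A_l|^2]^{1/2}\E[|B_l|^2]^{1/2}$, which only requires fourth moments of double stochastic integrals via the It\^o isometry), and this is precisely how the paper carries out the estimate.
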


\subsection{Fourier estimator of the stochastic leverage effect}
\label{rt}

We can now define the Fourier estimator of the stochastic leverage effect (FESL) by combining the FEL, the FEV and the FEVV. The FESL takes the form
\[
\hat{R}_T=\frac{\hat{\eta}_{n,M,M}}{ \sqrt{\hat{\sigma}_{n,M}^2 \hat{\gamma}_{n,M,N}^2}},
\]
where 
\begin{equation}
	\label{vol}
	\hat{\sigma}^2_{n,M}= \frac{T^2}{2M+1} \sum_{|s|\leq M}  c_{n}(s;dp)c_{n}(-s;dp),
\end{equation}
and 
\begin{equation}
	\label{volofvol}
	\hat{\gamma}_{n,M,N}^2= \lim_{N \to \infty} \frac{T^2}{2N+1} \sum_{|l|\leq N} \Big(1-\frac{|l|}{N}\Big) l^2 \frac{4\pi^2}{T^2} c_{n,M}(l;\sigma^2) c_{n,M}(-l;\sigma^2).
\end{equation}

Finally, we obtain the consistency of the estimator $R_T$ by using the continuous mapping theorem and the results proved in Theorem \ref{T0}, Theorem 3.2 in \cite{CMS15}, and Theorem 3.4 in \cite{MM09}.

\begin{Corollary}
	\label{Tplug}
	We assume that Assumption (H1) and 
	$$ \frac{N^4}{M} \to 0 \,\,\,\,\,\textrm{and} \,\,\,\,\, MN\tau(n)\to 0$$
	hold true as $n,M,N \to \infty$ and $\tau(n) \to 0$. Then
	$$\hat{R}_{T} \xrightarrow{\mathbb{P}} R_T.$$
\end{Corollary}

\begin{Remark}
	To obtain a central limit theorem for an estimator of $R_T$ is necessary to have central limit theorem results for each estimator appearing in its definition. These results have to hold under the same set of assumptions. A central limit theorem for an estimator of $R_T$ follows from applying the delta method, see \cite{AJ14}. All this said, in the realized covariance-based literature, explicit calculations leading to a central limit theorem of an estimator for $R_T$ are not present. Proving a central limit theorem for the estimator $\hat{R}_T$ is also outside the scope of our paper. However, we plan to analyze the latter in future research projects employing the FESL.
\end{Remark}

\section{Finite sample properties of the Fourier estimator of the integrated leverage}
\label{sec5}

The finite sample properties of the FEV and the FEVV have been analyzed in the presence of microstructure noise in \cite{MS08} and \cite{CMS15}, respectively. We study in this section the finite sample properties of the FEL. 

The results contained in this section hold under the following assumption.

\begin{itemize}
	\item {\bf (H2)} $a(t)$, $b(t)$, $\sigma(t)$, $\gamma(t)$ and $\rho(t)$ are $\mathbb{R}$-valued processes, almost surely continuous on $[0,T]$ such that
	
	\[
	\mathbb{E}\Big[ \sup_{t\in[0,T]}|a(t)|^8\Big]<\infty,  \, \, \, \,
	\,\mathbb{E}\Big[ \sup_{t\in[0,T]}|b(t)|^8\Big]<\infty,
	\]
	\[
	\mathbb{E}\Big[ \sup_{t\in[0,T]}|\sigma(t)|^8\Big]<\infty, \, \,\, \,\,
	\mathbb{E}\Big[ \sup_{t\in[0,T]}|\gamma(t)|^8\Big]<\infty,
	\]
	\[
	\mathbb{E}\Big[ \sup_{t\in[0,T]}|\rho(t)|^8\Big]<\infty.
	\]
\end{itemize}

Moreover, we add microstructure noise to the underlying efficient logarithmic price $p(t)$ defined in (\ref{mod}) by assuming that the logarithm of the observed price is
\begin{equation}
	\label{noiseL}
	\widetilde p(t_{i})= p(t_{i})+ \zeta(t_{i}),\,\,\,\,\,\textrm{for $i=0, \ldots, n$},
\end{equation}
where $\zeta(t)$ is the microstructure noise.
We also assume the following

\begin{itemize}
	\item {\bf (H3)} The random shocks $(\zeta(t_i))_{0\leq i\leq n}$ are independent and identically distributed with bounded sixth moment. Moreover, the random shocks are independent of $p(t)$.
\end{itemize}
\vspace{3mm}

We define $\epsilon_i=\zeta(t_{i+1})-\zeta(t_i)$ and $\widetilde{\delta}_i=\widetilde p(t_{i+1})-\widetilde p(t_i)$.
Then, the FEL (\ref{decomp}) in the presence of microstructure noise becomes

\begin{equation}
	\label{decomp_noise}
	\widetilde{\eta}_{n,M,N}=  \sum_{i=0}^{n-1} \sum_{j=0}^{n-1} \sum_{k=0}^{n-1} D_M(t_i-t_{j}) D_N^{\prime}(t_{k}-t_{j})
	\widetilde{\delta}_{i} \widetilde{\delta}_{j} \widetilde{\delta}_{k}.
\end{equation}
We can disentangle (\ref{decomp_noise}) as
\begin{align}
	&\sum_{i\neq j \neq k}   D_M(t_i-t_{j}) D_N^{\prime}(t_{k}-t_{j}) \widetilde{\delta}_{i} \widetilde{\delta}_{j} \widetilde{\delta}_{k} \label{control1} \\
	&+\sum_{i,j:i\neq j} D_M(t_i-t_{j}) D_N^{\prime}(t_{i}-t_{j}) \widetilde{\delta}_{i}^2 \widetilde{\delta}_{j} + \sum_{i,j} D_N^{\prime}(t_{i}-t_{j})  \widetilde{\delta}_{i} \widetilde{\delta}_{j}^2 \nonumber\\
	= &\sum_{i\neq j \neq k}   D_M(t_i-t_{j}) D_N^{\prime}(t_{k}-t_{j}) \delta_{i} \delta_{j}  \delta_{k} \label{dec1}\\
	+ & \sum_{i,j:i\neq j} D_M(t_i-t_{j}) D_N^{\prime}(t_{i}-t_{j}) \delta_{i}^2 \delta_{j} + \sum_{i,j} D_N^{\prime}(t_{i}-t_{j})  \delta_{i} \delta_{j}^2  \label{dec2}\\
	+& \eta^{\epsilon}_{n,M,N}, \nonumber
\end{align}
where the sum of the components (\ref{dec1}) and (\ref{dec2}) corresponds to the FEL in the absence of microstructure noise and all the noise components are contained in $\eta^{\epsilon}_{n,M,N}$. The explicit expression of the latter can be found in the Appendix, see  (\ref{eta_noise}).

In the modeling set-up (\ref{mod}), the integrated leverage can be positive or negative. Therefore, we analyze the bias of the estimator in absolute value. The definition of the FEL does not require the use of equidistant data. However, for simplicity of computation, we assume equidistant observations in the time window $[0,T]$.
\begin{Theorem}
	\label{bias}
	We assume that Assumptions (H2), (H3)  and
	\begin{equation}
		\label{hyp1}
		\frac{N^2}{M}\to 0 \,\,\,\textrm{and}\,\,\, \frac{MN}{n} \to 0
	\end{equation}
	hold true as $N,M,n \to \infty$. Then the estimator $\widetilde{\eta}_{n,M,N}$ is asymptotically unbiased. More precisely,
	
	\[
	\Big |  \E[\widetilde{\eta}_{n,M,N}-\eta] \Big | \leq \Big | \E[\eta_{n,M,N}-\int_0^{T} \eta(t)\, dt]  \Big |+ \Big |\E[\eta^{\epsilon}_{n,M,N}] \Big |
	\]
	\[
	\leq  \Gamma(n,M,N)+ \Lambda(n,N)+ \Psi(N)+ \Big | 2(n-1)\Big(D_M\Big(\frac{T}{n}\Big)-1\Big)D_N^{\prime}\Big(\frac{T}{n}\Big) \E[\zeta^3]\Big |,
	\]
	where
	\[
	\Gamma(n,M,N) \leq \frac{N(M+N)}{n} \, 8\pi^2 T^{\frac{1}{2}}\, \,\E\Big[\sup_{[0,T]} \sigma^2(t)\Big]^{\frac{3}{2}} + \frac{N}{\sqrt{2M+1}} \, 2 \pi\, T^{\frac{1}{2}} \, \E\Big[\sup_{[0,T]} \sigma^2(t)\Big]^{\frac{3}{2}},
	\]
	\[
	\Lambda(n,N) \leq \frac{N}{\sqrt{n}} \,  4\pi T^{\frac{1}{2}} \,  \E\Big[\sup_{[0,T]}\sigma^2(t)\Big]^{\frac{3}{2}} + \frac{N^2}{n} \, 4\pi^2(1+T^{\frac{1}{2}})\, \E\Big[\sup_{[0,T]} \sigma^4(t)\Big]^{\frac{1}{2}} \E\Big[\sup_{[0,T]}\sigma^2(t)\Big]^{\frac{1}{2}},
	\]
	and,
	\[
	\Psi(N)\leq  \frac{1}{\sqrt{2N+1}} \, T \,\E\Big[\sup_{[0,T]} \eta^2(t)\Big]^{\frac{1}{2}}.
	\]
	
\end{Theorem}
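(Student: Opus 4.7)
The proof begins by noting that the decomposition displayed before the statement splits $\widetilde{\eta}_{n,M,N}$ into $\widetilde{\eta}_{n,M,N} = \eta_{n,M,N} + \eta^{\epsilon}_{n,M,N}$, with $\eta_{n,M,N}$ the noise-free FEL (\ref{decomp}) and $\eta^{\epsilon}_{n,M,N}$ collecting every summand that carries at least one $\epsilon$. Subtracting $\eta$ and applying the triangle inequality yields
\[
\bigl|\E[\widetilde{\eta}_{n,M,N}-\eta]\bigr| \leq \bigl|\E[\eta_{n,M,N}-\eta]\bigr| + \bigl|\E[\eta^{\epsilon}_{n,M,N}]\bigr|,
\]
which is the first claimed inequality and reduces the work to two independent bounds.

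For the noise-free bias I would telescope through two intermediate quantities. First, insert the continuous-time truncated Bohr sum $\eta_N^{\mathrm{cts}} := \frac{T^2}{2N+1}\sum_{|l|\leq N}\mathrm{i}l\frac{2\pi}{T}c(l;\sigma^2)c(-l;dp)$; then $\E[\eta_N^{\mathrm{cts}}-\eta]$ is the tail of the Bohr convolution in (\ref{due}) at $h=0$, and Cauchy--Schwarz in Fourier space together with (H1) bounds it by $\Psi(N)$. Next, replace $c(-l;dp)$ by the discrete coefficient $c_n(-l;dp)$: the resulting error is written via the It\^o isometry as an $L^2$ norm of a piecewise-constant perturbation of the Dirichlet kernel, and using Proposition \ref{cor_dir} alongside (H1) produces the two contributions that form $\Lambda(n,N)$. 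Finally, replacing $c(l;\sigma^2)$ by $c_{n,M}(l;\sigma^2)$ generates the classical FEV-type truncation error of order $N/\sqrt{2M+1}$ (in the spirit of \cite{MS08}) and a triple-index Riemann-type discretization error of order $N(M+N)/n$ coming from the extra convolution layer; together they compose $\Gamma(n,M,N)$.

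The noise contribution $\E[\eta^{\epsilon}_{n,M,N}]$ is analysed summand by summand, using the independence of $\zeta$ from the efficient price, the zero-drift reduction $\E[\delta_i]=0$ from Remark \ref{zero_drift}, the moment identities $\E[\epsilon_i]=\E[\epsilon_i^3]=0$, and the fact that $\E[\epsilon_i\epsilon_j]\neq 0$ only for $|i-j|\leq 1$. These annihilate every summand whose noise factor is an isolated $\epsilon_k$ and every triple $\E[\epsilon_i\epsilon_j\epsilon_k]$ with three distinct indices. The remaining quadratic-in-noise pieces have the shape $\E[\epsilon_i^2]\,\E[\delta_j\delta_k]$ times Dirichlet weights; because $D_N'(0)=0$ and by the decay bounds from Proposition \ref{cor_dir}, they collapse to an $o(1)$ remainder under (\ref{hyp1}). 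The genuinely non-vanishing contribution comes from the adjacent-index expectations $\E[\epsilon_i^2\epsilon_{i\pm 1}]=\mp\E[\zeta^3]$ appearing in the last two double sums of $\eta^{\epsilon}_{n,M,N}$; summing over $i$ in the equidistant grid and exploiting the evenness of $D_M$ together with the oddness of $D_N'$ combines them exactly into $2(n-1)(D_M(T/n)-1)D_N'(T/n)\E[\zeta^3]$.

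The main obstacle will be the exhaustive combinatorial bookkeeping in $\eta^{\epsilon}_{n,M,N}$: each of the many summand types must be checked to vanish by independence or by the odd-moment identities, to be absorbed into the $o(1)$ remainder via the Dirichlet-kernel cancellation at zero under the regime (\ref{hyp1}), or to merge cleanly into the isolated $\E[\zeta^3]$ term. Once this step is done, the estimates for $\Gamma,\Lambda,\Psi$ follow from standard It\^o-isometry and Cauchy--Schwarz arguments combined with the $L^p$-norms of $D_N$ in Proposition \ref{cor_dir} and the moment conditions in (H1); the asymptotic unbiasedness then reads off, since $\Gamma+\Lambda+\Psi\to 0$ under (\ref{hyp1}) and the residual $\E[\zeta^3]$ term is of order $M^2N^2/n^2\leq (MN/n)^2\to 0$ after Taylor-expanding $D_M$ and $D_N'$ around the origin.
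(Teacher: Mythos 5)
Your proposal follows essentially the same route as the paper: split the bias into the noise-free part and the pure-noise part, bound the former by telescoping through the Bohr-truncation, discretization and convolution errors (yielding $\Psi$, $\Lambda$, $\Gamma$ with exactly the stated orders), and compute the latter exactly from the moment identities for $\epsilon_i$, leaving only the adjacent-index $\E[\zeta^3]$ term of order $M^2N^2/n^2\to 0$. The only (harmless) bookkeeping discrepancies are that in the paper the $N/\sqrt{n}$ piece of $\Lambda$ arises from the diagonal iterated integrals produced by the It\^o product rule applied to $\delta_i^2$ inside $c_{n,M}(l;\sigma^2)$ rather than from discretizing $c(-l;dp)$, and that the quadratic-in-noise summands vanish exactly by independence and the zero-mean properties of $\delta_i$ and $\epsilon_i$ rather than via Dirichlet-kernel decay.
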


The mean squared error of the estimator (\ref{decomp_noise}) has the following bias-variance decomposition
\begin{equation}
	\label{mse_decomp}
	\E[(\widetilde{\eta}_{n,M,N}-\eta)^2]= Var(\widetilde{\eta}_{n,M,N})+ \E[\widetilde{\eta}_{n,M,N}-\eta ]^2 +Var(\eta)-2 Cov(\widetilde{\eta}_{n,M,N},\eta).
\end{equation}
This decomposition differs from the classical bias-variance decomposition which can be found in the parametric statistics literature because the quantity we aim to estimate, i.e. the integrated leverage $\eta$, is a random variable and not a constant parameter.

\begin{Theorem}
	\label{MSE}
	We assume that Assumptions (H2), (H3)  and
	\begin{equation}
		\label{hyp2}
		\frac{N^2}{M} \to 0 \,\,\,\textrm{and}\,\,\,\frac{MN}{n} \to 0
	\end{equation}
	hold true as $N,M,n \to \infty$. Then
	\[
	\E[(\widetilde{\eta}_{n,M,N}-\eta)^2] \to \infty.
	\]
\end{Theorem}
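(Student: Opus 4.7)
The plan is to combine the bias-variance decomposition (\ref{mse_decomp}) with Theorem \ref{bias} in order to reduce the claim to showing $Var(\widetilde{\eta}_{n,M,N}) \to \infty$, and then to extract a pure-noise component of $\widetilde{\eta}_{n,M,N}$ whose variance already diverges under (\ref{hyp2}). By Theorem \ref{bias} the squared bias $(\E[\widetilde{\eta}_{n,M,N}-\eta])^2$ is $o(1)$ under (\ref{hyp2}), while $Var(\eta) \leq T^2\,\E[\sup_{[0,T]}\eta^2(t)]$ is finite by (H1). Applying Cauchy--Schwarz to the remaining covariance term in (\ref{mse_decomp}) yields
\[
\E[(\widetilde{\eta}_{n,M,N}-\eta)^2] \;\geq\; \bigl(\sqrt{Var(\widetilde{\eta}_{n,M,N})}-\sqrt{Var(\eta)}\bigr)^2+o(1),
\]
so it suffices to prove that $Var(\widetilde{\eta}_{n,M,N}) \to \infty$.

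To lower-bound this variance I would condition on the efficient price path $p$ and apply the law of total variance, $Var(\widetilde{\eta}_{n,M,N}) \geq \E\bigl[Var(\widetilde{\eta}_{n,M,N}\mid p)\bigr]$. Substituting $\widetilde\delta_i = \delta_i + \epsilon_i$ in (\ref{decomp_noise}) and expanding, one groups the resulting monomials by the total number $\alpha \in \{0,1,2,3\}$ of noise factors; the $\alpha=0$ block is the noise-free estimator $\eta_{n,M,N}$, which is measurable with respect to $p$ and thus contributes nothing to the conditional variance. Focusing on the pure-noise block
\[
\eta^{(3)}_{n,M,N} \;=\; \sum_{i,j,k=0}^{n-1} D_M(t_i-t_j)\, D_N^{\prime}(t_k-t_j)\, \epsilon_i \epsilon_j \epsilon_k,
\]
which is moreover independent of $p$, one expands $\E[(\eta^{(3)}_{n,M,N})^2]$ and uses the dependence structure inherited from (H2) to see that the dominant contribution comes from the pairing $(i,j,k)=(i',j',k')$, which produces
\[
\bigl(\E[\epsilon_1^2]\bigr)^3 \sum_{i,j,k} D_M(t_i-t_j)^2\, D_N^{\prime}(t_k-t_j)^2.
\]
Combining the $L^2$ identity of Proposition \ref{cor_dir} with the analogous estimate
\[
\int_0^T |D_N^{\prime}(u)|^2\, du \;=\; \frac{4\pi^2}{T(2N+1)^2}\sum_{|l|\leq N} l^2 \;\sim\; \frac{2\pi^2 N}{3T},
\]
under equidistant sampling this sum is of order $n^3 N/M$; since (\ref{hyp2}) forces $n/(MN)\to\infty$ while $N\to\infty$, it diverges.

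Finally I would show that the mixed components $\eta^{(1)}_{n,M,N}$ and $\eta^{(2)}_{n,M,N}$, which carry at least one factor $\delta_i$ of second moment $O(T/n)$, cannot absorb this divergence. A pairing analysis analogous to the one used in the proof of Theorem \ref{bias}, together with (H1), gives $\E[Var(\eta^{(\alpha)}_{n,M,N}\mid p)] = o(n^3 N/M)$ for $\alpha=1,2$, and Cauchy--Schwarz bounds the conditional cross-covariances between $\eta^{(\alpha)}$ of different degrees. Hence $\E[Var(\widetilde{\eta}_{n,M,N}\mid p)]\to\infty$, and the first step gives divergence of the MSE. The hard part is precisely this combinatorial step: verifying that the five non-leading pairings of the six noise indices in $\E[(\eta^{(3)}_{n,M,N})^2]$ are all of strictly lower order --- in particular, checking that the sign cancellations forced by oddness of $D_N^{\prime}$ do not accidentally reduce the $n^3 N/M$ rate --- and adapting the bookkeeping to the actual noise structure $\epsilon_i = \zeta(t_{i+1})-\zeta(t_i)$ rather than genuinely i.i.d.\ $\epsilon$'s.
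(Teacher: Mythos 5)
Your overall architecture is sound and genuinely different from the paper's. You reduce the claim to $Var(\widetilde{\eta}_{n,M,N})\to\infty$ via $\E[(\widetilde{\eta}_{n,M,N}-\eta)^2]\geq \bigl(\sqrt{Var(\widetilde{\eta}_{n,M,N})}-\sqrt{Var(\eta)}\bigr)^2$ and the law of total variance, and then target the pure-noise cubic block $\sum_{i,j,k}D_M(t_i-t_j)D_N^{\prime}(t_k-t_j)\epsilon_i\epsilon_j\epsilon_k$. The paper instead expands $\E[(\widetilde{\eta}_{n,M,N}-\eta)^2]$ directly as in (\ref{mse}) and exhibits divergence in the noise part of the double-sum block $\sum_{i,j}D_N^{\prime}(t_i-t_j)\widetilde{\delta}_i\widetilde{\delta}_j^2$, namely the term (\ref{mse_dom}), which it shows is of order at least $n^2N$; your candidate term is of the larger order $n^3N/M$, and your rate computations (the diagonal pairing, and $\int_0^T|D_N^{\prime}(u)|^2du\sim 2\pi^2N/(3T)$) are correct. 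Your conditional-variance framing is in fact a cleaner way to control the cross-terms between blocks of different noise degree than the paper's, which essentially asserts that the remaining terms in (\ref{mse}) do not cancel the exemplary divergent one.

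The genuine gap is that you have deferred exactly the step that constitutes the substance of the paper's proof. Because the $\epsilon_i=\zeta(t_{i+1})-\zeta(t_i)$ are $1$-dependent rather than independent, $\E[\epsilon_i\epsilon_j\epsilon_k\epsilon_{i^{\prime}}\epsilon_{j^{\prime}}\epsilon_{k^{\prime}}]$ is nonzero for many more index configurations than the matched pairing $(i,j,k)=(i^{\prime},j^{\prime},k^{\prime})$, and several of these configurations carry products such as $D_M(t_i-t_j)D_M(t_i-t_k)D_N^{\prime}(t_k-t_j)D_N^{\prime}(t_j-t_k)$ whose sign is not manifestly controlled by the oddness of $D_N^{\prime}$ alone. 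The paper handles precisely this: it computes all the required mixed moments ($\E[\epsilon_i^4]$, $\E[\epsilon_j^4\epsilon_i^2]$, $\E[\epsilon_j^2\epsilon_i\epsilon_{j^{\prime}}^2\epsilon_{i^{\prime}}]$, which involve $\E[\zeta^3]$, $\E[\zeta^4]$, $\E[\zeta^6]$ with both signs), organizes the resulting oscillatory sums through explicit sign tables, and uses the growth condition (\ref{hyp2}) to guarantee that the relevant $\cos(4\pi l/n)$ and $\sin(2\pi ls/n)$ factors are nonnegative so that the off-diagonal contributions cannot offset the divergent diagonal one. Until you carry out the analogous bookkeeping for your six-index sum (and verify your claimed bounds $\E[Var(\eta^{(\alpha)}_{n,M,N}\mid p)]=o(n^3N/M)$ for $\alpha=1,2$), the argument is an outline rather than a proof; as you yourself note, this is "the hard part," and it is not optional.
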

The theorem above highlights a divergent element in (\ref{mse_decomp}) that we try to identify using a numerical analysis in the next section. A diverging mean squared error in the presence of microstructure noise is a phenomenon already observed for integrated estimators in a high-frequency setting. For example, the variance of the realized volatility estimator diverges in the presence of microstructure noise effects as discussed by \cite{BR05}. In this framework, the presence of microstructure noise is usually handled by using pre-averaging, see \cite{J09}. This methodology is used in \cite{ASFLWY13} and \cite{MW} to define estimators of the integrated leverage robust to microstructure noise. However, the Fourier approach automatically filters the noise components on its own, see discussion in \cite[Chapter 5]{MRS}. Therefore, correcting our estimation by using a pre-averaging approach is not an option. 

We define instead a variance corrected version of (\ref{decomp_noise}). We call the term (\ref{control1}) by $\Upsilon_{n,M,N}$. This term contains all the cross products of the noisy returns $\tilde \delta_i \tilde \delta_j \tilde \delta_k$ with $i\neq j\neq k$ and is correlated to $\widetilde{\eta}_{n,M,N}$. Moreover, it has expected value equal to zero as shown in the Corollary below.
\begin{Corollary}
	\label{zero_cor}
	We assume that the assumptions of Theorem \ref{bias} hold. Then, the addend (\ref{control1})  has expected value equal to zero.
\end{Corollary}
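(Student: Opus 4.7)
The plan is to expand each $\widetilde{\delta}_m = \delta_m + \epsilon_m$ inside the triple product $\widetilde{\delta}_i\widetilde{\delta}_j\widetilde{\delta}_k$ and check that the expectation of each of the eight resulting monomials vanishes whenever $i,j,k$ are pairwise distinct (which is how the index set of the sum (\ref{control1}) must be read in order for the decomposition displayed just above it to hold). Since the weights $D_M(t_i-t_j)D_N^\prime(t_k-t_j)$ are deterministic, linearity of expectation and the finiteness of the index set then yield the conclusion after summation.

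For the purely diffusive monomial $\E[\delta_i\delta_j\delta_k]$, I would appeal to the martingale property of $p$: under (H1) together with Remark \ref{zero_drift}, $dp(t)=\sigma(t)\,dW(t)$, so each $\delta_m$ is a martingale increment. Letting $\ell=\max\{i,j,k\}$ and conditioning on $\mathcal{F}_{t_\ell}$, the two factors carrying the smaller time indices are $\mathcal{F}_{t_\ell}$-measurable, while $\E[\delta_\ell\mid\mathcal{F}_{t_\ell}]=0$. For the three monomials that carry exactly one noise factor, (H2) gives independence of $p$ from $\zeta$, hence from $\epsilon$; the expectation factors and $\E[\epsilon_m]=\E[\zeta(t_{m+1})]-\E[\zeta(t_m)]=0$ by the i.i.d.\ assumption. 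For the three monomials that carry exactly two noise factors, the same independence reduces the expectation to a multiple of $\E[\delta_m]$, which vanishes under the zero-drift convention.

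The only non-routine contribution is the pure-noise monomial $\E[\epsilon_i\epsilon_j\epsilon_k]$. Centring the shocks by $U_m:=\zeta(t_m)-\E[\zeta]$ yields i.i.d.\ zero-mean $U_m$'s with $\epsilon_m=U_{m+1}-U_m$; expanding the triple product then produces eight summands of the form $\pm\,U_{i+\alpha_i}U_{j+\alpha_j}U_{k+\alpha_k}$ with $\alpha_\bullet\in\{0,1\}$. By the i.i.d.\ zero-mean property, $\E[U_aU_bU_c]$ can be nonzero only if $a=b=c$; but equality of the three subscripts at a common value $\ell$ would force $i,j,k\in\{\ell-1,\ell\}$, contradicting the pairwise distinctness of $i,j,k$. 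Hence every one of the eight summands has zero mean and $\E[\epsilon_i\epsilon_j\epsilon_k]=0$.

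The main obstacle I anticipate is exactly this last combinatorial check, since every other monomial is dispatched by either the martingale property of $p$ or the mean-zero property of a single $\epsilon_m$. Once the eight term-by-term identities are in place, pulling the deterministic kernel outside the expectation and summing over the finitely many pairwise distinct triples $(i,j,k)$ completes the proof.
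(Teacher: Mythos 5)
Your proposal is correct and follows essentially the same route as the paper: the paper's proof simply splits (\ref{control1}) into the noise-free triple product and the noise-involving cross terms and invokes the moment computations from the proof of Theorem \ref{bias} ($\E[\delta_i\delta_j\delta_k]=0$ by the martingale property, the mixed terms vanish by independence of $\zeta$ and $p$ together with $\E[\epsilon_m]=0$ and $\E[\delta_m]=0$, and $\E[\epsilon_i\epsilon_j\epsilon_k]=0$ for distinct indices). The only added value in your write-up is the explicit centering argument for $\E[\epsilon_i\epsilon_j\epsilon_k]=0$, which the paper asserts without detail, and your observation that the terms with $k=j\neq i$ are harmless because $D_N^{\prime}(0)=0$.
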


\begin{proof}
	The term (\ref{control1}) can be decomposed as
	\begin{align*}
		&\sum_{i\neq j \neq k}   D_M(t_i-t_{j}) D_N^{\prime}(t_{k}-t_{j}) \delta_{i} \delta_{j} \delta_{k}\\
		&+\sum_{i, j, k:  i\neq j \neq k} D_M(t_{i}-t_{j}) D_N^{\prime}(t_{k}-t_{j})
		(\delta_i\delta_j\epsilon_k+ \delta_j \delta_k\epsilon_i+ \delta_k \delta_i\epsilon_j +\delta_i\epsilon_j\epsilon_k +\delta_j\epsilon_i\epsilon_k\\
		&+\delta_k\epsilon_i\epsilon_j+\epsilon_i\epsilon_j\epsilon_k).
	\end{align*}
	
	The thesis follows straightforwardly from the results contained in the proof of Theorem \ref{bias}.
\end{proof}

We then define the estimator
\begin{equation}
	\label{ControlVar3}
	\eta^*_{n,M,N}=\widetilde{\eta}_{n,M,N}-b \, \Upsilon_{n,M,N}
\end{equation}
which is asymptotically unbiased because of Theorem \ref{bias} and Corollary \ref{zero_cor} and has
\begin{equation}
	\label{cv}
	Var(\eta^*_{n,M,N})=Var(\widetilde{\eta}_{n,M,N}) -2 \, b\, Cov(\widetilde{\eta}_{n,M,N},\Upsilon_{n,M,N}) +b^2 \, Var(\Upsilon_{n,M,N}).
\end{equation}
Hence, the estimator $\eta^*_{n,M,N}$ has smaller variance than the estimator $\widetilde{\eta}_{n,M,N}$ provided that
\[
b^2 \, Var(\Upsilon_{n,M,N})<2\, b\, Cov(\widetilde{\eta}_{n,M,N},\Upsilon_{n,M,N}).
\]
The optimal coefficient $b^*$ minimizing the variance of the estimator $\eta^*_{n,M,N}$ is given by
\begin{equation}
	b^*_{M,N}=\frac{Cov(\widetilde{\eta}_{n,M,N},\Upsilon_{n,M,N})}{Var(\Upsilon_{n,M,N})}.
	\label{optimal_b}\end{equation}
Plugging this value in (\ref{cv}) and simplifying, we find
\[
\frac{Var(\eta^*_{n,M,N})}{Var(\widetilde{\eta}_{n,M,N})}=(1-Corr(\widetilde{\eta}_{n,M,N},\Upsilon_{n,M,N})^2),
\]
which gives us the variance reduction ratio obtained by using the estimator (\ref{ControlVar3}).
\begin{Remark}
	A similar variance reduction appears, for instance, in the classical control variate method to reduce the variance of the sample mean estimator, see (\cite[Section 4.1]{G04}).
\end{Remark}

In the next section, we determine selection strategies for the cutting frequency parameters $M$ and $N$ of the FEL and its variance corrected version.

\subsection{Selection strategy for the cutting frequency parameters: a numerical study}
\label{sec5.1}

We assume two different models for the underlying efficient price process, i.e. the classical model proposed by \cite{HES} and the generalized Heston model proposed by \cite{VV}. The microstructure noise satisfies Assumption (H3).

We simulate second-by-second return and variance paths over a daily trading
period of $T=6$ hours,
for a total of $100$ trading days and $n=21600$ observations per day.\\
The first data generating process is
\begin{equation}
	H: \,\,\,\, \left\{ \begin{array}{ll}
		dp(t)&=\sigma(t) dW_1(t)\\
		d\sigma^2(t)&= \alpha(\beta- \sigma^2(t))dt+ \nu \sigma(t) dW_2(t),
	\end{array} \right.
	\label{HestonModel}\end{equation}
where $W_1$ and $W_2$ are correlated Brownian motions.
The parameter values used in the simulations are
$\alpha=0.01, \beta=0.2,\nu=0.05$ and the correlation parameter is set to
$\rho=-0.2$.\\
The second data generating process is
\begin{equation}
	GH:\,\,\,\,\left\{ \begin{array}{ll}
		dp(t)&=\sigma(t) dX(t)\\
		dX(t)&=\rho(t)dW_1(t)+\sqrt{1-\rho^2(t)}dW_2(t)\\
		d\sigma^2(t)&= \alpha(\beta- \sigma^2(t))dt+ \nu \sigma(t) dW_1(t),
	\end{array} \right.
	\label{GHmodel}
\end{equation}
and the infinitesimal variation of $\rho(t)$ is given by
\[
d\rho(t)=((2\xi-\eta)-\eta \rho(t))dt+\theta \sqrt{(1+\rho(t))(1-\rho(t))} dW_0,
\]
where $\eta, \xi$ and $\theta$ are positive constants and $W_0$ is a Brownian
motion.
The processes $W_0(t),W_1(t)$ and $W_2(t)$ are assumed to be independent.
The parameter values used in the simulation are $\alpha=0.01,
\beta=0.2,\nu=0.05$ and $\xi=0.02,\eta=0.5, \theta=0.5$, where the last
three parameters are chosen in the range prescribed
by \cite{VV} such that $\rho(t) \in [-1,1]$.
We set the initial values as $\sigma^2(0)=\beta$, $p(0)=\log(100)$ and
$\rho(0)=-0.04$.
The noise-to-signal ratio $std(\zeta)/std(r)$ is equal to 0.8, where $r$ is the $1$-second returns.

When processing simulated data, the natural approach in optimizing estimators depending on tuning parameters is to choose those values that minimize the finite sample mean squared error (MSE). Therefore, one possible choice is to select the cutting frequency parameters $M$ and $N$ by following this methodology.  We analyze two types of MSE-based optimal strategies. The first directly minimizes the MSE of the FEL $\widetilde{\eta}_{n,M,N}$, whereas the second one is described below.

Operatively, the variance corrected estimator (\ref{ControlVar3}) can be implemented by the following procedure:
\begin{enumerate}
	\item[\textbf{Step 1:}] Given a sample of $n$ observed returns and for all $M \in \{ range \}$ and $N \in \{ range\}$, let $\widetilde{\eta}_{n,M,N}^1, \widetilde{\eta}_{n,M,N}^2, \ldots, \widetilde{\eta}_{n,M,N}^d$ be $d$  replications of the Fourier estimate of the integrated leverage in a Monte Carlo experiment. Along with $\widetilde{\eta}_{n,M,N}^i$, on each replication we also calculate $\Upsilon_{n,M,N}^i$;
	\item[\textbf{Step 2:}] let $M^*,N^*:=\mbox{argmin VAR}(\Upsilon_{n,M,N})$ and let $\Upsilon^*:=\Upsilon_{n,M^*,N^*}$; 
	\item[\textbf{Step 3:}] plug the selected correction $\Upsilon^*$ into equation (\ref{ControlVar3})
	$$\eta^*_{n,M,N}=\widetilde{\eta}_{n,M,N}-b^*_{M,N} \, \Upsilon^*,$$
	where $$b^*_{M,N}=\frac{\mbox{COV}(\widetilde{\eta}_{n,M,N},\Upsilon^*)}{\mbox{VAR}(\Upsilon^*)}$$
	and $\mbox{COV}$, $\mbox{VAR}$ denote the sample covariance and the sample variance, respectively.
	For each $M$ and $N$, compute $d$ replications $\eta^{i*}_{n,M,N}$ ($i=1,\ldots,d$) of the estimator;
	\item[\textbf{Step 4:}] choose the cutting frequency parameters $\hat M$ and $\hat N$ which minimize the finite sample MSE of the corrected estimates $\eta^{i*}_{n,M,N}$ for $i=1,\ldots,d$.
\end{enumerate}

The magnitude of the variance correction given by the estimator (\ref{ControlVar3}) is tuned by formula (\ref{optimal_b}), where $Var(\Upsilon_{n,M,N})$ appears at the denominator. We first set the parameter (\ref{optimal_b}) to minimize the denominator and enhance the effectiveness of the correction.
Afterwards, in Step 4, we choose the optimal MSE-based cutting frequency parameters $\hat M$ and $\hat N$. This procedure provides better empirical results than optimizing the parameters $M$ and $N$ in (\ref{ControlVar3}) simultaneously.

Table \ref{control_eff_noise} shows the MSE reduction obtained by using the estimator
(\ref{ControlVar3}) versus the FEL (\ref{decomp_noise}). The parameter values $\hat M$ and $\hat N$ are selected following the MSE-based optimal strategies described above. 
\begin{table}[h!]
	\centering
	\begin{tabular}{lcccccccc}
		\hline
		& \multicolumn{4}{c}{$H- model$} &
		\multicolumn{4}{c}{$GH- model$} \\
		\hline
		$\overline{\eta}$ & \multicolumn{4}{c}{-1.013673e-04 } &
		\multicolumn{4}{c}{-4.603226e-05} \\
		\hline
		& MSE & BIAS & $\hat M$ & $\hat N$ & MSE & BIAS & $\hat M$ & $\hat N$ \\
		$\widetilde{\eta}_{n,M,N}$  & 2.40e-07  & 2.79e-05  &  887 & 1  &  1.70e-07  &  4.67e-06  &  2404 & 2  \\
		$\eta^*_{n,M,N}$ & 1.43e-07  &   4.63e-05  &  889 & 1  & 1.49e-07  &  4.76e-06 & 2638  & 1  \\
		\hline
	\end{tabular}
	\caption{Finite sample performance of the FEL $\widetilde{\eta}_{n,M,N}$ and of the estimator $\eta_{n,M,N}^*$. $\overline{\eta}$ represents the average real integrated leverage for each data set. The value of the MSE and BIAS in the table are computed w.r.t. the optimal parameters $\hat{M}$ and $\hat{N}$.}
	\label{control_eff_noise}
\end{table}
Since both estimators are only asymptotically unbiased and in the case of the Heston model the selected cutting frequency parameters $\hat M$ and $\hat N$ are rather small, the variance corrected estimator entails a slight increase of the bias, while for the generalized Heston model the bias remains almost the same.
We notice that in both cases the optimal MSE-based $\hat M$ turns out to be much smaller than the Nyquist frequency (i.e. $\hat M <<n/2$), whereas $\hat N$ is very small, as prescribed by the asymptotic growth conditions in Theorem \ref{T0}.

Let us now analyse the MSE and the sample variance (VAR) of the FEL $\widetilde{\eta}_{n,M,N}$ in the presence of noise for the Heston and the generalized Heston model data sets as a function of $M$ and $N$.

\begin{figure}[h!]
	\centering 
	\includegraphics[width=0.5\textwidth]{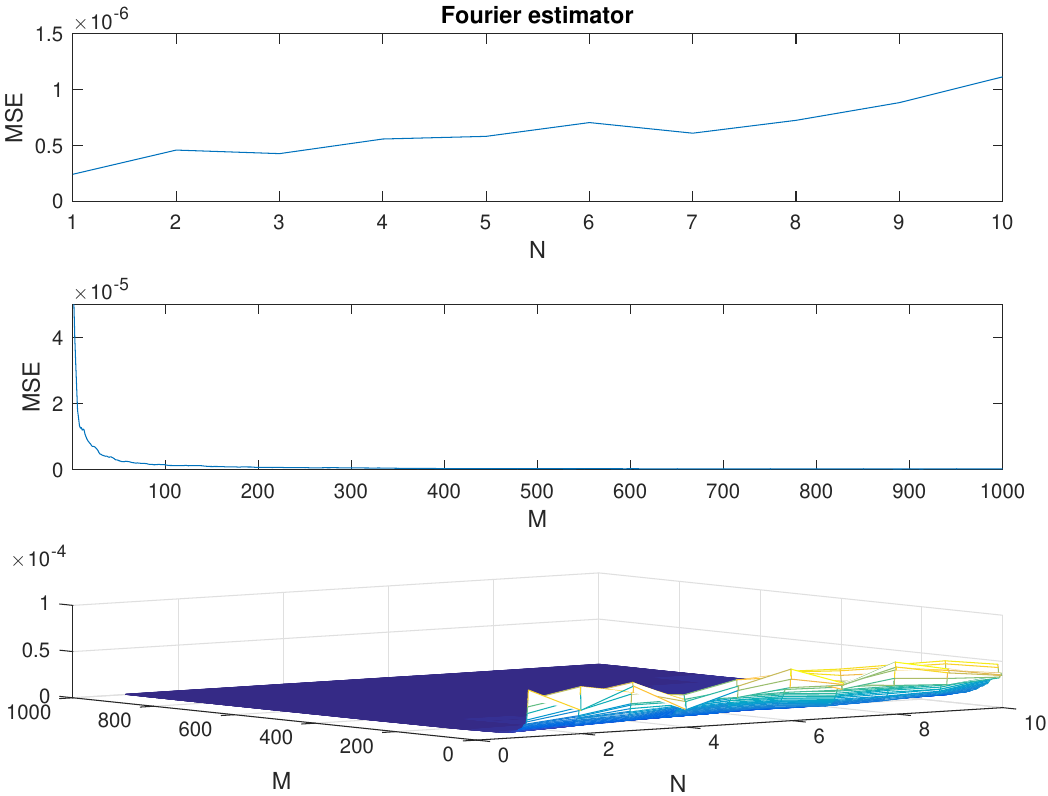}\includegraphics[width=0.5\textwidth]{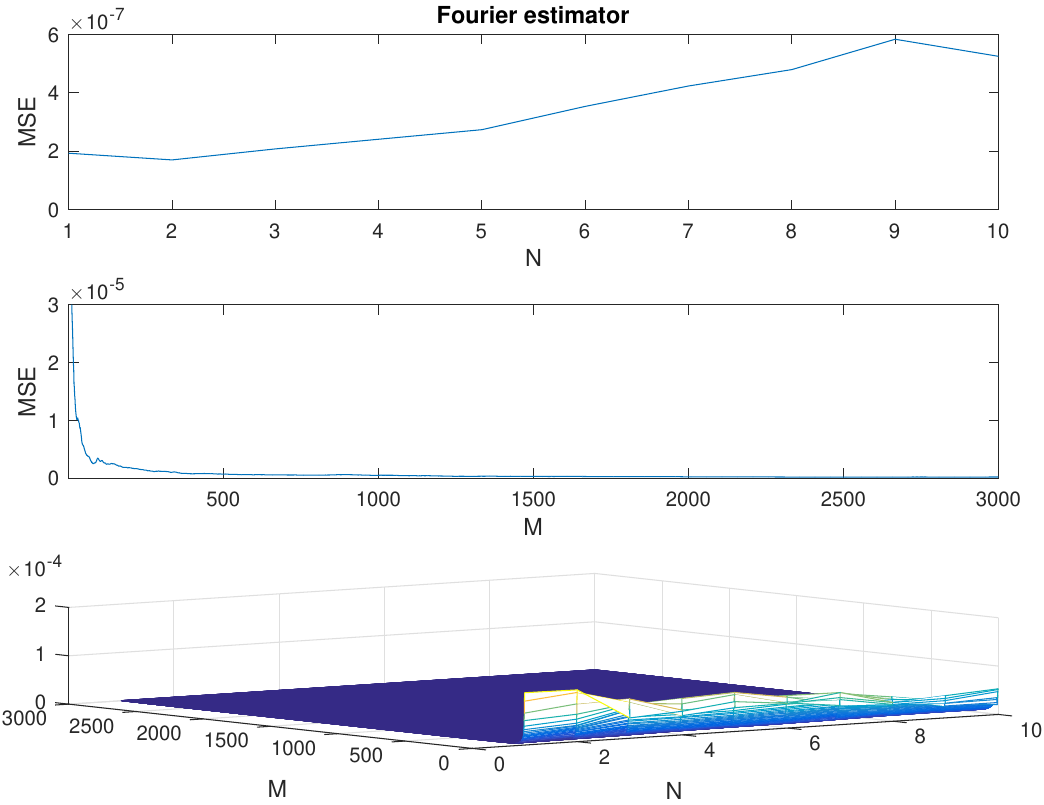}\\
	\includegraphics[width=0.5\textwidth]{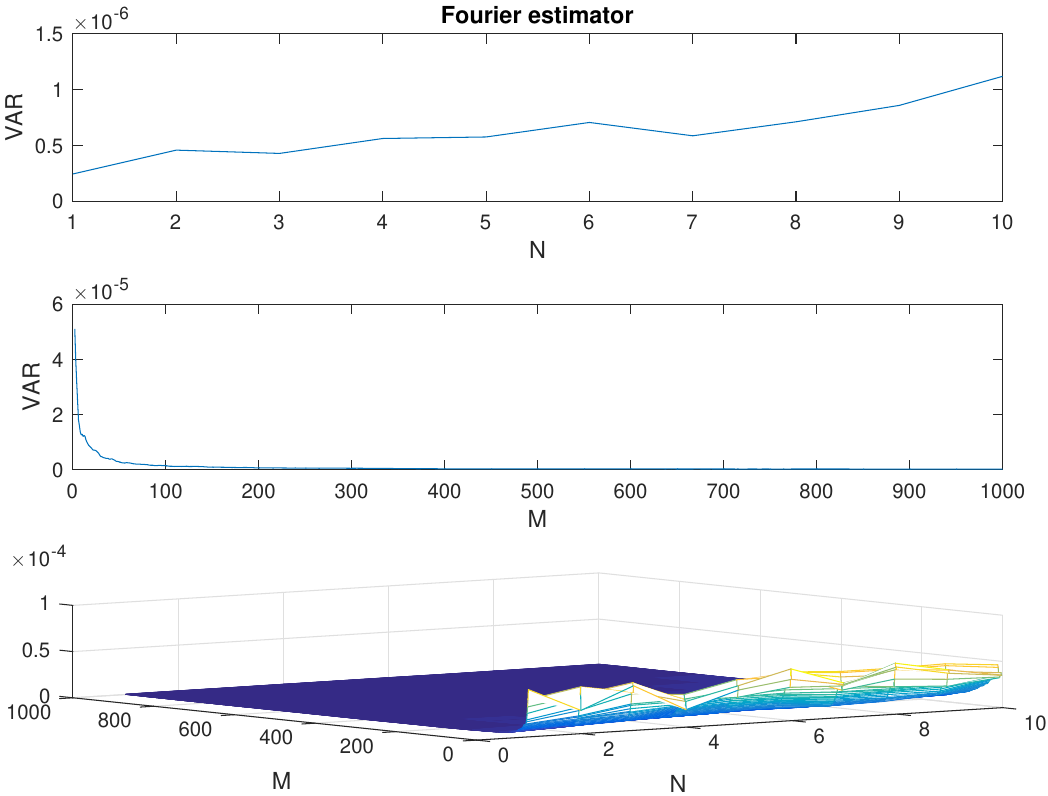}\includegraphics[width=0.5\textwidth]{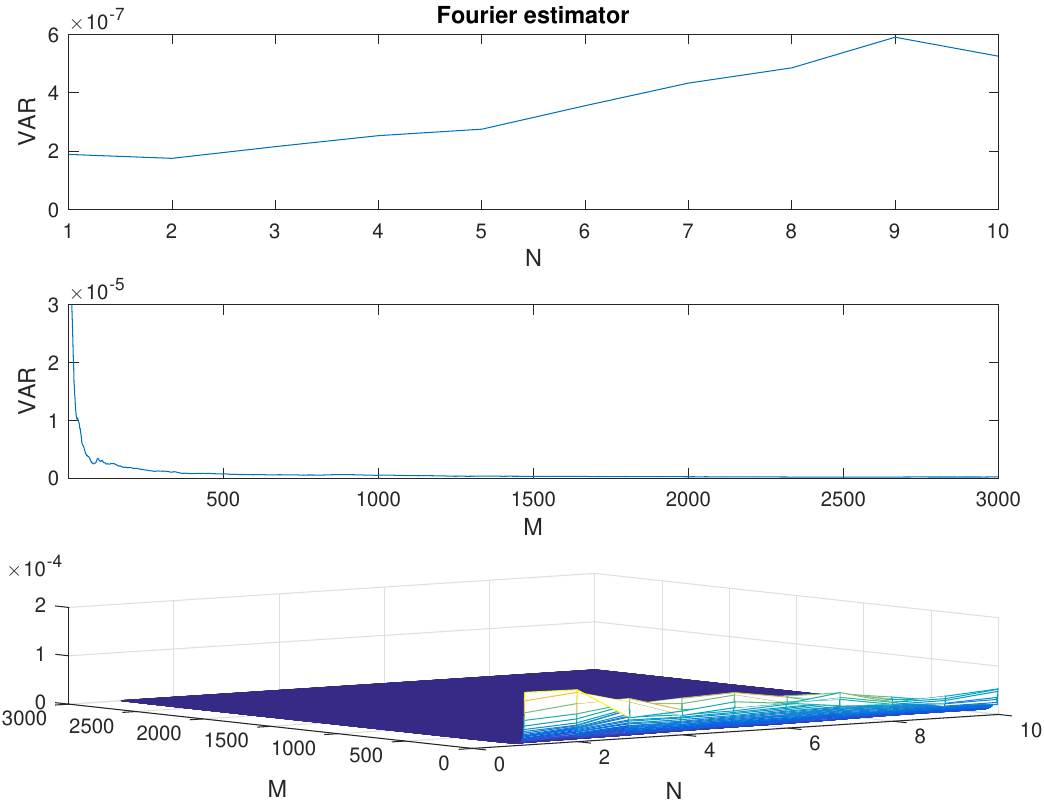}
	\caption{MSE and sample variance of the FEL as a function of $M$ and $N$ under microstructure effects. Left panels: H model. Right panels: GH model.} \label{MSE&VAR}
\end{figure}

From Figure \ref{MSE&VAR}, it is evident that the sample variance of the estimator has the same order of magnitude as the MSE. Moreover, by analysing the relative difference (MSE-VAR)/MSE for the FEL (\ref{decomp_noise}) and the variance corrected estimator (\ref{ControlVar3}) as a function of $M$ and $N$ in Figure \ref{MSE_VARnoise},  we observe that this ratio is negligible for both estimators except for the lowest values of $M$. Moreover, it never exceeds $0.1$ so that the difference MSE-VAR never exceeds $10\%$ of the MSE.
\begin{figure}[h!]
	\centering
	\includegraphics[width=0.5\textwidth]{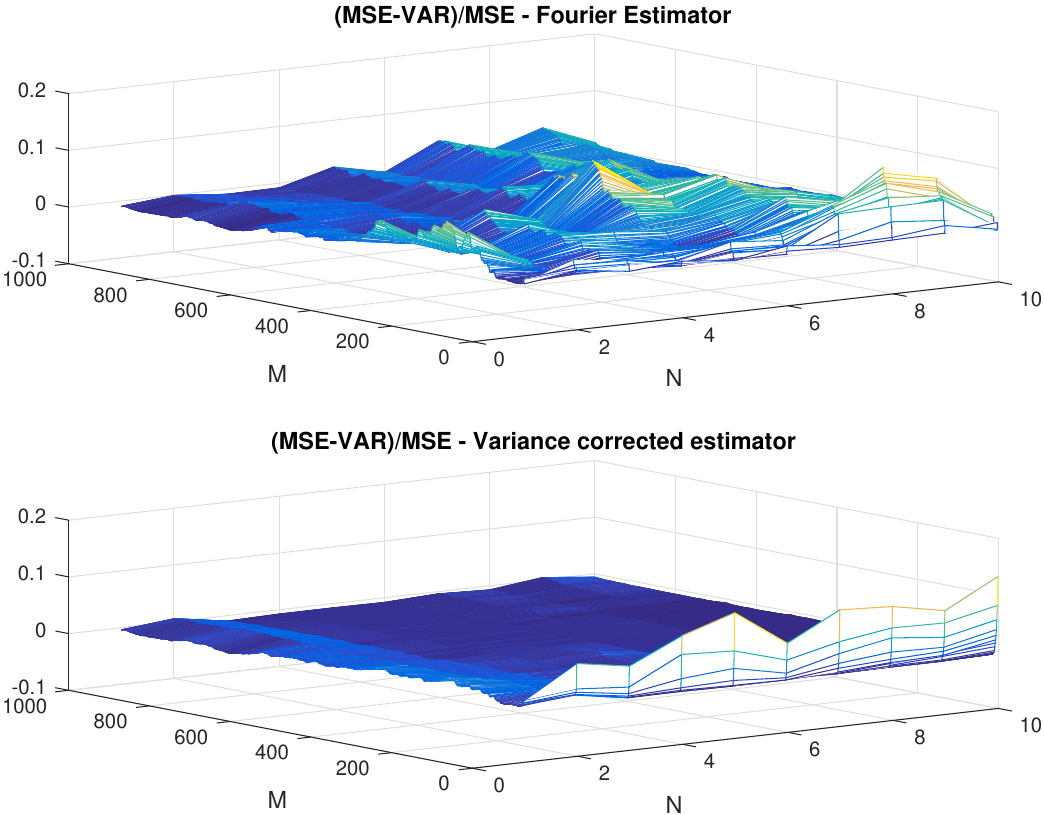}\includegraphics[width=0.5\textwidth]{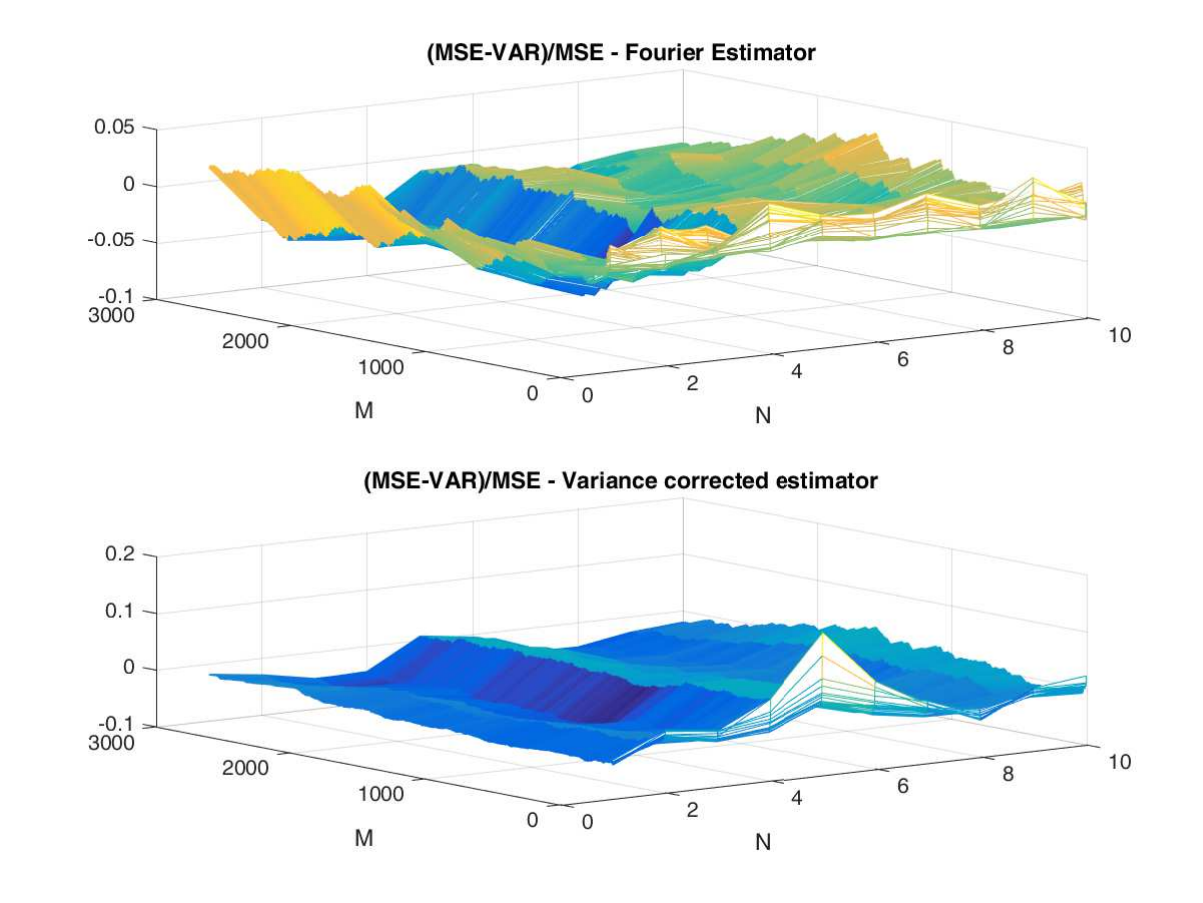}
	\caption{Relative difference between MSE and variance of the FEL $\widetilde{\eta}_{n,M,N}$ and the variance corrected estimator $\eta^*_{n,M,N}$ as a function of $M$ and $N$. Left panels: H model. Right panels: GH model.} \label{MSE_VARnoise}
\end{figure}

We also find that the remaining terms in the MSE decomposition (\ref{mse_decomp}) are at least one order of magnitude smaller than the sample variance, which is then the largest term of the FEL (\ref{decomp_noise}) in the presence of noise. The same conclusions apply when analyzing the variance corrected estimator (\ref{ControlVar3}).
We conclude that minimizing the MSE of the FEL (\ref{decomp_noise}) or of the variance corrected estimator (\ref{ControlVar3}), as a way to determine the optimal cutting frequency parameters $\hat{M}$ and $\hat{N}$, is equivalent to minimizing the sample variance of the estimators.
Following this selection strategy, the parameters $\hat{M}$ and $\hat{N}$ do not yield a minimum value of the estimator bias. The optimized estimator is then affected by a non-negligible bias, which is, however, very small. In our simulation, we always get a significant digit after the comma.
Note that implementing this kind of selection strategy for the variance corrected estimator (\ref{ControlVar3}) means changing Step 4 of its implementation by minimizing the sample variance.
The results obtained by selecting the optimal cutting frequency parameters by minimizing the sample variance of the estimators (\ref{decomp_noise}) and (\ref{ControlVar3}) are displayed in Table \ref{control_eff_noiseVAR}. We highlight that the selected parameters $M$ and $N$ are the same as those selected by MSE minimization. 

\begin{table}
	\centering
	\begin{tabular}{lcccccccc}
		\hline
		& \multicolumn{4}{c}{$H- model$} &
		\multicolumn{4}{c}{$GH- model$} \\
		\hline
		$\overline{\eta}$ & \multicolumn{4}{c}{-1.013673e-04 } &
		\multicolumn{4}{c}{-4.603226e-05} \\
		\hline
		& VAR & $\lambda$ & $\hat M$ & $\hat N$ & VAR & $\lambda$ & $\hat M$ & $\hat N$ \\
		$\widetilde{\eta}_{n,M,N}$ & 2.43e-07  &   &  887 & 1  &  1.75e-07  &    &  2404 & 2  \\
		$\eta^*_{n,M,N}$ & 1.44e-07  &   0.59  &  889 & 1  & 1.51e-07  &  0.86 & 2638  & 1  \\
		\hline
	\end{tabular}
	\caption{Finite sample performance of the FEL $\widetilde{\eta}_{n,M,N}$ and of the estimator $\eta_{n,M,N}^*$. $\overline{\eta}$ represents the average real integrated leverage for each data set. The optimal parameters $\hat{M}$ and $\hat{N}$ are selected by minimization of the sample variance. The value of the VAR in the table is computed w.r.t. the optimal parameters $\hat{M}$ and $\hat{N}$. The symbol $\lambda$ denotes the variance reduction ratio $Var(\eta^*_{n,\hat{M},\hat{N}})/Var(\widetilde{\eta}_{n,\hat{M},\hat{N}})$.}
	\label{control_eff_noiseVAR}
\end{table}

\subsection{Benchmark analysis}

We want now to analyse the FEL performance compared to the estimator of the integrated leverage proposed by \cite{MW} in the presence of noise. The latter is based on pre-averaging and blocking that allows us to deal with the noise contained in the data. Here two nested levels of blocks are required: the first one, of size $M$, defines the range of pre-averaging, and the second one, of size $L$, is used for computing the realized covariance between returns and volatility increments. 
Our choice for the blocking parameters $M$ and $L$ is the following: we let $M$ vary from 2 seconds to 300 seconds (i.e. 5-minute block size). Then, up to rounding, for each value of $M$ we define $n'=n/M$ and let $L=[\sqrt{n'}]$. Coherently with our previous approach, we then choose the optimal parameters by directly minimizing the MSE over the range of $M$'s. We call this estimator WM1.

In their paper, the authors provide a rule to choose the optimal values of $M$ and $L$ that minimize the asymptotic variance in the presence of microstructure effects. However, the implementation of this rule requires a preliminary estimate of the integrated volatility, the integrated quarticity and the integrated sixth power of volatility, besides the estimation of the spot quarticity and the diffusion coefficient $\gamma(t)$ in (\ref{mod}). To reduce possible sources of estimation errors, we compute these quantities from the model (\ref{HestonModel}) by Riemann integration rule. We call this estimator WM2. Our results are resumed in Table \ref{confronto}.
We notice that the first procedure, which is completely unfeasible, provides a worse estimate than the Fourier methodology presented in Table \ref{control_eff_noise} and \ref{control_eff_noiseVAR} both in terms of bias and variance. On the other hand, the second procedure provides a very good estimate in terms of bias, while the variance and MSE are nevertheless slightly larger than those obtainable by the Fourier approach. This does not come as a surprise, since the estimator proposed by \cite{MW} contains a bias correction factor while the Fourier estimator achieves unbiasedness only asymptotically.
\begin{table}[h!]
	\centering
	\begin{tabular}{lccccc}
		\hline
		\cite{MW} & \multicolumn{5}{c}{$H- model$}  \\
		\hline
		Estimator & MSE & VAR & BIAS & $M$ & $L$ \\
		WM1 & 3.90e-06 & 3.76e-06  & -4.24e-04  &  4 & 73  \\
		WM2 & 3.09e-07 & 3.12e-07  &   -7.96e-06  &  2 & 2460   \\
		\hline
	\end{tabular}
	\caption{Finite sample properties of the estimator by \cite{MW} under microstructure effects.}
	\label{confronto}
\end{table}
As a further evidence, in Figure \ref{MW2014} we can see that the estimate proposed by \cite{MW} is largely dependent on the choice of the block size $M$ and its MSE is increasing with this parameter, while the bias remains rather stable around zero.
\begin{figure}[h!]
	\centering 
	\includegraphics[width=0.5\textwidth]{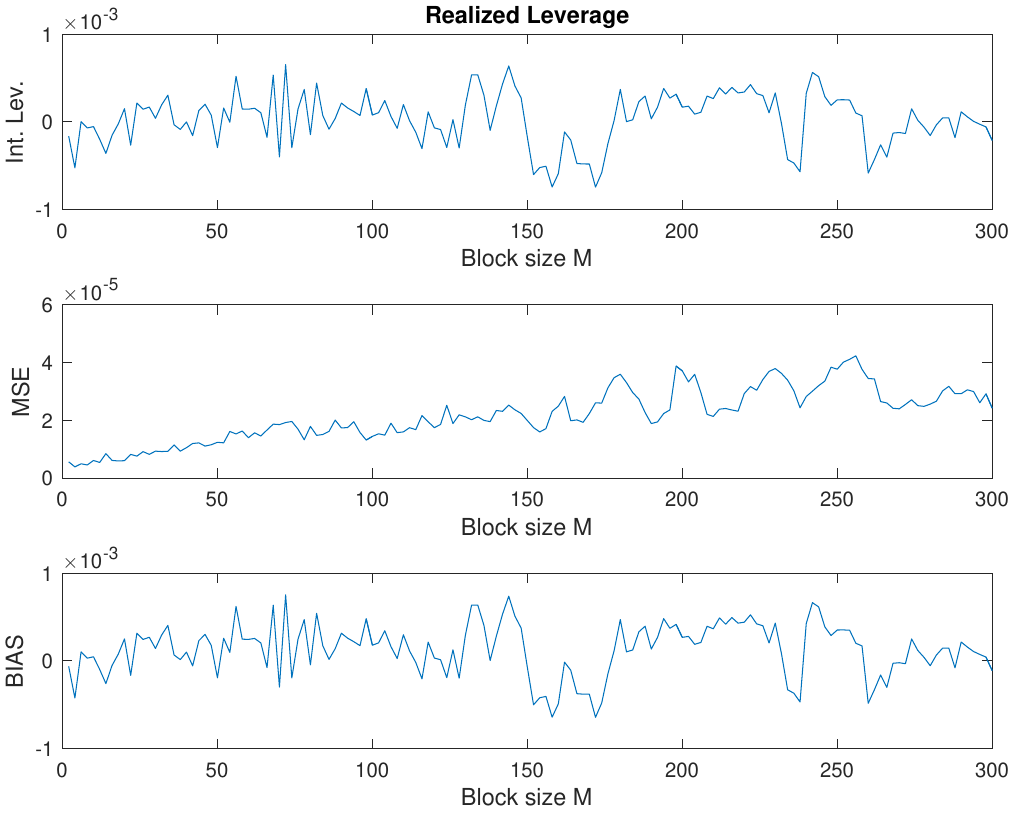}
	\caption{MSE-based integrated leverage estimate by \cite{MW} together with its MSE and BIAS as a function of the block size $M$.} \label{MW2014}
\end{figure}

\subsection{Sensitivity analysis on the generalized Heston model}

We examine the FEL behaviour in the presence of noise depending on the choice of some parameters of the GH model.
The process $\rho(t)$ in the GH model is a linear transformation of a Jacobi process which takes values in $[-1,1]$. Moreover, $\rho(t)$ is mean reverting to $\zeta=(2\xi-\eta)/\eta$ at speed $\eta$. These kinds of processes are ideal diffusions to model stochastic correlation. Its properties are summarized by \cite{VV}.

An interesting feature of this process is that it tends to a jump process with state-space $\{-1,1\}$ and constant intensities if $\theta$ tends to infinity. Roughly speaking, when $\theta$ increases the process exhibits a jump-type behaviour while the smaller the parameter $\theta$, the smoother are the sample paths.
Therefore, the fluctuations of the process $\rho(t)$ can be amplified by increasing the parameter $\theta$.
Fig. \ref{Sensistivity_theta} shows the sensitivity of the FEL (\ref{decomp_noise}) and of the variance corrected estimator (\ref{ControlVar3}) in terms of MSE with respect to the choice of $\theta$. All the other model parameters are set as in Section \ref{sec5.1} and the cutting frequency parameters are determined by minimization of the sample variance.  As expected, the MSE of both estimators slightly deteriorates as the jump-type behaviour of the correlation process is emphasized.
\begin{figure}[ht]
	\centering 	\includegraphics[width=0.5\textwidth]{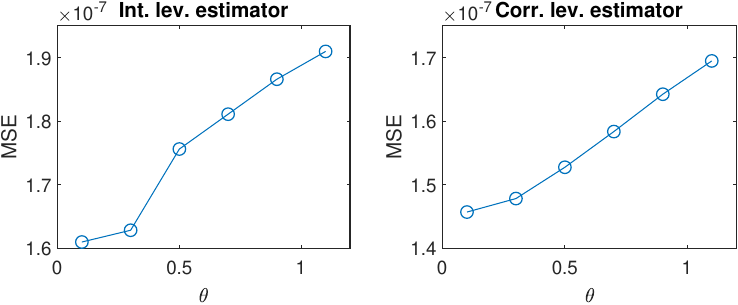}
	\caption{Sensitivity of the FEL $\widetilde{\eta}_{n,M,N}$ and of the estimator $\eta_{n,M,N}^*$ with respect to the choice of $\theta$ in the presence of microstructure noise. Parameter values: $\alpha=0.01, \beta=0.2,\nu=0.05$ and $\xi=0.02$, $\eta=0.5$.} \label{Sensistivity_theta}
\end{figure}

\begin{figure}[ht!]
	\centering 	\includegraphics[width=0.5\textwidth]{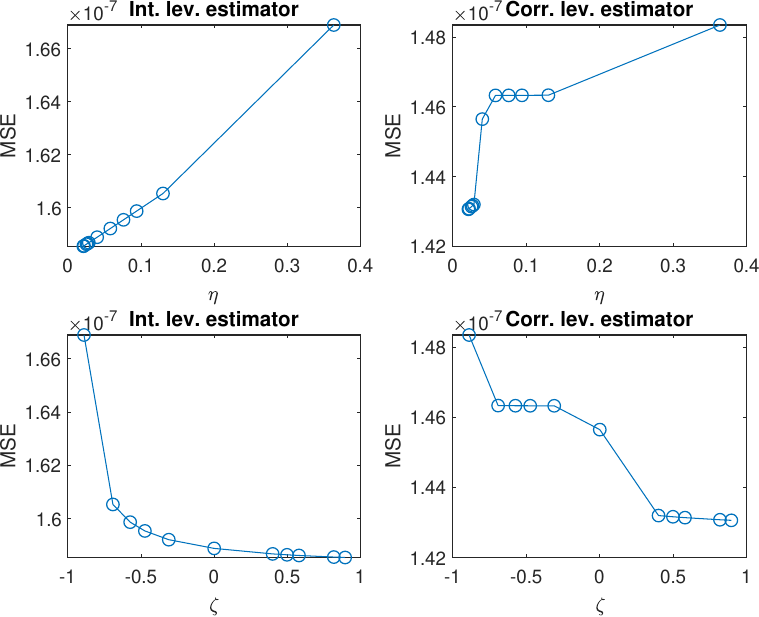}
	\caption{Sensitivity of the FEL $\widetilde{\eta}_{n,M,N}$ and of the estimator $\eta_{n,M,N}^*$ with respect to the choice of $\eta$ and $\zeta=(2\xi-\eta)/\eta$ in the presence of microstructure noise. Parameter values: $\alpha=0.01, \beta=0.2,\nu=0.05$ and $\xi=0.02$, $\theta=0.5$.}
	\label{Sensistivity_eta} 
\end{figure}

We also examine the sensitivity of the FEL to the mean reversion parameter $\eta$. The effect is examined in Fig. \ref{Sensistivity_eta}, where the MSE is plotted as a function of $\zeta$ ranging in $[-1,1]$ and of the corresponding $\eta=2\xi/(\zeta+1)$. In this case, the variability of the MSE is small. 
Moreover, we highlight that the Fourier estimator is not affected much by the speed of mean reversion and performs slightly better when the speed of mean reversion is lower. That makes the Fourier methodology particularly suitable to apply in a general setting where we can assume that the GH model is the data generating process.

\section{Estimation of the stochastic leverage effect in the presence of microstructure noise}
\label{sec6}

In the following, we denote by $\tilde\sigma_{n,M}^2$ and $\tilde \gamma_{n,M,N}^2$ the FEV and the FEVV in the presence of microstructure noise, respectively. We can then obtain a plug-in estimator of (\ref{rho}) by using the FEL defined in (\ref{decomp_noise}) at the numerator and $\tilde\sigma_{n,M}^2$ and $\tilde \gamma_{n,M,N}^2$ at the denominator
\begin{equation}
	\label{R_T}
	\tilde R_T= \frac{\tilde \eta_{n,M,N}}{\sqrt{\tilde\sigma_{n,M}^2 \tilde \gamma_{n,M,N}^2}}.
\end{equation}

Similarly, we can also consider a second estimator of $(\ref{rho})$ by using the Fourier estimator $\eta^*_{n,M,N}$ defined in (\ref{ControlVar3}) at the numerator
\begin{equation}
	\label{R_T*}
	R_T^*= \frac{\eta^*_{n,M,N}}{\sqrt{\tilde\sigma_{n,M}^2 \tilde \gamma_{n,M,N}^2}}.
\end{equation}

The estimators (\ref{R_T}) and (\ref{R_T*}) depend on the choice of the cutting frequency parameters $M$ and $N$ that strongly affect the quality of the estimates. 

For the FEL at the numerator, we follow the selection strategy for the parameters $M$ and $N$ described in Section \ref{sec5.1}. If we have simulated data, we generate a certain number of daily samples and minimize the sample variance of the FEL. On the other hand, if we are working in a real data framework, we need several days of observations (e.g. $100$ days) to perform the parameter selection.

For the FEV $\tilde\sigma_{n,M}^2$ the cutting frequency is determined by minimizing the MSE estimate determined in \cite[Theorem 3]{MS08} on each day. Finally, for the FEVV $\tilde \gamma_{n,M,N}^2$ the frequencies $M$ and $N$ can be chosen in the range defined  in \cite[Remark 4.3]{CMS15}. 

In this section, we use simulated data while performing an estimation with real data in Section \ref{sec7}.
We again simulate second-by-second return and variance paths over a daily trading period of $T=6$ hours from the GH model (\ref{GHmodel}) and choose the parameter values and the noise-to-signal ratio as in Section \ref{sec5.1}.  In Table \ref{cut-off}, we list the value of the cutting frequency parameters selected for the simulation, together with the MSE achieved by each estimator appearing in (\ref{R_T})-(\ref{R_T*}) and the average value over $100$ days of the corresponding true (following the GH model) integrated leverage, integrated volatility and integrated volatility of volatility.
\begin{table}[ht]
	\centering
	\begin{tabular}{lcccc}
		\hline
		Estimates & Reference Value	& MSE & $M$ & $N$  \\
		\hline
		$\widetilde{\eta}_{n,M,N}$  & -4.60e-05 & 1.70e-07 & 2404 & 2 \\
		$\eta^*_{n,M,N}$ & -4.60e-05 & 1.49e-07 & 2638 & 1 \\
		$\tilde\sigma_{n,M}^2$  & 1.00e-02 &  2.38e-06 & 819 & -  \\
		$\tilde \gamma_{n,M,N}^2$ & 2.51e-05 &  3.71e-07 & 146 & 3   \\
		\hline
	\end{tabular}
	\caption{GH data set. MSE and parameters' selection related to the estimators appearing in (\ref{R_T}) and (\ref{R_T*}). The reference values corresponds to the average over $100$ days of the corresponding true integrated values.}
	\label{cut-off}
\end{table}

Figure \ref{FigR_T} shows $R_T$ estimated using (\ref{R_T}) and (\ref{R_T*}). The true $R_T$ is plotted in red. In our simulation, both plots displayed in Figure \ref{FigR_T} seem to provide a good approximation of $R_T$.
\begin{figure}[ht]
	\centering 	\includegraphics[width=0.5\textwidth]{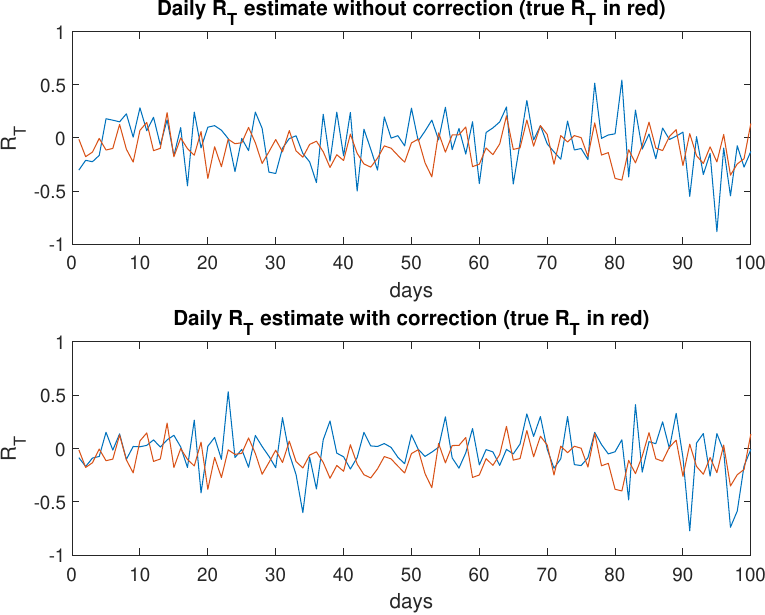}
	\caption{Upper panel: Daily $R_T$ estimates performed with $\tilde R_T$. Lower panel: Daily $R_T$ estimates performed with $ R_T^*$. The true values of $R_T$ are plotted in red.} 
	\label{FigR_T}\end{figure}

\section{Empirical analysis}
\label{sec7}

We analyse the leverage effect pattern in a tick data set by using the estimators of the stochastic leverage effect $\tilde R_T$ and $R_T^*$ defined in (\ref{R_T}) and (\ref{R_T*}), respectively.

We consider transaction data of the S\&P500 futures recorded at the Chicago Mercantile Exchange (CME) for the period from January 3, 2007, to December 31, 2008 (502 days).
During this period, the United States experienced the subprime mortgage crisis, a nationwide financial crisis that contributed to the U.S. recession of December 2007 till June 2009. It was triggered by a large decline in home prices after the collapse of a housing bubble during 2006. That induced a large banking crisis in 2007 and the financial crisis in 2008. In nine days from October 1 to 9, 2008 the S\&P500 lost 21.6\% of its value.
Table  \ref{dati} describes the main features of our data set.
\begin{table}[ht]
	\centering
	\begin{tabular}{lclcccc}
		\hline
		Year & N. trades & Variable & Mean & Std. Dev. & Min & Max \\
		\hline
		2007 & 566409 & S\&P 500 index    &  1484.84 & 44.30 & 1375.00 & 1586.50   \\
		& & log-return  & 5.00e-6 & 1.81e-2 & -1.64  & 2.33 \\
		2008 & 557982 & S\&P 500 index    &  1226.55 & 186.89 & 739.00 & 1480.20   \\
		& & log-return  & -9.03e-5 & 4.75e-2 & -8.66  & 6.12 \\
		\hline
	\end{tabular}
	\caption{Summary statistics for the sample of the traded CME
		S\&P500 futures for the period from January 3rd 2007 to December 31st 2008 (502 days).}
	\label{dati}
\end{table}

Figure \ref{SPdata} shows the plot of the log-prices and returns for the raw transaction data. High-frequency returns are contaminated by microstructure effects, such as transaction costs and bid-and-ask bounce effects,  leading to biases in the variance measures.
Figure \ref{ACF} shows the autocorrelation function for the log-returns. Raw data exhibit a strongly significant positive first-order autocorrelation and higher-order autocorrelations remain significant up to lag 8 in 2007 and up to lag 15 in 2008.
\begin{figure}
	\includegraphics[width=0.5\textwidth]{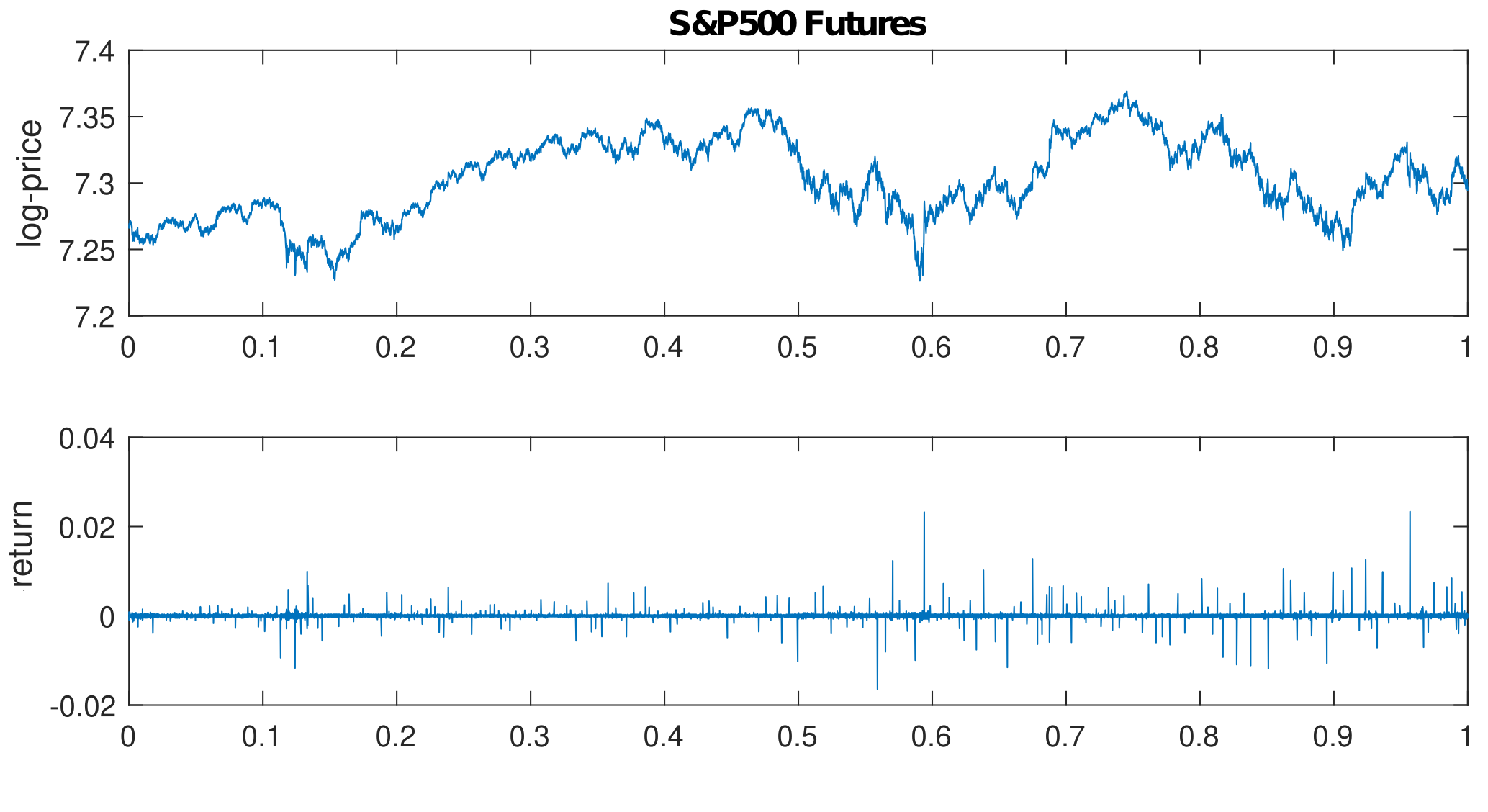}\includegraphics[width=0.5\textwidth]{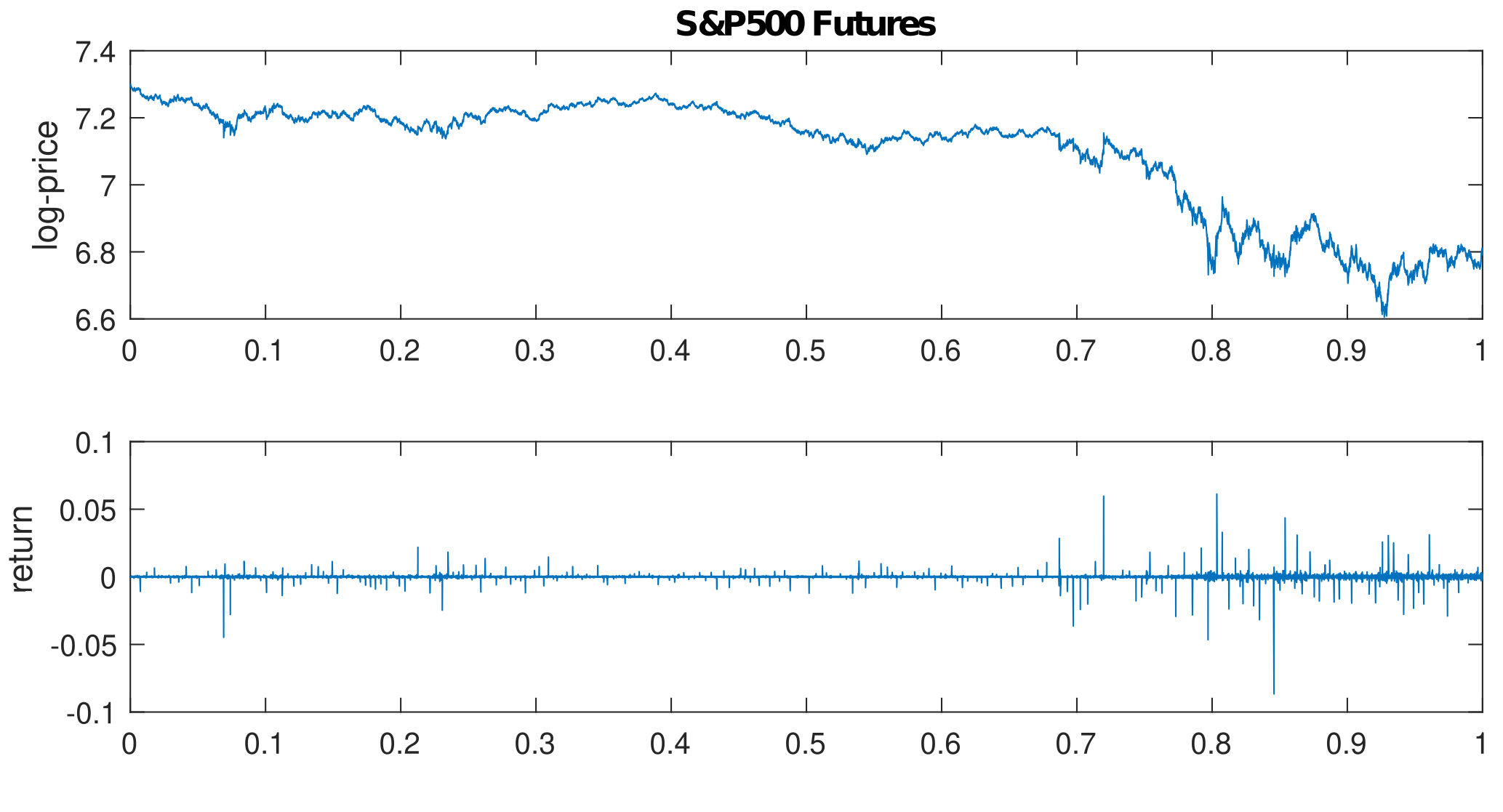}
	\caption{Log-prices and returns plots for S\&P500 futures in the years 2007 (left panels) and 2008 (right panels).} \label{SPdata}
\end{figure}
\begin{figure}
	\includegraphics[width=0.5\textwidth]{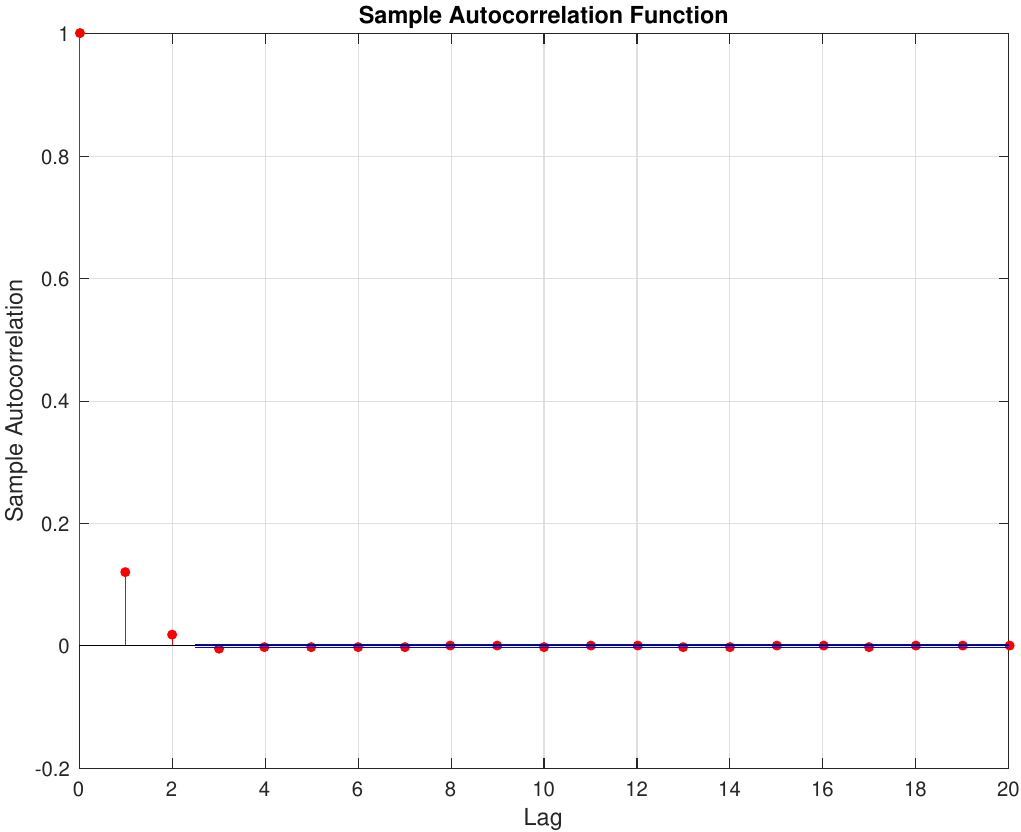}\includegraphics[width=0.5\textwidth]{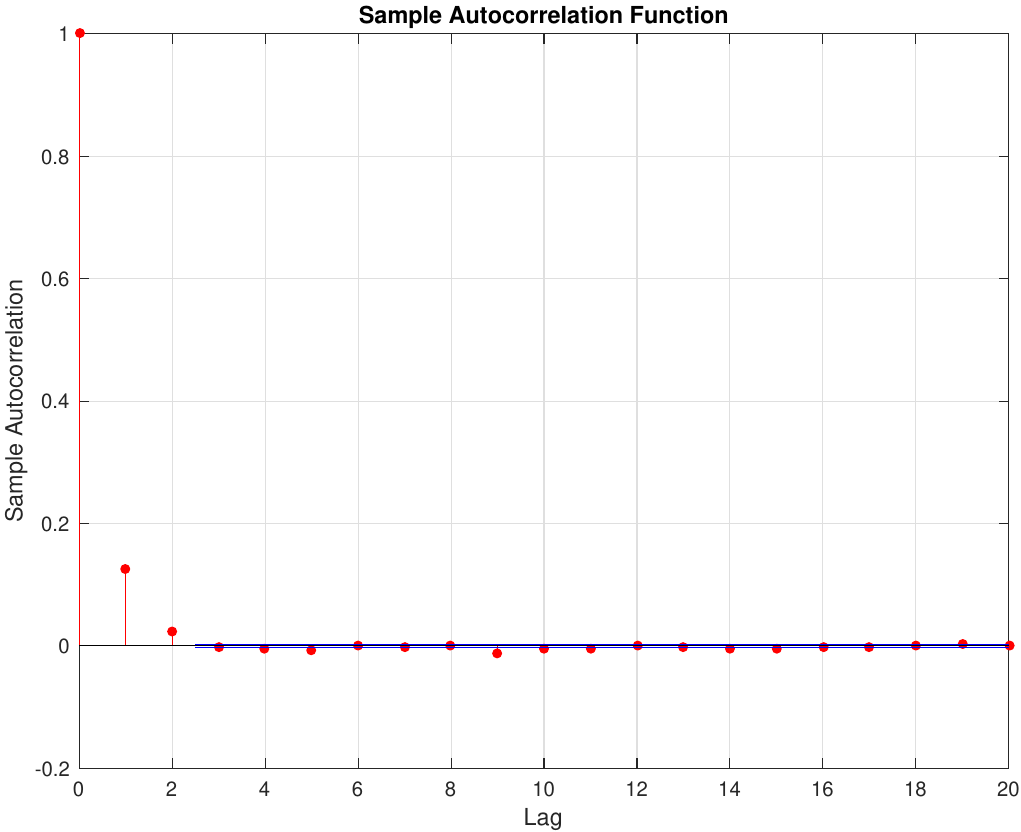}
	\caption{Autocorrelation function for S\&P500 futures in the years 2007 (left panel) and 2008 (right panel).} \label{ACF}
\end{figure}
We then perform an analysis of the stochastic leverage effect using the estimators $\widetilde{R}_T$ and $R_T^*$.
First, we plot the daily values of the factors appearing in the estimator $\widetilde{R}_T$ and $R_T^*$.
The upper panels of Figure \ref{RealLEV} show the daily FEL (\ref{decomp_noise}) and its variance corrected version (\ref{ControlVar3}) in 2007 (left) and 2008 (right). The middle panels show the FEV together with the 5-minute sparse sampled realized volatility estimator (which we consider as a benchmark of our estimates) and the lower panels show the FEVV in 2007 and 2008. The estimated quantities in 2007 and 2008 are very different in magnitude. The year 2008 displays the largest values (both negative and positive) of all the metrics, coherently with the occurrence of the financial crisis. During 2007 the integrated leverage is rather small and mostly negative.
All the estimations are almost flat during 2008 up to September 16 (day 177 in our sample), when the integrated leverage exhibits the first large negative spike. The second negative spike is on September 29 (day 186), which corresponds to the beginning of the financial crisis.
Our finding highlights the presence of persistent positive and negative integrated leverage, especially in periods of financial turmoil. We notice that both the FEL (\ref{decomp_noise}) and the variance corrected estimator (\ref{ControlVar3}) catch the same positive and negative spikes of the integrated leverage; nevertheless, the estimator (\ref{ControlVar3}) exhibits a smaller variability. 
\begin{figure}
	\includegraphics[width=0.5\textwidth]{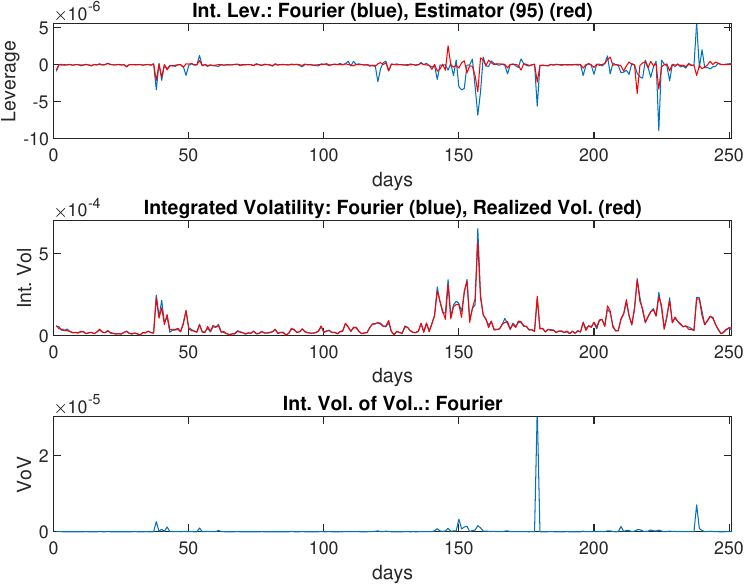}\includegraphics[width=0.5\textwidth]{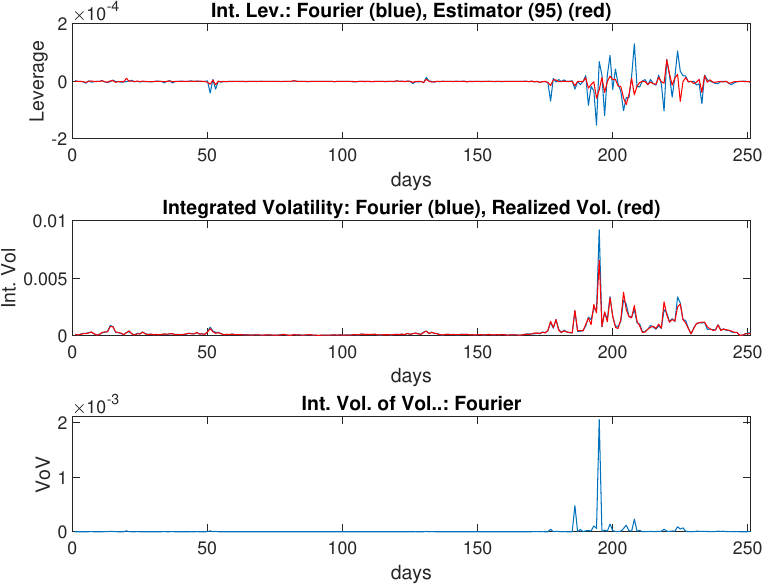}
	\caption{Upper panels: FEL (\ref{decomp_noise}) (blue) and variance corrected estimator (\ref{ControlVar3}) (red) . Middle panels: FEV (blue) and the realized volatility estimator (red) in the years 2007 (left panel) and 2008 (right panel). Lower panels: FEVV (blue) in the years 2007 (left panel) and 2008 (right panel).
	}
	\label{RealLEV}
\end{figure}

\begin{table}
	\centering 
	\begin{tabular}{lccccc}
		\hline
		{S\&P500 futures}	& \multicolumn{5}{c}{$2007$} \\
		\hline
		& Estimate & VAR & $\lambda$ & $\hat M$ & $\hat N$  \\
		$\widetilde{\eta}_{n,M,N}$ & -2.44e-07 & 1.20e-12  &   &  363 & 1   \\
		$\eta_{n,M,N}^*$ & -9.23e-08 & 3.02e-13  &   0.25  &  143 & 1  \\
		& & & & & \\
		\hline
		{S\&P500 futures}	& \multicolumn{5}{c}{$2008$} \\
		\hline
		& Estimate & VAR & $\lambda$ & $\hat M$ & $\hat N$\\
		$\widetilde{\eta}_{n,M,N}$	& -2.01e-06 & 5.94e-10  &    &  281 & 3 \\
		$\eta_{n,M,N}^*$ & -1.76e-06 & 1.51e-10  &  0.25 & 285  & 3 \\
		\hline
	\end{tabular}
	
	\caption{The FEL, its variance corrected counterpart (\ref{ControlVar3}) and their sample variance computed w.r.t. the optimal parameters $\hat{M}$ and $\hat{N}$. The symbol $\lambda$ denotes the variance reduction ratio $Var(\eta^*_{n,\hat{M},\hat{N}})/Var(\widetilde{\eta}_{n,\hat{M},\hat{N}})$. The optimal cutting frequency parameters are obtained by minimization of the sample variance over each year. The estimates in the first column correspond to averages over all the year. }
	\label{SP_MN}
\end{table}

When estimating the integrated leverage, a larger variability can be observed than estimating other quantities such as volatility or quarticity. According to the analysis of Section \ref{sec6}, for both estimators $\widetilde{\eta}_{n,M,N}$ and $\eta^*_{n,M,N}$ the cutting frequency parameters $M$ and $N$ are chosen such to minimize the sample variance over the whole one year sample. Their optimal values are listed in Table \ref{SP_MN}, along with the sample variance achieved by the FEL (\ref{decomp_noise}) and its variance corrected counterpart (\ref{ControlVar3}).
Due to the presence of microstructure effects, the optimal cutting frequency $\hat M$ turns out to be much smaller than the Nyquist frequency (i.e. $M \ll n/2=2460$).

\begin{figure}
	\includegraphics[width=0.5\textwidth]{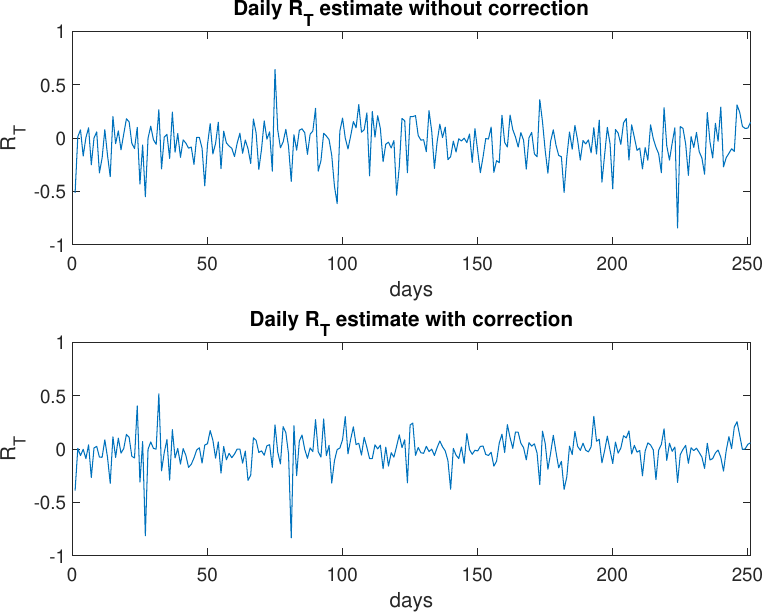}\includegraphics[width=0.5\textwidth]{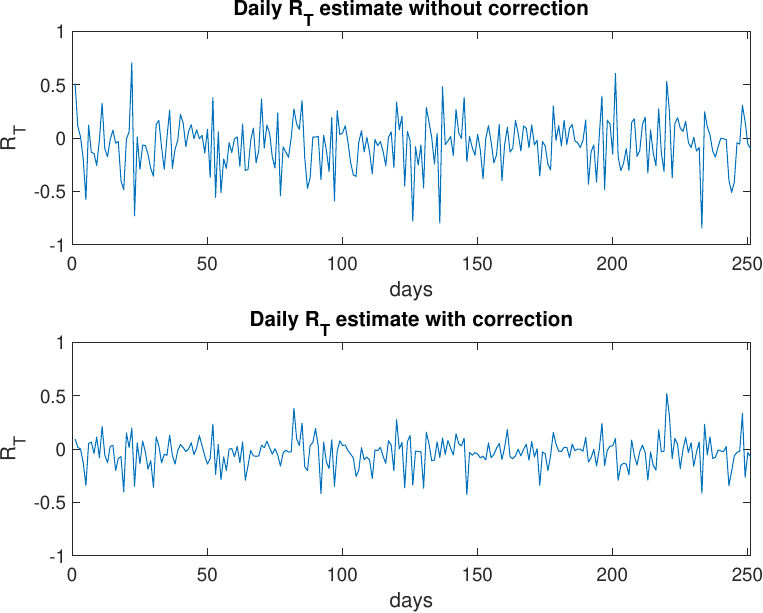}
	\caption{$R_T$ estimated by the FESL (\ref{R_T}) (upper panels) and by its variance corrected version (\ref{R_T*}) (lower panels) in the years 2007 (left panels) and 2008 (right panels).
	}
	\label{R_T_SP}
\end{figure}

We highlight that the Fourier estimator makes use of all the $n$ observed prices and it filters out microstructure effects by a suitable choice of $M$ and $N$, instead of reducing the sampling frequency.

We conclude this section by showing $R_T$ estimates obtained with the estimators (\ref{R_T}) and (\ref{R_T*}). The graphs do not show evident differences between the years 2007 and 2008. That is how it should be. Indeed, we expect to see a similar behaviour of the stochastic leverage effect estimates because we assume that the data generating process is the same for both the 2007 and 2008 data sets. The obtained estimates seem to express a \emph{fundamental measure} of the asymmetry between returns and volatilities observed across the years.

\section{Conclusions}
\label{sec8}
We define an estimator of the stochastic leverage effect which combines the Fourier estimator of the integrated leverage (FEL), of the integrated volatility (FEV) and the integrated volatility of volatility (FEVV) defined in \cite{CS15}, \cite{MM} and \cite{CMS15}, respectively. We call it the \emph{Fourier estimator of the stochastic leverage effect} (FESL). An advantage of this estimator is that it avoids estimating the latent volatility path. This step is mandatory in concurrently realized covariance-based estimators appearing in the literature and is one reason behind several bias corrections required by them.

We show the consistency of the FESL in the absence of microstructure noise. Then, we focus on analyzing the behaviour of the FESL in the presence of microstructure noise. The latter is strictly related to the finite sample properties of the FEL, the FEV, and the FEVV used in the estimation. The only estimator for which a thorough analysis of the latter is missing in the literature is the FEL. We fill this gap and determine that it is asymptotically unbiased but has a diverging mean squared error. 

We propose a variance corrected estimator of the FEL to hinder its variability in the finite sample. We then examine selection strategies for the cutting frequency parameters appearing in the estimation methodology. Numerically, we observe that the parameter values obtained by minimizing the mean squared error of the FEL are equivalent to those obtained minimizing its sample variance. Moreover, we note that using the variance corrected estimator reduces the sample variance of the final estimation by a half and that a selection strategy based on its sample variance is also directly applicable when real data are at disposal. Finally, we investigate the performance of the FESL in empirical analysis and detect the presence of the stochastic leverage effect in the S\&P 500 future prices data set for the years 2007 and 2008.

\section*{Appendix: Proofs}

In the proofs below, we make often use of following Lemma, see \cite{Katz}.
\begin{Lemma}
	\label{cor_dir}
	Let $D_N(t)$ be the normalized Dirichlet kernel defined in (\ref{dir}), then the following properties are satisfied.
	\begin{enumerate}
		\item $ \int_0^{T} |D_N(u)|^2 \, du = \frac{T}{2N+1} $,
		\item  $\forall p > 1$, there exists a constant $\mathcal{C}_p$ such that $\int_0^{T} |D_N(u)|^p \, du = \frac{\mathcal{C}_p}{2N+1} $.
	\end{enumerate}
\end{Lemma}

\begin{proof}[Proof of Theorem \ref{T0}:]
	
	Throughout the proof we indicate with $\phi_n(s):=\sup_{k=0,\ldots,n}\{t_{k} : t_k \leq s \}$.
	Moreover, we use the following integral notation for the discrete Fourier coefficients of the returns (\ref{d1})
	\[
	c_n(s;dp)=\frac{1}{T} \int_{0}^{T} \mathrm{e}^{-\mathrm{i}\frac{2\pi}{T}s\phi_n(u)} dp(u).
	\]
	In the proof, $C$ will denote a positive constant, not necessarily the same at different occurrences.
	
	Applying the product rule to the term $c_n(s;dp)c_n(l-s;dp)$ and using notations (\ref{dir}) and (\ref{der}), we obtain the following error decomposition.
	\begin{align}
		&\eta_{n,M,N}-\eta=\eta_{n,N}^1 -\eta+\eta_{n,M,N}^2  \label{est_error}\\
		&= \underset{\eta_{n,N}^1}{\underbrace{\frac{T^2}{2N+1} \sum_{|l|\leq N} \mathrm{i}l \frac{2\pi}{T} \frac{1}{T}   \int_{0}^{T} \mathrm{e}^{-\mathrm{i}\frac{2\pi}{T}l \phi_n(t)} \sigma^2(t) dt \frac{1}{T} \int_0^T  \mathrm{e}^{\mathrm{i}\frac{2\pi}{T}l\phi_n(u)} dp(u)}}- \int_0^T \eta(t) \, dt \label{p1}\\
		&+ \underbrace{\frac{T^2}{2N+1} \sum_{|l|\leq N} \mathrm{i}l \frac{2\pi}{T} \Bigg(\frac{1}{T}\int_{0}^{T} \int_{0}^{t} \mathrm{e}^{-\mathrm{i}\frac{2\pi}{T}l\phi_n(u)} D_M(\phi_n(t)-\phi_n(u)) \, dp(u) \,dp(t)} \nonumber
	\end{align}
	\begin{align}
		&+ \underset{\eta_{n,M,N}^2}{\underbrace{\frac{1}{T}\int_{0}^{T} \mathrm{e}^{-\mathrm{i}\frac{2\pi}{T} l\phi_n(t)} \int_{0}^{t}  D_M(\phi_n(t)-\phi_n(u)) \, dp(u) \,dp(t) \Bigg) \, \frac{1}{T} \int_0^T  \mathrm{e}^{\mathrm{i}\frac{2\pi}{T}l\phi_n(u) }, dp(u) }}, \nonumber
	\end{align}
	where the variable $t \in (0,T]$.
	By using the Cauchy-Schwartz inequality we have that
	
	\begin{align}
		&\E[|\eta_{n,M,N}^2|] \leq \frac{T^2}{2N+1} \sum_{|l|\leq N} |l| \frac{2\pi}{T} \E \Big[ \Big| \frac{1}{T}\int_{0}^{T} \int_{0}^{t} \mathrm{e}^{-\mathrm{i}\frac{2\pi}{T}l\phi_n(u)} D_M(\phi_n(t)-\phi_n(u)) \, dp(u) \,dp(t) \nonumber\\
		&+ \frac{1}{T}\int_{0}^{T} \mathrm{e}^{-\mathrm{i}\frac{2\pi}{T} l\phi_n(t)} \int_{0}^{t}  D_M(\phi_n(t)-\phi_n(u)) \, dp(u) \,dp(t) \Big|^2 \Big]^{\frac{1}{2}} \E\Big[\Big| \frac{1}{T} \int_0^T  \mathrm{e}^{\mathrm{i}\frac{2\pi}{T}l\phi_n(u)} dp(u) \Big|^2 \Big]^{\frac{1}{2}}. \nonumber
	\end{align}
	
	For each $|l|\leq N$, the $L_2$-norm of the Fourier coefficients of the returns
	\[
	\E\Big[ \Big |\int_0^T  \mathrm{e}^{\mathrm{i}\frac{2\pi}{T}l\phi_n(u) } dp(u) \Big|^2] \leq C,
	\]
	under Assumption (H1).
	On the other hand,	
	\begin{align}
		\E \Big[ \Big| &\frac{1}{T}\int_{0}^{T} \int_{0}^{t} \mathrm{e}^{-\mathrm{i}\frac{2\pi}{T}l\phi_n(u)} D_M(\phi_n(t)-\phi_n(u)) \, dp(u) \,dp(t) \nonumber\\
		&+ \frac{1}{T}\int_{0}^{T} \mathrm{e}^{-\mathrm{i}\frac{2\pi}{T} l\phi_n(t)} \int_{0}^{t}  D_M(\phi_n(t)-\phi_n(u)) \, dp(u) \,dp(t) \Big|^2 \Big] \nonumber
	\end{align}
	\begin{equation}
		\label{term1}
		\leq C \E \Big[ \Big| \frac{1}{T}\int_{0}^{T} \int_{0}^{t} \mathrm{e}^{-\mathrm{i}\frac{2\pi}{T} l\phi_n(u)} D_M(\phi_n(t)-\phi_n(u)) \, dp(u) \,dp(t) \Big|^2  \Big]
	\end{equation}
	\begin{equation}
		\label{term2}
		+ C \E \Big[ \Big| \frac{1}{T}\int_{0}^{T} \mathrm{e}^{-\mathrm{i}\frac{2\pi}{T} l\phi_n(t)} \int_{0}^{t}  D_M(\phi_n(t)-\phi_n(u)) \, dp(u) \,dp(t) \Big|^2 \Big].
	\end{equation}
	
	The addends (\ref{term1}) and (\ref{term2}) have the same order of magnitude in $L_2$-norm. We then show only the estimation of the term (\ref{term1}). The latter is less than or equal to
	
	\[
	C \E \Big[ \,\, \underset{(T_1)}{\underbrace{\Big| \frac{1}{T}\int_{0}^{T} \int_{0}^{t} (\mathrm{e}^{-\mathrm{i}\frac{2\pi}{T} l\phi_n(u)} - \mathrm{e}^{-\mathrm{i}\frac{2\pi}{T} lu})  \frac{1}{2M+1} \sum_{|s|\leq M} \mathrm{e}^{-\mathrm{i}\frac{2\pi}{T} s(\phi_n(t)-\phi_n(u))}  \, dp(u) \,dp(t) \Big|^2}} \,\,\Big]
	\]
	\[
	+ C \E \Big[ \,\, \underset{(T_2)}{ \underbrace{\Big| \frac{1}{T}\int_{0}^{T} \int_{0}^{t} \mathrm{e}^{-\mathrm{i}\frac{2\pi}{T} l\phi_n(u)}  \frac{1}{2M+1} \sum_{|s|\leq M} (\mathrm{e}^{-\mathrm{i}\frac{2\pi}{T} s(\phi_n(t)-\phi_n(u))} - \mathrm{e}^{-\mathrm{i}\frac{2\pi}{T} s(t-u)})  \, dp(u) \,dp(t) \Big|^2}} \,\, \Big]
	\]
	\[
	+ C \E \Big[\,\, \underset{(T_3)}{\underbrace{\Big|\frac{1}{T} \int_0^T \int_0^t \mathrm{e}^{-\mathrm{i}\frac{2\pi}{T} lu} D_M(t-u) \, dp(u) \, dp(t)     \Big|^2}} \,\, \Big].
	\]
	The term $(T_1)$, after applying the It\^o isometry, is less than or equal to
	\[
	\underset{(T_{11})}{C \E \Big[ \int_{0}^{T} \Big| \int_{0}^{t} (\mathrm{e}^{-\mathrm{i}\frac{2\pi}{T} l\phi_n(u)} - \mathrm{e}^{-\mathrm{i}\frac{2\pi}{T} lu})  \frac{1}{2M+1} \sum_{|s|\leq M} \mathrm{e}^{-\mathrm{i}\frac{2\pi}{T} s(\phi_n(t)-\phi_n(u))}  \, dp(u) \Big|^2 \, \sigma^2(t) \, dt \Big]}
	\]
	\[
	+C \E \Big[ \int_{[0,T]^2} \Big(\int_{0}^{t} (\mathrm{e}^{-\mathrm{i}\frac{2\pi}{T} l\phi_n(u)} - \mathrm{e}^{-\mathrm{i}\frac{2\pi}{T} lu})  \frac{1}{2M+1} \sum_{|s|\leq M} \mathrm{e}^{-\mathrm{i}\frac{2\pi}{T} s(\phi_n(t)-\phi_n(u))}  \, dp(u) \Big)
	\]
	\[
	\underset{(T_{12})}{\Big(\int_{0}^{z} (\mathrm{e}^{\mathrm{i}\frac{2\pi}{T} l\phi_n(v)} - \mathrm{e}^{\mathrm{i}\frac{2\pi}{T} lv})  \frac{1}{2M+1} \sum_{|s|\leq M} \mathrm{e}^{\mathrm{i}\frac{2\pi}{T} s(\phi_n(z)-\phi_n(v))}  \, dp(v) \Big) \, a(z)\,  a(t)\, dz \, dt \Big]}
	\]
	\[
	\leq C \E \Big[ \int_0^T \int_0^t  (|l|\frac{2\pi}{T}|\phi_n(t)-t| + l^2\frac{4\pi^2}{T^2}o(|\phi_n(t)-t|^2))^2  \, du \, dt\Big]
	\]
	\[
	+C \E \Big[ \int_0^T \int_{[0,t]^2} (|l|\frac{2\pi}{T}|\phi_n(t)-t| + l^2 \frac{4\pi^2}{T^2}o(|\phi_n(t)-t|^2))
	\]
	\[
	\times (|l|\frac{2\pi}{T}|\phi_n(s)-s| + l^2\frac{4\pi^2}{T^2} o(|\phi_n(s)-s|^2))\, dv \, ds \, dt  \Big]
	\]
	\[
	+C (N^2\tau(n)+ o(1))  \E \Big[ \int_{[0,T]^2} \Big(\int_{0}^{t}  \frac{1}{2M+1} \sum_{|s|\leq M} \mathrm{e}^{-\mathrm{i}\frac{2\pi}{T} s(\phi_n(t)-\phi_n(u))} \, dp(u)
	\]
	\[
	\int_{0}^{z} \frac{1}{2M+1} \sum_{|s|\leq M} \mathrm{e}^{\mathrm{i}\frac{2\pi}{T} s(\phi_n(z)-\phi_n(v))}  \, dp(v) \Big) \, a(z)\,  a(t)\, dz \, dt \Big].
	\]
	To obtain the last inequality, we apply several times Taylor's formula, Assumption (H1) and the H\"older and Cauchy-Schwarz inequalities. Note that the first two addends from the right hand side correspond to the estimation of the term $(T_{11})$ whereas the third addend estimates from above $(T_{12})$. Thus, 
	\[
	\E[(T_1)] \leq C N^2 \tau^2(n) +o(1).
	\]
	Analogously, we can show that
	\[
	\E[(T_2)] \leq C M^2 \tau^2(n) + o(1).
	\]
	It remains to analyze the last addend of (\ref{term1}). 
	\[
	\E[(T_3)]\leq C \E \Big[ \int_0^T \Big|\int_0^t \mathrm{e}^{-\mathrm{i}\frac{2\pi}{T} lu} D_M(t-u) \, dp(u)  \Big|^2 \, \sigma^2(t) \, dt \Big]
	\]
	\[
	+  \Big[ \int_{[0,T]^2} \Big( \int_0^t \mathrm{e}^{-\mathrm{i}\frac{2\pi}{T} lu} D_M(t-u) \, dp(u) \int_0^z \mathrm{e}^{\mathrm{i}\frac{2\pi}{T} lz} D_M(z-v) \, dp(v) \Big)\, a(t) \, a(z)\, dt \,dz  \Big].
	\]
	Using iteratively the H\"older and the Cauchy-Schwarz inequalities, we obtain the term above is less than or equal to
	\[
	\leq C \E \Big[ \int_0^T \int_0^t D_M^2(t-u) \, \sigma^2(u) \, du \, dt \Big]
	+C \E \Big[\int_0^T \Big(\int_0^t |D_M(t-u)|^{p^{\prime}} \, a(u) \, du \Big)^{\frac{2}{p^{\prime}}} \, dt \Big]
	\]
	\[
	+ C \E\Big[ \int_{[0,T]^2} \Big( \int_0^t D_M^2(t-u)\, du \Big) dt dz \Big]^{\frac{1}{2}} \E\Big[\int_{[0,T]^2} \Big( \int_0^z D_M^2(z-v)\, dv \Big) dt dz \Big]^{\frac{1}{2}}
	\]
	\[
	+C \E\Big[ \int_{[0,T]^2} \Big( \int_0^t |D_M(t-u)|^{p^{\prime}}\, du \Big)^{\frac{2}{p^{\prime}}} dt dz \Big]^{\frac{1}{2}} \E\Big[\int_{[0,T]^2} \Big( \int_0^z D_M^2(z-v)\, dv \Big) dt dz \Big]^{\frac{1}{2}}
	\]
	\[
	+C \E\Big[ \int_{[0,T]^2} \Big( \int_0^t D_M^2(t-u)\, du \Big) dt dz \Big]^{\frac{1}{2}} \E\Big[\int_{[0,T]^2} \Big( \int_0^z |D_M(z-v)|^{p^{\prime}}\, dv \Big)^{\frac{2}{p^{\prime}}} dt dz \Big]^{\frac{1}{2}}
	\]
	\[
	+C \E\Big[ \int_{[0,T]^2} \Big( \int_0^t |D_M(t-u)|^{p^{\prime}}\, du \Big)^{\frac{2}{p^{\prime}}} dt dz \Big]^{\frac{1}{2}} \E\Big[\int_{[0,T]^2} \Big( \int_0^z |D_M(z-v)|^{p^{\prime}}\, dv \Big)^{\frac{2}{p^{\prime}}} dt dz \Big]^{\frac{1}{2}},
	\]
	for $p^{\prime} \in (1,2)$.
	By Lemma \ref{cor_dir}, 
	\[
	\E[(T_3)]\leq \frac{C}{2M+1}+\frac{C}{(2M+1)^{\frac{2}{p^{\prime}}}}+ \frac{C}{(2M+1)^{\frac{2+p^{\prime}}{2p^{\prime}}}}.
	\]
	
	Thus
	\[
	E[|\eta_{n,M,N}^2|]\leq C \frac{N}{\sqrt{M+1}} + C N^2\tau(n) + C NM\tau(n) +o(1),
	\]
	which converges to zero under Assumption (\ref{Star2}).
	We now show that the term (\ref{p1}) converges to zero in $L_1$-norm.
	
	\[
	\E[| \eta_{n,N}^1 - \eta|]
	\]
	\[
	=\E \Big [ \Big |\frac{1}{2 N+1}\sum_{|l|\leq N} \mathrm{i}l \frac{2\pi}{T} \sum_{i=0}^{n-1} \sum_{j=0}^{n-1}  \mathrm{e}^{\mathrm{i}\frac{2\pi}{T}l(t_i-t_j)}  \int_{t_j}^{t_{j+1}} \sigma^2(t) \, dt \int_{t_i}^{t_{i+1}} dp(u) -\int_0^t \, \eta(t) \, dt \Big| \Big]
	\]
	\begin{align}
		=& \E \Big[ \Big | \frac{1}{2 N+1}\sum_{|l|\leq N} \mathrm{i}l \frac{2\pi}{T} \int_0^T \int_0^T (\mathrm{e}^{\mathrm{i}\frac{2\pi}{T}l(\phi_n(t)-\phi_n(u))}-\mathrm{e}^{\mathrm{i}\frac{2\pi}{T}l(t-u)})  \sigma^2(u) \, du \, a(t) dt \Big | \Big ] \label{g1}\\
		&+ \E \Big[ \Big | \frac{1}{2 N+1}\sum_{|l|\leq N} \mathrm{i}l \frac{2\pi}{T} \int_0^T \int_0^T (\mathrm{e}^{\mathrm{i}\frac{2\pi}{T}l(\phi_n(t)-\phi_n(u))}-\mathrm{e}^{\mathrm{i}\frac{2\pi}{T}l(t-u)})  \sigma^2(u) \, du \, \sigma(t) \, dW(t) \Big | \Big ] \label{g2}\\
		&+ \E \Big [ \Big |\frac{1}{2 N+1}\sum_{|l|\leq N} \mathrm{i}l \frac{2\pi}{T} \int_0^T  \mathrm{e}^{-\mathrm{i}\frac{2\pi}{T}lu}   \sigma^2(u) \, du  \int_0^T \mathrm{e}^{\mathrm{i}\frac{2\pi}{T}lt}   dp(t) - \int_0^T \eta(t)\, dt  \Big |\Big ]. \label{trunc}
	\end{align}
	
	By using Taylor's formula, the  term (\ref{g1}) is less than or equal to
	\begin{align}
		C \frac{1}{2N+1} \sum_{|l|\leq N} |l| \frac{2\pi}{T} \E \Big[  \int_0^T \int_0^T \mathrm{e}^{-\mathrm{i}\frac{2\pi}{T}l(t-u)}\Big(& \frac{2\pi}{T} |l| |\phi_n(t)-t-\phi_n(u)+u| \nonumber\\
		&+ l^2 \frac{4 \pi^2}{T^2} o(|\phi_n(t)-t-\phi_n(u)+u|^2 )  du  dt \Big] \nonumber
	\end{align}
	\[
	\leq C N^2 \tau(n) +o(1).
	\]
	Analogously, term (\ref{g2}) is less than or equal to $C N^2 \tau(n)$.
	Let us analyze the term (\ref{trunc}). By using formula (\ref{mod3})
	\[
	\E \Big [ \Big |\frac{1}{2 N+1}\sum_{|l|\leq N} \mathrm{i}l \frac{2\pi}{T} \int_0^T  \mathrm{e}^{-\mathrm{i}\frac{2\pi}{T}lu}   \sigma^2(u) \, du  \int_0^T \mathrm{e}^{\mathrm{i}\frac{2\pi}{T}lt}   dp(t) - \int_0^T \eta(t)\, dt  \Big |\Big ]
	\]
	\[
	=\E \Big[ \Big | \frac{T^2}{2 N+1}\sum_{|l|\leq N} \mathrm{i}l \frac{2\pi}{T} c (l;\sigma^2) c(-l;dp) -\int_0^T \eta(t) \, dt \Big | \Big ]
	\]
	\[
	=\E \Big [ \Big | \frac{1}{2 N+1}\sum_{|l|\leq N} \Big(c(l; d \sigma^2) -\frac{1}{T} \int_0^T d \sigma^2(u)\Big)c(-l;dp)-\int_0^T \eta(t) \, dt \Big | \Big ].
	\]
	We now use the product rule and obtain
	\[
	\E \Big[\Big| \underset{M_{1,N}(T)}{\int_{0}^{T} \int_{0}^{t}  D_N(t-u) dp(u)d\sigma^2(t)}+ \underset{M_{2,N}(T)}{\int_{0}^{T} \int_{0}^{t}  D_N(t-u)\, d\sigma^2(u)dp(t)}
	\]
	\[
	\underset{M_{3,N}(T)}{-\int_{0}^{T} \int_{0}^{t} D_N(u) dp(u) d\sigma^2(t)} \underset{M_{4,N}(T)}{- \int_{0}^{T} \int_{0}^{t} D_N(t) d\sigma^2(u) dp(t)}
	\underset{M_{5,N}(T)}{- \int_{0}^{T} D_N(u) \eta(u) du}  \Big| \Big].
	\]
	Let us analyze the first double integral $M_{1,N}(T)$
	
	\[
	\mathbb{E}\Big[\Big|\int_{0}^{T} \int_{0}^{t}  D_N(t-u) dp(u)d\sigma^2(s) \Big|\Big] = \mathbb{E}\Big[\Big|\int_{0}^{T} \int_{0}^{t}  D_N(t-u) \sigma(u) dW(u) \gamma(s)dZ(s)
	\]
	\[
	+ \int_{0}^{T} \int_{0}^{t}  D_N(t-u) \sigma(u) dW(u) b(s) ds + \int_{0}^{T} \int_{0}^{t}  D_N(t-u) a(u) d(u) \gamma(s) dZ(s)
	\]
	\[
	+ \int_{0}^{T} \int_{0}^{t}  D_N(t-u) a(u) du b(s) ds \Big| \Big]
	\]
	
	The first two summands of the decomposition above have a $L_1$-norm of order $O(N^{-\frac{1}{2}})$ and the third and the fourth ones are of order $O(N^{-\frac{1}{p}})$, where $p \in (1,2)$. These estimations are performed by means of the use of Lemma \ref{cor_dir}, the H\"older and Cauchy-Schwarz inequalities. Analogous calculations follow for the terms $M_{1,N}(T)$, $M_{2,N}(T)$, $M_{3,N}(T)$, $M_{4,N}(T)$.

	By Lemma \ref{cor_dir}, we have
	
	\[
	\mathbb{E}[|M_{5,N}(2\pi)|] \leq C \mathbb{E}\Big[\sup_{t \in [0,T]} |\eta(t)|\Big] \Big( \int_{0}^{T}  |D_N(u)|^p
	du \Big)^{\frac{1}{p}} \leq \frac{C}{N^{\frac{1}{p}}}.
	\]
	
	Choosing $p \in (1,2)$ we obtain that the term $M_{5,N}(2\pi)$ converges to zero in $L_1$-norm as $N \to \infty$.
	Thus,
	
	\[
	\mathbb{E}\Big[ \Big|\frac{4\pi^2}{2N+1}\sum_{|l|\leq N} \mathrm{i}lc(l;\nu)c(-l;dp)
	-\int_{0}^{2\pi} \eta(t) dt \Big| \Big]
	\]
	\[
	\leq  \frac{C}{\sqrt{N}}+ \frac{C}{N^{\frac{1}{p}}}
	\]
	
	Therefore, under Assumption (\ref{Star2}), it follows the consistency of the estimator $\eta_{n,M,N}$.
	
\end{proof}

\begin{Remark}
	\label{zero_drift}
	In the proof of Theorem \ref{T0}, we show that the drift components of the logarithmic price and the volatility process appear in terms that are negligible in probability. To shorten the proofs of the theorems below, from now on, we omit calculations involving the drift terms $a(t)$ and $b(t)$ as they follow the same rationale of the one presented in the proof above.
\end{Remark}

In the proof of Theorem \ref{bias} and \ref{MSE}, an explicit formulation of the term $\eta^{\epsilon}_{n,M,N}$  is pivotal. Hence, 

\begin{align}
	\eta^{\epsilon}_{n,M,N}=  & \sum_{i, j, k:  i\neq j \neq k} D_M(t_{i}-t_{j}) D_N^{\prime}(t_{k}-t_{j})
	(\delta_i\delta_j\epsilon_k+ \delta_j \delta_k\epsilon_i+ \delta_k \delta_i\epsilon_j +\delta_i\epsilon_j\epsilon_k\nonumber\\ &+\delta_j\epsilon_i\epsilon_k+\delta_k\epsilon_i\epsilon_j+\epsilon_i\epsilon_j\epsilon_k) \nonumber\\
	&+  \sum_{i, j: i \neq j} D_M(t_{i}-t_{j}) D_N^{\prime}(t_{i}-t_{j}) \,
	(\delta_i \epsilon_j + \epsilon^2_i \delta_j+ \epsilon^2_i \epsilon_j+2 \delta_i\delta_j\epsilon_i + 2 \delta_i \epsilon_j \epsilon_i) \label{eta_noise}\\
	&+ \sum_{i, j}  D_N^{\prime}(t_{i}-t_{j}) \, ( \delta_j \epsilon_i + \epsilon^2_j \delta_i+ \epsilon^2_j \epsilon_i+2 \delta_j\delta_i\epsilon_j + 2 \delta_j \epsilon_i \epsilon_j). \nonumber
\end{align}

\begin{proof}[Proof of Theorem \ref{bias}]
	
	We first analyse the Bias due to the noise components.
	
	Because of Assumption (H3), it holds

	$$\begin{array}{ll}
		\E[\epsilon_i \epsilon_j\epsilon_j]&= \, 0 \, \textrm{if $i\neq j\neq k$}\\
		\E[\epsilon_i^2\epsilon_j] &= \, \left\{\begin{array}{ll}
			$0$ & \textrm{if $|i-j|\neq 1$},\\
			-\E[\zeta^3] & \textrm{if $j=i+1$},\\
			\E[\zeta^3] & \textrm{if $j=i-1$},
		\end{array} \right.
	\end{array}$$
	and
	\[
	\E[\eta^{\epsilon}_{n,M,N}]= \sum_{i=0}^{n-1} \sum_{j=0}^{n-1} D_N^{\prime}(t_{i}-t_{j}) \E[\epsilon_j^2\epsilon_i]
	+\sum_{i=0}^{n-1} \sum_{j=0}^{n-1} D_M(t_i-t_{j}) D_N^{\prime}(t_{i}-t_{j}) \E[\epsilon_i^2\epsilon_j]
	\]
	\[
	=\sum_{i=0}^{n-2}D_N^{\prime}(t_{i+1}-t_{i}) \E[\epsilon_i^2\epsilon_{i+1}]+ \sum_{i=1}^{n-1}D_N^{\prime}(t_{i-1}-t_{i}) \E[\epsilon_i^2\epsilon_{i-1}]
	\]
	\[
	+ \sum_{i=0}^{n-2}  D_M(t_{i+1}-t_{i}) D_N^{\prime}(t_{i+1}-t_{i}) \E[\epsilon_{i+1}^2\epsilon_i]+
	\sum_{i=1}^{n-1}  D_M(t_{i-1}-t_{i}) D_N^{\prime}(t_{i-1}-t_{i}) \E[\epsilon_{i-1}^2\epsilon_i]
	\]
	\[
	=(n-1) D_N^{\prime}(\frac{T}{n}) \E[\epsilon_i^2\epsilon_{i+1}]+(n-1) D_N^{\prime}(-\frac{T}{n}) \E[\epsilon_i^2\epsilon_{i-1}]
	\]
	\[
	+(n-1) D_M(\frac{T}{n}) D_N^{\prime}(\frac{T}{n}) \E[\epsilon_{i+1}^2\epsilon_i] +D_M(\frac{T}{n}) D_N^{\prime}(-\frac{T}{n}) \E[\epsilon_{i-1}^2\epsilon_i]
	\]
	\[
	=2(n-1)  D_N^{\prime}(\frac{T}{n})(D_M(\frac{T}{n})-1) \E[\zeta^3].
	\]
	
	By using Taylor's formula, it follows that $D_N^{\prime}(\frac{T}{n})\sim O\big(\frac{N^2}{n})$ and $D_M(\frac{T}{n})\sim 1- O\big(\frac{M^2}{n^2})$. Thus, under the Assumption (\ref{hyp1}), $|\E[\eta^{\epsilon}_{n,M,N}] |$ converges to zero as $N,M, n \to \infty$.

	The expected value of the term (\ref{dec1}) is equal to
	\[
	\sum_{i,j,k : \, i \neq j \neq k} D_M(t_i-t_{j}) D_N^{\prime}(t_{k}-t_{j}) \, \E[\delta_i \delta_j \delta_k]=0,
	\]
	because $\E[\delta_i \delta_j \delta_k ]=0$.
	
	The expected value of the term involving the components (\ref{dec2}) equals
	\[
	\E\Big[ \frac{1}{2N+1} \sum_{|l|\leq N} \mathrm{i}l \frac{2\pi}{T} \frac{1}{2M+1} \sum_{|s|\leq M}
	\sum_{i=0}^{n-1} \sum_{j=0}^{n-1} \mathrm{e}^{\mathrm{i}\frac{2\pi}{T}(l-s)(t_i-t_j)} \, \delta_i^2 \delta_j
	\]
	\[
	+\frac{1}{2 N+1}\sum_{|l|\leq N} \mathrm{i}l \frac{2\pi}{T} \sum_{i=0}^{n-1} \sum_{j=0}^{n-1}  \mathrm{e}^{\mathrm{i}\frac{2\pi}{T}l(t_i-t_j)} \, \delta_i \delta_j^2 -\int_0^T \eta(t) dt\Big]
	\]
	\[
	= \underset{(A_1)}{\E\Big[ \frac{1}{2N+1} \sum_{|l|\leq N} \mathrm{i}l \frac{2\pi}{T} \frac{1}{2M+1} \sum_{|s|\leq M}
		\sum_{i=0}^{n-1} \sum_{j=0}^{n-1} \mathrm{e}^{\mathrm{i}\frac{2\pi}{T}(l-s)(t_i-t_j)} \int_{t_i}^{t_{i+1}} \sigma^2(u) \, du \int_{t_j}^{t_{j+1}} dp(t)\Big]}
	\]
	\[
	+ \underset{(A_2)}{\E\Big[\frac{1}{2N+1} \sum_{|l|\leq N} \mathrm{i}l \frac{2\pi}{T} \frac{1}{2M+1} \sum_{|s|\leq M}
		\sum_{i=0}^{n-1} \sum_{j=0}^{n-1} \mathrm{e}^{\mathrm{i}\frac{2\pi}{T}(l-s)(t_i-t_j)} \int_{t_i}^{t_{i+1}} \int_{t_i}^{t} dp(u) dp(t) \int_{t_j}^{t_{j+1}} dp(t) \Big]}
	\]
	\[
	+\underset{(A_3)}{\E\Big[\frac{1}{2 N+1}\sum_{|l|\leq N} \mathrm{i}l \frac{2\pi}{T} \sum_{i=0}^{n-1} \sum_{j=0}^{n-1}  \mathrm{e}^{\mathrm{i}\frac{2\pi}{T}l(t_i-t_j)}  \int_{t_j}^{t_{j+1}}  \int_{t_j}^{t} dp(u) dp(t) \int_{t_i}^{t_{i+1}} dp(t)\Big]}
	\]
	\[
	+\underset{(A_4)}{\E\Big[ \frac{1}{2 N+1}\sum_{|l|\leq N} \mathrm{i}l \frac{2\pi}{T} \sum_{i=0}^{n-1} \sum_{j=0}^{n-1}  \mathrm{e}^{\mathrm{i}\frac{2\pi}{T}l(t_i-t_j)}  \int_{t_j}^{t_{j+1}} \sigma^2(u) \, du \int_{t_i}^{t_{i+1}} dp(t)   -\int_0^T \eta(t) dt \Big]}.
	\]

	The term $(A_1)$ can be further decomposed in
	\[
	\underset{(A_1^*)}{\E\Big[ \frac{1}{2N+1} \sum_{|l|\leq N} \mathrm{i}l \frac{2\pi}{T} \frac{1}{2M+1} \sum_{|s|\leq M}
		\sum_{i=1}^{n-1} \sum_{j=0}^{i-1} \mathrm{e}^{\mathrm{i}\frac{2\pi}{T}(l-s)(t_i-t_j)} \int_{t_i}^{t_{i+1}} \sigma^2(u) \, du \int_{t_j}^{t_{j+1}} dp(t)\Big]}
	\]
	\[
	+\E\Big[ \frac{1}{2N+1} \sum_{|l|\leq N} \mathrm{i}l \frac{2\pi}{T} \frac{1}{2M+1} \sum_{|s|\leq M}
	\sum_{j=1}^{n-1} \sum_{i=0}^{j-1} \mathrm{e}^{\mathrm{i}\frac{2\pi}{T}(l-s)(t_i-t_j)} \int_{t_i}^{t_{i+1}} \sigma^2(u) \, du \int_{t_j}^{t_{j+1}} dp(t)\Big].
	\]
	By applying the tower property with respect to the sigma-algebra $\mathcal{F}_{i+1}$, the second summand is zero because of the martingale property of the It\^o integrals.
	Thus, the term $(A_1)$ is just equal to the term $(A_1^*)$.
	We call $|(A_1^*)|=\Gamma(n,M,N)$.

	\[
	\Gamma(n,M,N)= \Big | \E \Big[ \frac{1}{2N+1} \sum_{|l|\leq N} \mathrm{i}l \frac{2\pi}{T} \int_{0}^{T} \int_0^t \mathrm{e}^{\mathrm{i}\frac{2\pi}{T}l(\phi_n(t)-\phi_n(u))} \frac{1}{2M+1}
	\]
	\[
	\sum_{|s|\leq M}\mathrm{e}^{-\mathrm{i}\frac{2\pi}{T}s(\phi_n(t)-\phi_n(u))} \,  dp(u) \, \sigma^2(t) \, dt  \Big] \Big|
	\]
	\[
	= \Big| \E\Big[ \frac{1}{2N+1} \sum_{|l|\leq N} \mathrm{i}l \frac{2\pi}{T} \int_{0}^{T} \int_0^t \mathrm{e}^{\mathrm{i}\frac{2\pi}{T}l(\phi_n(t)-\phi_n(u))} \frac{1}{2M+1}
	\]
	\[
	\sum_{|s|\leq M} (\mathrm{e}^{-\mathrm{i}\frac{2\pi}{T}s(\phi_n(t)-\phi_n(u))}- \mathrm{e}^{-\mathrm{i}\frac{2\pi}{T}s(t-u)}) \,  dp(u) \, \sigma^2(t) \, dt  \Big] \Big|
	\]
	\[
	+ \Big | \E\Big[ \frac{1}{2N+1} \sum_{|l|\leq N} \mathrm{i}l \frac{2\pi}{T} \int_{0}^{T} \int_0^t (\mathrm{e}^{-\mathrm{i}\frac{2\pi}{T}l(\phi_n(t)-\phi_n(u))} - \mathrm{e}^{-\mathrm{i}\frac{2\pi}{T}l(t-u)}) \]
	\[
	\frac{1}{2M+1} \sum_{|s|\leq M} \mathrm{e}^{-\mathrm{i}\frac{2\pi}{T}s(t-u)} \,  dp(u) \, \sigma^2(t) \, dt \Big] \Big|
	\]
	\[
	+  \Big | \E\Big[ \frac{1}{2N+1} \sum_{|l|\leq N} \mathrm{i}l \frac{2\pi}{T} \int_{0}^{T} \int_0^t \mathrm{e}^{-\mathrm{i}\frac{2\pi}{T}l(t-u)} D_M(t-u) \,  dp(u) \, \sigma^2(t) \, dt  \Big] \Big|.
	\]

	The third summand is less than or equal to
	\[
	\E \Big[\frac{1}{2N+1} \sum_{|l|\leq N} |l| \frac{2\pi}{T} \Big| \int_{0}^{T} \int_0^t \mathrm{e}^{-\mathrm{i}\frac{2\pi}{T}(l)(t-u)} D_M(t-u) \,  dp(u) \, \sigma^2(t) \, dt \Big|  \Big]
	\]
	\[
	\leq \frac{1}{2N+1} \sum_{|l|\leq N} |l| \frac{2\pi}{T} \, \E\Big[\sup_{[0,T]}\sigma^2(t)\Big]^{\frac{3}{2}} \, T \, \E\Big[ \int_{0}^T D_M^2(u) du \Big]^{\frac{1}{2}}
	\]
	\[ \leq 2\pi N \E\Big[\sup_{[0,T]}\sigma^2(t)\Big]^{\frac{3}{2}} \Big(\frac{T}{2M+1}\Big)^{\frac{1}{2}}
	\]
	
	by using the Cauchy Schwartz and H\"older inequality, the It\^o isometry and the properties of the rescaled Dirichlet kernel.
	
	By means of the Taylor's formula, we  obtain estimations for the first and second summand of $\Gamma(n,M,N)$ as follows

	\[
	\Big| \E\Big[ \frac{1}{2N+1} \sum_{|l|\leq N} \mathrm{i}l \frac{2\pi}{T} \int_{0}^{T} \int_0^t \mathrm{e}^{\mathrm{i}\frac{2\pi}{T}(l)(\phi_n(t)-\phi_n(u))} \frac{1}{2M+1} \]
	\[
	\sum_{|s|\leq M} (\mathrm{e}^{-\mathrm{i}\frac{2\pi}{T}(s)(\phi_n(t)-\phi_n(u))}- \mathrm{e}^{-\mathrm{i}\frac{2\pi}{T}(s)(t-u)}) \,  dp(u) \, \sigma^2(t) \, dt  \Big] \Big|
	\]
	\[
	\leq \E\Big[ \Big| \frac{1}{2N+1} \sum_{|l|\leq N} \mathrm{i}l \frac{2\pi}{T} \int_{0}^{T} \int_0^t \mathrm{e}^{\mathrm{i}\frac{2\pi}{T}(l)(\phi_n(t)-\phi_n(u))} \frac{1}{2M+1}
	\]
	\[
	\sum_{|s|\leq M}  \mathrm{e}^{-\mathrm{i}\frac{2\pi}{T}(s)(t-u)}(s \frac{2\pi}{T}(t-u-\phi_n(t)+\phi_n(u))+o(1)) \,  dp(u) \, \sigma^2(t) \, dt \Big| \Big]
	\]	
	\[
	\leq \frac{1}{2N+1} \sum_{|l|\leq N} |l| \frac{2\pi}{T} \frac{1}{2M+1}  \sum_{|s|\leq M} \E\Big[\sup_{[0,T]}\sigma^2(t)\Big]
	\]	
	\[
	\int_0^T \E \Big[ \Big| \int_0^t \mathrm{e}^{\mathrm{i}\frac{2\pi}{T}(l)(\phi_n(t)-\phi_n(u))}  \mathrm{e}^{-\mathrm{i}\frac{2\pi}{T}(s)(t-u)} (s\frac{2\pi}{T}(t-u-\phi_n(t)+\phi_n(u))+o(1)) \,\, dp(u) \Big| \Big] dt
	\]
	\[
	\leq \frac{1}{2N+1} \sum_{|l|\leq N} |l| \frac{2\pi}{T} \frac{1}{2M+1}  \sum_{|s|\leq M} \E\Big[\sup_{[0,T]}\sigma^2(t)\Big]
	\]
	\[
	\int_0^T \E \Big[\int_0^t (s^2 \frac{4\pi^2}{T^2}(t-u-\phi_n(t)+\phi_n(u))^2+o(1)) \,\, \sigma^2(u) du \Big]^{\frac{1}{2}} dt
	\]
	\[
	\leq \frac{1}{2N+1} \sum_{|l|\leq N} |l| \frac{2\pi}{T} \frac{1}{2M+1}  \sum_{|s|\leq M} \E\Big[\sup_{[0,T]}\sigma^2(t)\Big]^{\frac{3}{2}}\int_0^T \Big(\int_0^t s^2 4\frac{4\pi^2}{n^2} +o(1) \, du \Big)^{\frac{1}{2}} dt
	\]
	\[
	\leq \frac{MN}{n} \, 8 \pi^2 T^{\frac{1}{2}}\, \E\Big[\sup_{[0,T]}\sigma^2(t)\Big]^{\frac{3}{2}} +o(1),
	\]
	and
	\[
	\Big | \E\Big[ \frac{1}{2N+1} \sum_{|l|\leq N} \mathrm{i}l \frac{2\pi}{T} \int_{0}^{T} \int_0^t (\mathrm{e}^{-\mathrm{i}\frac{2\pi}{T}(l)(\phi_n(t)-\phi_n(u))} - \mathrm{e}^{-\mathrm{i}\frac{2\pi}{T}(l)(t-u)}) \]
	\[
	\frac{1}{2M+1} \sum_{|s|\leq M} \mathrm{e}^{-\mathrm{i}\frac{2\pi}{T}s(t-u)} \,  dp(u) \, \sigma^2(t) \, dt \Big] \Big|
	\]
	\[
	\leq \frac{1}{2N+1} \sum_{|l|\leq N} |l| \frac{2\pi}{T} \frac{1}{2M+1}  \sum_{|s|\leq M} \E\Big[\sup_{[0,T]}\sigma^2(t)\Big]^{\frac{3}{2}}
	\]
	\[
	\times \int_0^T \Big(\int_0^t  (l^2 \frac{4\pi^2}{T^2} (\phi_n(t)-\phi_n(u)-t+u)^2+o(1)) \,\, du \Big)^{\frac{1}{2}} dt
	\]
	\[
	\leq \frac{N^2}{n} \, 8 \pi^2 T^{\frac{1}{2}} \, \E\Big[\sup_{[0,T]}\sigma^2(t)\Big]^{\frac{3}{2}} +o(1).
	\]

	Let us now further decompose the term $(A_4)$ in
	\[
	(A_4)=  \underset{(A_{4.1})}{\E \Big[ \frac{1}{2 N+1}\sum_{|l|\leq N} \mathrm{i}l \frac{2\pi}{T} \int_0^T \mathrm{e}^{-\mathrm{i}l \frac{2\pi}{T} \phi_n(u)} \sigma^2(u) \, du \, \Big( \int_0^T \mathrm{e}^{\mathrm{i}l \frac{2\pi}{T} \phi_n(t)} - \mathrm{e}^{\mathrm{i}l \frac{2\pi}{T} t} dp(t) \Big)\Big]}
	\]
	\[
	+ \underset{(A_{4.2})}{ \E \Big[ \frac{1}{2 N+1}\sum_{|l|\leq N} \mathrm{i}l \frac{2\pi}{T} \int_0^T \mathrm{e}^{\mathrm{i}l \frac{2\pi}{T} \phi_n(t)} \, dp(t) \, \Big( \int_0^T \mathrm{e}^{-\mathrm{i}l \frac{2\pi}{T} \phi_n(u)} - \mathrm{e}^{-\mathrm{i}l \frac{2\pi}{T} u} \sigma^2(u) \, du \Big)\Big]}
	\]
	\[
	+ \underset{(A_{4.3})}{ \E \Big[ \frac{1}{2 N+1}\sum_{|l|\leq N} \mathrm{i}l \frac{2\pi}{T} \int_0^T \mathrm{e}^{-\mathrm{i}l \frac{2\pi}{T} u} \, \sigma^2(u) \, du \,  \int_0^T \mathrm{e}^{\mathrm{i}l \frac{2\pi}{T} t} \, dp(t)  - \int_0^T \eta(t) dt \Big]} .
	\]
	We call $\Lambda(n,N)=|(A_2)|+|(A_3)|+|(A_{4.1})|+|(A_{4.2})|$.
	
	Let us first discuss the terms $(A_2)$ and $(A_3)$. For $i\neq j$, the terms
	\[
	\frac{1}{2N+1} \sum_{|l|\leq N} \mathrm{i}l \frac{2\pi}{T} \frac{1}{2M+1} \sum_{|s|\leq M}
	\sum_{i\neq j} \mathrm{e}^{\mathrm{i}\frac{2\pi}{T}(l-s)(t_i-t_j)} \,\, \E\Big[\int_{t_i}^{t_{i+1}} \int_{t_i}^{t} dp(u) dp(t) \int_{t_j}^{t_{j+1}} dp(t) \Big]
	\]
	and
	\[
	\frac{1}{2 N+1}\sum_{|l|\leq N} \mathrm{i}l \frac{2\pi}{T} \sum_{i \neq j}  \mathrm{e}^{\mathrm{i}\frac{2\pi}{T}l(t_i-t_j)} \,\,  \E \Big[ \int_{t_j}^{t_{j+1}}  \int_{t_j}^{t} dp(u) dp(t) \int_{t_i}^{t_{i+1}} dp(t)\Big]
	\]
	are zero because the It\^o integrals appearing in the expectations are defined on non overlapping intervals.
	For $i=j$, let us evaluate the terms $|(A_2)|$ and $|(A_3)|$. In this instance, $(A_2)$ and $(A_3)$ are both equal
	and
	\[
	\Big| \E \Big[ \frac{1}{2 N+1}\sum_{|l|\leq N} \mathrm{i}l \frac{2\pi}{T} \sum_{i=0}^{n-1} \int_{t_i}^{t_{i+1}} \int_{t_i}^{t} dp(u) dp(t) \int_{t_i}^{t_{i+1}} dp(t) \Big] \Big|
	\]
	\[
	\leq \Big| \frac{1}{2 N+1}\sum_{|l|\leq N} \mathrm{i}l \frac{2\pi}{T} \sum_{i=0}^{n-1} \E \Big[  \int_{t_i}^{t_{i+1}} \int_{t_i}^{t} dp(u) \sigma^2(t) dt \Big] \Big|
	\]
	\[
	\leq \frac{1}{2 N+1}\sum_{|l|\leq N} |l| \frac{2\pi}{T} \sum_{i=0}^{n-1} \E \Big[ \Big| \int_{t_i}^{t_{i+1}} \int_{t_i}^{t} dp(u) \sigma^2(t) dt \Big| \Big]
	\]
	\[
	\leq N \frac{2\pi}{T} \sum_{i=0}^{n-1} \E\Big[\sup_{[0,T]}\sigma^2(t)\Big]^{\frac{3}{2}} (t_{i+1}-t_i)^{\frac{3}{2}}
	\]
	\[
	\leq  N \frac{2\pi}{T} \Big(\frac{T}{n} \Big)^{\frac{3}{2}} n \E\Big[\sup_{[0,T]}\sigma^2(t)\Big]^{\frac{3}{2}} =
	\frac{N}{\sqrt{n}} \,  2\pi T^{\frac{1}{2}} \,  \E\Big[\sup_{[0,T]}\sigma^2(t)\Big]^{\frac{3}{2}},
	\]
	by using the It\^o isometry and the H\"older inequality.
	Moreover, because of the Cauchy-Schwartz inequality, $|(A_{4.1})|$ is less than or equal to
	
	\[
	\frac{1}{2 N+1}\sum_{|l|\leq N} |l| \frac{2\pi}{T} \E \Big[ \Big| \int_0^T \mathrm{e}^{-\mathrm{i}l \frac{2\pi}{T} \phi_n(u)} \sigma^2(u) \, du \Big|^2 \Big]^{\frac{1}{2}} \E \Big[ \Big| \int_0^T \mathrm{e}^{\mathrm{i}l \frac{2\pi}{T} \phi_n(t)} - \mathrm{e}^{\mathrm{i}l \frac{2\pi}{T} \phi_n(t)} dp(t) \Big|^2 \Big]^{\frac{1}{2}}
	\]
	\[
	\leq \frac{1}{2 N+1}\sum_{|l|\leq N} |l| \frac{2\pi}{T} \E \Big[ \int_0^T  \sigma^4(u) \, du \Big]^{\frac{1}{2}} T^{\frac{1}{2}} \E \Big[ \int_0^T (l \frac{2\pi}{T} \frac{T}{n}+o(1))^2 \sigma^2(t) \, dt  \Big]^{\frac{1}{2}}
	\]
	\[
	\frac{1}{2 N+1}\sum_{|l|\leq N} |l| \frac{2\pi}{T} \E [\sup_{t\in[0,T]} \sigma^4(t)]^{\frac{1}{2}}  \E [\sup_{t\in[0,T]} \sigma^2(t)]^{\frac{1}{2}} T \frac{N}{n} 2\pi +o(1)
	\]
	\[
	\leq \frac{N^2}{n} \, 4 \pi^2 \,  \E [\sup_{t\in[0,T]} \sigma^4(t)]^{\frac{1}{2}}  \E [\sup_{t\in[0,T]} \sigma^2(t)]^{\frac{1}{2}} +o(1),
	\]
	by applying Taylor's Formula and the Cauchy-Schwartz inequality.
	Analogously, it can be shown that $|(A_{4.2})|$ is less than or equal to
	
	$$\frac{N^2}{n} \, 4 \pi^2 T^{\frac{1}{2}} \,  \E [\sup_{t\in[0,T]} \sigma^4(t)]^{\frac{1}{2}}  \E [\sup_{t\in[0,T]} \sigma^2(t)]^{\frac{1}{2}} +o(1).$$
	
	It remains to evaluate the term $|(A_{4.3})|$ that we call $\Psi(N)$.
	By formula (\ref{mod3}), $|(A_{4.3})|$ can be expressed as
	
	\[
	\Big| \E \Big[ \int_{0}^{T} \int_{0}^{t}  D_N(t-u) dp(u)d\sigma^2(t)+ \int_{0}^{T} \int_{0}^{t}  D_N(t-u)\, d\sigma^2(u)dp(t)
	\]
	\[
	-\int_{0}^{T} \int_{0}^{t} D_N(u) dp(u) d\sigma^2(t)- \int_{0}^{T} \int_{0}^{t} D_N(t) d\sigma^2(u) dp(t)
	- \int_{0}^{T} D_N(u) \eta(u) du \Big] \Big|.
	\]
	The It\^o integrals have zero mean and the terms above are simply equal to
	\[
	\Big| \E \Big[- \int_{0}^{T} D_N(u) \eta(u) du \Big] \Big| \leq \E\Big[ \int_0^T D_N^2(u) du \int_0^T \eta(u)^2 du \Big]^{\frac{1}{2}}
	\]
	after applying the Cauchy-Schwartz inequality.
	Thus,
	\[
	\Psi(N) \leq \frac{T}{\sqrt{2N+1}} \, \E[\sup_{t\in[0,T]} \eta(t)^2]^{\frac{1}{2}}.
	\]
	
	The terms $\Gamma(n,M,N)$, $\Lambda(n,N)$ and $\Omega(N)$ converge to zero under Assumption (\ref{hyp1}) which concludes the proof. 
\end{proof}

\begin{proof}[Proof of Theorem \ref{MSE}]
	We have that
	\begin{align}
		\label{mse}
		\E[(\widetilde{\eta}_{n,M,N}-\eta)^2]	&= \E \Big[ \Big( \sum_{i, j, k:  i\neq j \neq k} D_M(t_{i}-t_{j}) D_N^{\prime}(t_{k}-t_{j})
		\widetilde{\delta}_{i} \widetilde{\delta}_{j} \widetilde{\delta}_{k}\\
		&+ \sum_{i, j: i \neq j} D_M(t_{i}-t_{j}) D_N^{\prime}(t_{i}-t_{j})
		\widetilde{\delta}_{i}^2 \widetilde{\delta}_{j} 
		+ \sum_{i, j}  D_N^{\prime}(t_{i}-t_{j}) \widetilde{\delta}_{i} \widetilde{\delta}_{j}^2 -\eta \Big)^2 \Big] \nonumber
	\end{align}
	which is in turn equal to
	\begin{align*}
		& \E \Big[ \Big( (\eta_{n,M,N} -\eta )\\
		&+ \Big(\sum_{i, j, k:  i\neq j \neq k} D_M(t_{i}-t_{j}) D_N^{\prime}(t_{k}-t_{j})
		(\delta_i\delta_j\epsilon_k+ \delta_j \delta_k\epsilon_i+ \delta_k \delta_i\epsilon_j\\
		&+\delta_i\epsilon_j\epsilon_k +\delta_j\epsilon_i\epsilon_k+\delta_k\epsilon_i\epsilon_j+\epsilon_i\epsilon_j\epsilon_k) \Big)
		\\
		&+ \Big( \sum_{i, j: i \neq j} D_M(t_{i}-t_{j}) D_N^{\prime}(t_{i}-t_{j}) \,
		(\delta_i \epsilon_j + \epsilon^2_i \delta_j+ \epsilon^2_i \epsilon_j+2 \delta_i\delta_j\epsilon_i + 2 \delta_i \epsilon_j \epsilon_i) \Big) \\
		&+ \Big( \sum_{i, j}  D_N^{\prime}(t_{i}-t_{j}) \, ( \delta_j \epsilon_i + \epsilon^2_j \delta_i+ \epsilon^2_j \epsilon_i+2 \delta_j\delta_i\epsilon_j + 2 \delta_j \epsilon_i \epsilon_j) \Big) \Big)^2 \Big].
	\end{align*}

	The term $\E[(\eta_{n,M,N}-\eta)^2]$ corresponds to the mean squared error of the estimator (\ref{decomp}) in the absence of microstructure noise and converges to zero as $n,M,N$ tend to infinity. The error decomposition (\ref{est_error}) and the proof of Theorem \ref{T0} highlight that the term $\eta_{n,N}^1$ is in $L_2$-norm bigger than $(\eta_{n,M,N}^2-\eta)$.
	Then, to prove our claim, it is enough to analyse the convergence to zero of 
	\begin{equation}
		\label{mse_1}
		\E [ (\eta_{n,N}^1 -\eta)^2].
	\end{equation}
	Following Remark \ref{zero_drift}, we have that (\ref{mse_1}) is equal to
	\begin{align}
		\E \Big[ \Big( &\frac{1}{2 N+1}\sum_{|l|\leq N} \mathrm{i}l \frac{2\pi}{T} \int_0^T \int_0^T (\mathrm{e}^{\mathrm{i}\frac{2\pi}{T}l(\phi_n(t)-\phi_n(u))}-\mathrm{e}^{\mathrm{i}\frac{2\pi}{T}l(t-u)})  \sigma^2(u) \, du \, \sigma(t) \, dW(t) \nonumber\\
		&+\int_0^T \int_0^t D_N(t-u) \, dp(u) \, d\sigma^2(t) + \int_0^T \int_0^t D_N(t-u) \, d\sigma^2(t)\, dp(u) \nonumber\\
		&-\int_{0}^{T} \int_{0}^{t} D_N(u) dp(u) d\sigma^2(t)-\int_{0}^{T} \int_{0}^{t} D_N(u)  d\sigma^2(t) dp(u)-\int_0^T D_N(u) \, \eta(u) du \Big)^2 \Big] \nonumber
	\end{align}
	\begin{align}
		\leq &2 \E \Big[ \Big( \frac{1}{2 N+1}\sum_{|l|\leq N} \mathrm{i}l \frac{2\pi}{T} \int_0^T \int_0^T (\mathrm{e}^{\mathrm{i}\frac{2\pi}{T}l(\phi_n(t)-\phi_n(u))}-\mathrm{e}^{\mathrm{i}\frac{2\pi}{T}l(t-u)})  \sigma^2(u) \, du \, \sigma(t) \, dW(t) \Big)^2 \Big] \label{e1}\\
		&+8 \E\Big[ \Big( \int_0^T \int_0^t D_N(t-u) \, dp(u) \, d\sigma^2(t) \Big)^2 \Big] +8 \E\Big[ \Big( \int_0^T \int_0^t D_N(t-u) \, dp(u) \, d\sigma^2(t) \Big)^2 \Big] \label{e2}\\
		&+8 \E \Big[ \Big( \int_0^T D_N(u) \, \eta(u) du \Big)^2 \Big] +8 \E \Big[ \Big( \int_0^T D_N(u) \, \eta(u) du \Big)^2 \Big] \label{e3}
	\end{align}
	\begin{align}
		\leq & \frac{128 \pi^2}{T^2} \frac{N^4}{n^2} \E[\sup_{t \in [0,T]} \sigma^2(t)] \E[\sup_{t \in [0,T]} \sigma^4(t)] \label{e11}\\
		&+ 16 \frac{T^2}{2N+1} \E[\sup_{t \in [0,T]} \sigma^2(t)] \E[\sup_{t \in [0,T]} \gamma^2(t)] \label{e22}\\
		&+ 16 \frac{T}{2N+1}\E [\sup_{t \in [0,T]} \eta(t)^2], \label{e33}
	\end{align}
	where (\ref{e11}), (\ref{e22}), (\ref{e33}) correspond to the estimation of the summands (\ref{e1}), (\ref{e2}), (\ref{e3}), respectively. Thus, (\ref{mse_1}) converges to zero as $n,N \to \infty$ and so does the mean squared error of the estimator (\ref{decomp}).
	However, whenever a noise component appears in the decomposition (\ref{mse}), the related terms diverge to infinity as $n,M,N$ goes to infinity.
	As exemplary calculation, we will show that
	\begin{equation}
		\label{mse_dom}
		\E[\Big( \sum_{i, j}  D_N^{\prime}(t_{i}-t_{j}) \, ( \delta_j \epsilon_i + \epsilon^2_j \delta_i+ \epsilon^2_j \epsilon_i+2 \delta_j\delta_i\epsilon_j + 2 \delta_j \epsilon_i \epsilon_j) \Big) \Big)^2 ]
	\end{equation}
	diverges as $n,N \to \infty$ and is greater than $O(n^2N)$. In order to handle the other terms in (\ref{mse}), the strategies of computation addressed below have to be used. Ultimately, this leads to show that the remaining terms in (\ref{mse}) are greater than $O(NM^2+\frac{n^2N}{M})$.

	We have that
	\begin{align}
		& \E\Big[\Big( \sum_{i, j}  D_N^{\prime}(t_{i}-t_{j}) \,  \delta_j \epsilon_i + \epsilon^2_j \delta_i+ \epsilon^2_j \epsilon_i+2 \delta_j\delta_i\epsilon_j + 2 \delta_j \epsilon_i \epsilon_j \Big)^2 \Big] \nonumber\\
		&= \sum_{i, j} (D_N^{\prime}(t_{i}-t_{j}))^2  \E[( \delta_j \epsilon_i + \epsilon^2_j \delta_i+ \epsilon^2_j \epsilon_i+2 \delta_j\delta_i\epsilon_j + 2 \delta_j \epsilon_i \epsilon_j)^2] \label{inf1}\\
		&+ \sum_{i, j,i^{\prime},j^{\prime}: i\neq i^{\prime}, j\neq j^{\prime}} D_N^{\prime}(t_{i}-t_{j}) \overline{D_N^{\prime}(t_{i^{\prime}}-t_{j^{\prime}})}
		\E[(\delta_j \epsilon_i + \epsilon^2_j \delta_i+ \epsilon^2_j \epsilon_i+2 \delta_j\delta_i\epsilon_j + 2 \delta_j \epsilon_i \epsilon_j) \nonumber\\
		& (\delta_{j^{\prime}} \epsilon_{i^{\prime}} + \epsilon^2_{j^{\prime}} \delta_{i^{\prime}}+ \epsilon^2_{j^{\prime}} \epsilon_{i^{\prime}}+2 \delta_{j^{\prime}}\delta_{i^{\prime}}\epsilon_{j^{\prime}} + 2 \delta_{j^{\prime}} \epsilon_{i^{\prime}} \epsilon_{j^{\prime}})] \label{inf2}.
	\end{align}
	Under Assumption (H2), we have that (\ref{mse_dom}) is equal to
	
	\begin{align}
		&\sum_{i, j} (D_N^{\prime}(t_{i}-t_{j}))^2  (\E[ \delta_j^2] \E[\epsilon_i^2] + \E[\epsilon^4_j] \E[\delta_i^2]+ \E[\epsilon^4_j \epsilon_i^2]+ 4 \E[\delta_j^2\delta_i^2] \E[\epsilon_j^2] + 4 \E[\delta_j^2] \E[\epsilon_i^2 \epsilon_j^2]) \label{mse_dom1}\\
		+&\sum_{i, j,i^{\prime},j^{\prime}: i\neq i^{\prime}, j\neq j^{\prime}} D_N^{\prime}(t_{i}-t_{j}) \overline{D_N^{\prime}(t_{i^{\prime}}-t_{j^{\prime}})} \E [ \epsilon^2_j \epsilon_i \epsilon^2_{j^{\prime}} \epsilon_{i^{\prime}} ] \label{mse_dom2}.
	\end{align}
	
	It holds that
	
	\begin{align}
		&\E[\epsilon_i^2]=2\E[\zeta^2] \nonumber\\
		&\E[\epsilon_i^4]=2\E[\zeta^4] +6 \E[\zeta^2]^2 \nonumber \\
		&\begin{array}{ll}
			\E[\epsilon_j^4\epsilon_i^2] &= \, \left\{\begin{array}{ll}
				4\E[\zeta^4]\E[\zeta^2]+12\E[\zeta^2]^3 & \textrm{if $|i-j|\neq 1$},\\
				9\E[\zeta^4]\E[\zeta^2]+\E[\zeta^6]+6\E[\zeta^2]^3-4\E[\zeta^3]^2 & \textrm{if $i=j-1$},\\
				13\E[\zeta^4]\E[\zeta^2]+\E[\zeta^6]+2\E[\zeta^2]^3-4\E[\zeta^3]^2 & \textrm{if $i=j+1$}.
			\end{array} \right.
		\end{array} \nonumber \\
		&\begin{array}{ll}
			\E[\epsilon_j^2\epsilon_i^2] &= \, \left\{\begin{array}{ll}
				4\E[\zeta^2]^2 & \textrm{if $|i-j|> 1$},\\
				3\E[\zeta^2]^2+\E[\zeta^4] & \textrm{if $i=j-1$},\\
				3\E[\zeta^2]^2+\E[\zeta^4] & \textrm{if $i=j+1$}.
			\end{array} \right.
		\end{array} \nonumber \\
		&\E[\epsilon_i^3]= 0 \nonumber
	\end{align}
	\begin{align}
		\begin{array}{ll}
			\E [ \epsilon^2_j \epsilon_i \epsilon^2_{j^{\prime}} \epsilon_{i^{\prime}} ] &= \, \left\{\begin{array}{ll}
				0 & \textrm{if $i\neq i^{\prime},j\neq j^{\prime}, i\neq j^{\prime}, j\neq i^{\prime}$},\\
				0 & \textrm{if $i\neq i^{\prime},j\neq j^{\prime}, i= j^{\prime}, j= i^{\prime}$ and $|i-j|\neq 1$},\\
				a & \textrm{if $i\neq i^{\prime},j\neq j^{\prime}, i= j^{\prime}, j= i^{\prime}$ and $i=j+1$}, \\
				b & \textrm{if $i\neq i^{\prime},j\neq j^{\prime}, i= j^{\prime}, j= i^{\prime}$ and $i=j-1$},
			\end{array} \right.
		\end{array} \nonumber
	\end{align}
	where $a=\E[\zeta^3]^2-\E[\zeta^6] -6\E[\zeta^4]\E[\zeta^2]-9\E[\zeta^2]^3$, and $b=\E[\zeta^3]^2-\E[\zeta^6] -6\E[\zeta^4]\E[\zeta^2]-9\E[\zeta^2]^3$.
	
	Therefore (\ref{mse_dom}) is equal to
	\begin{align}
		&\sum_{i,j} (D_N^{\prime}(t_{i}-t_{j}))^2 (2 \E[\delta_j^2] \E[\zeta^2] + 2\E[\delta_i^2]\E[\zeta^4] +6 \E[\delta_i^2]\E[\zeta^2]^2 + 8\E[\delta_i^2\delta_j^2] \E[\zeta^2] ) \label{n1}\\
		+& \sum_{i,j:|i-j|\neq 1} (D_N^{\prime}(t_{i}-t_{j}))^2 16 \E[\delta_j^2] \E[\zeta^2]^2 +\sum_{i,j:|i-j|=1}  (D_N^{\prime}(t_{i}-t_{j}))^2 \E[\delta_j^2] \nonumber\\
		&\times (12 \E[\zeta^2]^2+ 4\E[\zeta^4]) \label{n2}\\
		+& \sum_{i,j: |i-j|\neq 1} (D_N^{\prime}(t_{i}-t_{j}))^2 (4\E[\zeta^4]\E[\zeta^2]+12\E[\zeta^2]^3)\label{n3}\\
		+& \sum_{i,j: i=j-1} (D_N^{\prime}(t_{j-1}-t_{j}))^2 (9\E[\zeta^4]\E[\zeta^2]+\E[\zeta^6]+6\E[\zeta^2]^3-4\E[\zeta^3]^2) \label{s1}\\
		+& \sum_{i,j: i=j+1} (D_N^{\prime}(t_{j+1}-t_{j}))^2 (13\E[\zeta^4]\E[\zeta^2]+\E[\zeta^6]+2\E[\zeta^2]^3-4\E[\zeta^3]^2) \label{s2}\\
		+& \sum_{i,j:i=j-1} D_N^{\prime}(t_{j-1}-t_{j}) \overline{D_N^{\prime}(t_{j}-t_{j-1})} (\E[\zeta^3]^2-\E[\zeta^6] -6\E[\zeta^4]\E[\zeta^2]-9\E[\zeta^2]^3) \label{s3}\\
		+& \sum_{i,j:i=j+1} D_N^{\prime}(t_{j+1}-t_{j}) \overline{D_N^{\prime}(t_{j}-t_{j+1})} (\E[\zeta^3]^2-\E[\zeta^6] -6\E[\zeta^4]\E[\zeta^2]-9\E[\zeta^2]^3) \label{s4}
	\end{align}
	
	Computing the summands from (\ref{s1}) to (\ref{s4}), we obtain
	
	\begin{align}
		&(n-1) \Big(\frac{1}{(2N+1)^2} \sum_{|l|<N} l^2 \frac{4\pi^2}{T^2} +\frac{1}{(2N+1)^2} \sum_{l\neq l^{\prime}} ll^{\prime}  \frac{4\pi^2}{T^2} \mathrm{e}^{-\mathrm{i}\frac{2\pi}{n}(l-l^{\prime})} \Big) \nonumber\\
		&\times (9\E[\zeta^4]\E[\zeta^2]+\E[\zeta^6]+6\E[\zeta^2]^3-4\E[\zeta^3]^2) \label{s11}\\
		+& (n-1)\Big(\frac{1}{(2N+1)^2} \sum_{|l|<N} l^2 \frac{4\pi^2}{T^2} +\frac{1}{(2N+1)^2} \sum_{l\neq l^{\prime}} ll^{\prime}  \frac{4\pi^2}{T^2} \mathrm{e}^{\mathrm{i}\frac{2\pi}{n}(l-l^{\prime})} \Big) \nonumber\\ &\times (13\E[\zeta^4]\E[\zeta^2]+\E[\zeta^6]+2\E[\zeta^2]^3-4\E[\zeta^3]^2) \label{s22}\\
		+& (n-1) \Big(\frac{1}{(2N+1)^2} \sum_{|l|<N} l^2 \frac{4\pi^2}{T^2} \mathrm{e}^{-\mathrm{i}\frac{4\pi}{n}l} +\frac{1}{(2N+1)^2} \sum_{l\neq l^{\prime}} ll^{\prime}  \frac{4\pi^2}{T^2} \mathrm{e}^{-\mathrm{i}\frac{2\pi}{n}(l^{\prime}+l)} \Big) \nonumber\\
		&\times ( \E[\zeta^3]^2-\E[\zeta^6] -6\E[\zeta^4]\E[\zeta^2]-9\E[\zeta^2]^3) \label{s33}\\
		+& (n-1) \Big(\frac{1}{(2N+1)^2} \sum_{|l|<N} l^2 \frac{4\pi^2}{T^2} \mathrm{e}^{+\mathrm{i}\frac{4\pi}{n}l} +\frac{1}{(2N+1)^2} \sum_{l\neq l^{\prime}} ll^{\prime}  \frac{4\pi^2}{T^2} \mathrm{e}^{\mathrm{i}\frac{2\pi}{n}(l^{\prime}+l)} \Big) \nonumber\\
		& \times (\E[\zeta^3]^2-\E[\zeta^6] -6\E[\zeta^4]\E[\zeta^2]-9\E[\zeta^2]^3) \label{s44}
	\end{align}
	
	The terms where the indices $l$ and $l^{\prime}$ appear in (\ref{s11}) and (\ref{s33}) has to be summed up following the rule highlighted in Table \ref{TabL0}. The same applies for the terms where the indices $l$ and $l^{\prime}$ appear in (\ref{s22}) and (\ref{s44}).

	\begin{table}
		\footnotesize
		\centering.
		\begin{tabular}{|c|c|}
			\hline
			\textrm{Indices belonging to (\ref{s11})} & \textrm{Indices belonging to (\ref{s33})}\\
			\hline
			$\color{red}{l>0,l^{\prime}>0}$ &  $\color{red}{l>0, l^{\prime}<0}$ \\
			\hline
			$\color{red}{l<0, l^{\prime}<0}$ &  $\color{red}{l<0,l^{\prime}>0}$  \\
			\hline
			$\color{blue}{l>0,l^{\prime}<0}$ &  $\color{blue}{l>0,l^{\prime}>0}$ \\
			\hline
			$\color{blue}{l<0,l^{\prime}>0}$ & $\color{blue}{l<0,l^{\prime}<0}$   \\
			\hline
		\end{tabular}
		\caption{Summing rule: the terms has to be first summed according to the indices present in each row, then the resulting addends has to be summed with respect to the indices grouped by color.}
		\label{TabL0}
	\end{table}
	
	The summands from (\ref{s1}) to (\ref{s4}) are then equal to
	\begin{align}
		&(n-1) \frac{1}{(2N+1)^2} \sum_{|l|<N} l^2 \frac{4\pi^2}{T^2} (22\E[\zeta^4]\E[\zeta^2]+2\E[\zeta^6]+8\E[\zeta^2]^3-8\E[\zeta^3]^2) \label{s1f}\\
		&+ (n-1) \frac{1}{(2N+1)^2} \sum_{|l|<N} l^2 \frac{4\pi^2}{T^2} 2\cos(\frac{4\pi}{n}l) (\E[\zeta^3]^2-\E[\zeta^6] -6\E[\zeta^4]\E[\zeta^2]-9\E[\zeta^2]^3) \label{s2f}\\
		&+ (n-1) \frac{1}{(2N+1)^2} \sum_{l,l^{\prime}>0} l l^{\prime} \frac{4\pi^2}{T^2} \sin(\frac{2\pi}{n}l) \sin(\frac{2\pi}{n}l^{\prime}) (34\E[\zeta^4]\E[\zeta^2]+4\E[\zeta^6] \nonumber\\
		& +26 \E[\zeta^2]^3-5\E[\zeta^3]^2). \label{s3f}
	\end{align}
	If $N=n^{\frac{1}{\beta}}$ with $\beta > \frac{log(n)}{\log(n)-log(8)}$ then  $0\leq \cos(\frac{4\pi}{n}l) \leq 1$ for all $|l|\leq N$. If $\beta> \frac{log(n)}{log(n)-log(2)}$ then $0\leq \sin(\frac{2\pi}{n}l) \leq 1$. We have that the choice of $\beta> \frac{log(n)}{\log(n)-log(8)}$ is implied by the ratio in (\ref{hyp2}).
	In conclusion, the term (\ref{s3f}) is greater than or equal to zero and the sum between (\ref{s1f}) and (\ref{s2f}) is  greater than or equal to
	\begin{equation}
		\label{fin1}
		(n-1) \frac{1}{(2N+1)^2} \sum_{|l|<N} l^2 \frac{4\pi^2}{T^2} (22\E[\zeta^4]\E[\zeta^2]+2\E[\zeta^6]+8\E[\zeta^2]^3-8\E[\zeta^3]^2),
	\end{equation}
	if $(\E[\zeta^3]^2-\E[\zeta^6] -6\E[\zeta^4]\E[\zeta^2]-9\E[\zeta^2]^3) >0$ and
	greater than or equal to
	\begin{equation}
		\label{fin2}
		(n-1) \frac{1}{(2N+1)^2} \sum_{|l|<N} l^2 \frac{4\pi^2}{T^2} (10\E[\zeta^4]\E[\zeta^2]-10\E[\zeta^2]^3-6\E[\zeta^3]^2),
	\end{equation}
	if $(\E[\zeta^3]^2-\E[\zeta^6] -6\E[\zeta^4]\E[\zeta^2]-9\E[\zeta^2]^3) <0$.
	
	Let us now analyse the summands from (\ref{n1}) to (\ref{n3}). They are equal to
	
	\begin{align}
		&\sum_{i,j} \Big(\frac{1}{(2N+1)^2} \sum_{|l|<N} l^2 \frac{4\pi^2}{T^2} +\frac{1}{(2N+1)^2} \sum_{l\neq l^{\prime}} ll^{\prime}  \frac{4\pi^2}{T^2} \mathrm{e}^{-\mathrm{i}\frac{2\pi}{T}(l-l^{\prime})(t_i-t_j)} \Big) \nonumber\\
		& \times (2 \E[\delta_j^2] \E[\zeta^2] + 2\E[\delta_i^2]\E[\zeta^4] +6 \E[\delta_i^2]\E[\zeta^2]^2 + 8\E[\delta_i^2\delta_j^2] \E[\zeta^2] ) \label{n11} \\
		&+ \sum_{i,j:|i-j|\neq 1} \Big(\frac{1}{(2N+1)^2} \sum_{|l|<N} l^2 \frac{4\pi^2}{T^2} +\frac{1}{(2N+1)^2} \sum_{l\neq l^{\prime}} ll^{\prime}  \frac{4\pi^2}{T^2} \mathrm{e}^{-\mathrm{i}\frac{2\pi}{T}(l-l^{\prime})(t_i-t_j)} \Big) \nonumber\\
		&\times 16 \E[\delta_j^2] \E[\zeta^2]^2 \label{n21}\\
		&+\sum_{i,j:|i-j|=1} \Big(\frac{1}{(2N+1)^2} \sum_{|l|<N} l^2 \frac{4\pi^2}{T^2} +\frac{1}{(2N+1)^2} \sum_{l\neq l^{\prime}} ll^{\prime}  \frac{4\pi^2}{T^2} \mathrm{e}^{-\mathrm{i}\frac{2\pi}{T}(l-l^{\prime})(t_i-t_j)} \Big) \nonumber\\
		&\times \E[\delta_j^2](12 \E[\zeta^2]^2+ 4\E[\zeta^4]) \label{n22}\\
		&+ \sum_{i,j: |i-j|\neq 1} \Big(\frac{1}{(2N+1)^2} \sum_{|l|<N} l^2 \frac{4\pi^2}{T^2} +\frac{1}{(2N+1)^2} \sum_{l\neq l^{\prime}} ll^{\prime}  \frac{4\pi^2}{T^2} \mathrm{e}^{-\mathrm{i}\frac{2\pi}{T}(l-l^{\prime})(t_i-t_j)} \Big) \nonumber\\
		&\times (4\E[\zeta^4]\E[\zeta^2]+12\E[\zeta^2]^3). \label{n33}
	\end{align}

	\begin{table}
		\footnotesize
		\centering.
		\begin{tabular}{|c|c|}
			\hline
			
			& $l,l^{\prime}$   \\
			\hline
			$s$ &  $\color{blue}{ll^{\prime} (\cos(2\pi/n(ls-l^{\prime}s)) + i \sin(2\pi/n(ls-l^{\prime}s))} $   \\
			\hline
			$-s$ &  $\color{blue}{ll^{\prime} (\cos(2\pi/n(-ls+l^{\prime}s)) + i \sin(2\pi/n(-ls+l^{\prime}s))}$   \\
			\hline
			& $ l,-l^{\prime}$   \\
			\hline
			$s$ & $\color{red}{-ll^{\prime} (\cos(2\pi/n(ls+l^{\prime}s)) + i \sin(2\pi/n(ls+l^{\prime}s))}$   \\
			\hline
			$-s$ &   $\color{red}{-ll^{\prime} (\cos(2\pi/n(-ls-l^{\prime}s)) + i \sin(2\pi/n(-ls-l^{\prime}s))}$ \\
			\hline
			& $-l,-l^{\prime}$ \\
			\hline
			$s$ & $\color{blue}{ll^{\prime} (\cos(2\pi/n(-ls+l^{\prime}s)) + i \sin(2\pi/n(-ls+l^{\prime}s))}$\\
			\hline
			$-s$ & $\color{blue}{ll^{\prime} (\cos(2\pi/n(ls-l^{\prime}s)) + i \sin(2\pi/n(ls-l^{\prime}s))}$\\
			\hline
			& $-l,l^{\prime}$ \\
			\hline 
			$s$ & $\color{red}{-ll^{\prime} (\cos(2\pi/n(-ls-l^{\prime}s)) + i \sin(2\pi/n(-ls-l^{\prime}s))}$ \\
			\hline
			$-s$ & $\color{red}{-ll^{\prime} (\cos(2\pi/n(ls+l^{\prime}s)) + i \sin(2\pi/n(ls+l^{\prime}s))}$\\
			\hline
		\end{tabular}
		\caption{Coefficients appearing in the summands from (\ref{n11}) to (\ref{n33}) with respect to the indices $l,l^{\prime}$, $s$. Note that $t_i-t_j=\frac{2\pi}{n}s$ and all the indices are considered positive constants.}
		\label{TabL3}
	\end{table}
	
	\begin{table}
		\footnotesize
		\centering.
		\begin{tabular}{|c|c|}
			\hline
			
			& $l,l^{\prime}$   \\
			\hline
			$s$ &  $\color{blue}{ll^{\prime} (2\cos(2\pi/n(ls-l^{\prime}s)) )} $  \\
			\hline
			$-s$ &  $\color{blue}{ll^{\prime} (2\cos(2\pi/n(-ls+l^{\prime}s)) )}$ \\
			\hline
			& $ l,-l^{\prime}$   \\
			\hline
			$s$ & $\color{red}{-ll^{\prime} (2\cos(2\pi/n(ls+l^{\prime}s)))}$   \\
			\hline
			$-s$ &   $\color{red}{-ll^{\prime} (2\cos(2\pi/n(-ls-l^{\prime}s)) )}$  \\
			\hline
		\end{tabular}
		\caption{Coefficients appearing in the summands from (\ref{n11}) to (\ref{n33}) with respect to the indices $l,l^{\prime}$, $s$. Note that $t_i-t_j=\frac{2\pi}{n}s$ and all the indices are considered positive constants.}
		\label{TabL4}
	\end{table}

	We first take care of the sum with respect to the indices $i,j,l$ and $l^{\prime}$ appearing in the terms (\ref{n11}) to (\ref{n33}). To simply explain the calculations below, let us consider from now on that the indices $l$ and $l^{\prime}$ are positive and that there exists an $s=1,\ldots,n-1$ such that if $t_i> t_j$, $t_i-t_j=s \frac{2\pi}{n}$. We do not consider $s=0$ in the calculations below because $D^{\prime}_N(t_i-t_i)=0$.
	In Table \ref{TabL3}, we find, for fixed values of $s,l,l^{\prime}$, all the possible combination of the indices and the expression of the terms $ll^{\prime} \mathrm{e}^{-\mathrm{i}\frac{2\pi}{T}(l-l^{\prime})(t_i-t_j)}$ appearing in the summands. The blue and the red elements appear in (\ref{n11}), (\ref{n21}), (\ref{n22}) and(\ref{n33}), respectively, the same number of times. We first sum each row of Table \ref{TabL3}. We then obtain, Table \ref{TabL4}. Summing up the blue and red terms obtained for $s$ and $-s$, respectively, we have

	\begin{align}
		&\sum_{i,j} \frac{1}{(2N+1)^2} \sum_{|l|<N} l^2 \frac{4\pi^2}{T^2} \, (2 \E[\delta_j^2] \E[\zeta^2] + 2\E[\delta_i^2]\E[\zeta^4] +6 \E[\delta_i^2]\E[\zeta^2]^2 \nonumber\\
		& + 8\E[\delta_i^2\delta_j^2] \E[\zeta^2] )  \label{n11f0}\\
		& \sum_{i=1}^{n-1} \sum_{j=0}^{i-1} \frac{1}{(2N+1)^2} \sum_{l,l^{\prime}>0} ll^{\prime}  \frac{4\pi^2}{T^2} \, 4 \sin(\frac{2\pi}{n}ls) \sin(\frac{2\pi}{n}l^{\prime}s) \nonumber\\
		&\times (2 \E[\delta_j^2] \E[\zeta^2] + 2\E[\delta_i^2]\E[\zeta^4] +6 \E[\delta_i^2]\E[\zeta^2]^2 + 8\E[\delta_i^2\delta_j^2] \E[\zeta^2] ) \label{n11f1} \\
		+& \sum_{i,j:|i-j|\neq 1} \frac{1}{(2N+1)^2} \sum_{|l|<N} l^2 \frac{4\pi^2}{T^2} 16 \E[\delta_j^2] \E[\zeta^2]^2 \label{n22f0}\\
		+& \sum_{i=2}^{n-1} \sum_{j=0}^{i-2} \frac{1}{(2N+1)^2} \sum_{l,l^{\prime}>0} ll^{\prime}  \frac{4\pi^2}{T^2} \, 4 \sin(\frac{2\pi}{n}ls) \sin(\frac{2\pi}{n}l^{\prime}s)  16 \E[\delta_j^2] \E[\zeta^2]^2 \label{n22f1}\\
		&+\sum_{i,j:|i-j|=1} \frac{1}{(2N+1)^2} \sum_{|l|<N} l^2 \frac{4\pi^2}{T^2} \, \E[\delta_j^2](12 \E[\zeta^2]^2+ 4\E[\zeta^4]) \label{n44f0}\\
		&+  (n-1)  \frac{1}{(2N+1)^2} \sum_{l,l^{\prime}>0} ll^{\prime}  \frac{4\pi^2}{T^2} \, 4 \sin(\frac{2\pi}{n}l) \sin(\frac{2\pi}{n}l^{\prime})  \, \E[\delta_j^2](12 \E[\zeta^2]^2+ 4\E[\zeta^4]) \label{n44f1}\\
		+& \sum_{i,j: |i-j|\neq 1} \frac{1}{(2N+1)^2} \sum_{|l|<N} l^2 \frac{4\pi^2}{T^2} \, (4\E[\zeta^4]\E[\zeta^2]+12\E[\zeta^2]^3) \label{n33f0}\\
		+&  \sum_{i=2}^{n-1} \sum_{j=0}^{i-2} \frac{1}{(2N+1)^2} \sum_{l,l^{\prime}>0} ll^{\prime}  \frac{4\pi^2}{T^2} \, 4 \sin(\frac{2\pi}{n}ls) \sin(\frac{2\pi}{n}l^{\prime}s) \, (4\E[\zeta^4]\E[\zeta^2]+12\E[\zeta^2]^3) \label{n33f1}
	\end{align}
	
	If $N=n^{\frac{1}{\beta}}$ such that $\beta > \frac{log(n)}{log(n)-log(2(n-1))}$ then
	$0 \leq \sin(\frac{2\pi}{n} l s) \leq 1$ for $s=1,\ldots,n-1$ and $l>0$. The latter is straightforwardly implied by (\ref{hyp2}), being $\frac{log(n)}{log(n)-log(2(n-1))}$ negative. Therefore, the summands (\ref{n11f1}), (\ref{n22f1}), (\ref{n44f1}) and (\ref{n33f1}) are greater than or equal to zero.
	
	In conclusion, (\ref{mse_dom}) is possibly greater than or equal to two sums.
	The first one is
	\begin{align}
		&\sum_{i,j} \frac{1}{(2N+1)^2} \sum_{|l|<N} l^2 \frac{4\pi^2}{T^2} \, (2 \E[\delta_j^2] \E[\zeta^2] + 2\E[\delta_i^2]\E[\zeta^4] +6 \E[\delta_i^2]\E[\zeta^2]^2   \nonumber\\
		& + 8\E[\delta_i^2\delta_j^2] \E[\zeta^2] ) + \sum_{i,j:|i-j|\neq 1} \frac{1}{(2N+1)^2} \sum_{|l|<N} l^2 \frac{4\pi^2}{T^2} 16 \E[\delta_j^2] \E[\zeta^2]^2 \nonumber\\
		&+\sum_{i,j:|i-j|=1} \frac{1}{(2N+1)^2} \sum_{|l|<N} l^2 \frac{4\pi^2}{T^2} \, \E[\delta_j^2](12 \E[\zeta^2]^2+ 4\E[\zeta^4]) \label{nowhy} \\
		+& \sum_{i,j: |i-j|\neq 1} \frac{1}{(2N+1)^2} \sum_{|l|<N} l^2 \frac{4\pi^2}{T^2} \, (4\E[\zeta^4]\E[\zeta^2]+12\E[\zeta^2]^3) \nonumber\\
		+& (n-1) \frac{1}{(2N+1)^2} \sum_{|l|<N} l^2 \frac{4\pi^2}{T^2} (22\E[\zeta^4]\E[\zeta^2]+2\E[\zeta^6]+8\E[\zeta^2]^3-8\E[\zeta^3]^2) \nonumber ,
	\end{align}
	and the second one is
	\begin{align}
		&\sum_{i,j} \frac{1}{(2N+1)^2} \sum_{|l|<N} l^2 \frac{4\pi^2}{T^2} \, (2 \E[\delta_j^2] \E[\zeta^2] + 2\E[\delta_i^2]\E[\zeta^4] +6 \E[\delta_i^2]\E[\zeta^2]^2 + 8\E[\delta_i^2\delta_j^2] \E[\zeta^2] )  \nonumber\\
		+& \sum_{i,j:|i-j|\neq 1} \frac{1}{(2N+1)^2} \sum_{|l|<N} l^2 \frac{4\pi^2}{T^2} 16 \E[\delta_j^2] \E[\zeta^2]^2 \nonumber
	\end{align}
	\begin{align}
		&+\sum_{i,j:|i-j|=1} \frac{1}{(2N+1)^2} \sum_{|l|<N} l^2 \frac{4\pi^2}{T^2} \, \E[\delta_j^2](12 \E[\zeta^2]^2+ 4\E[\zeta^4]) \label{nowhy2} \\
		+& \sum_{i,j: |i-j|\neq 1} \frac{1}{(2N+1)^2} \sum_{|l|<N} l^2 \frac{4\pi^2}{T^2} \, (4\E[\zeta^4]\E[\zeta^2]+12\E[\zeta^2]^3) \nonumber\\
		+& (n-1) \frac{1}{(2N+1)^2} \sum_{|l|<N} l^2 \frac{4\pi^2}{T^2} (10\E[\zeta^4]\E[\zeta^2]-10\E[\zeta^2]^3-6\E[\zeta^3]^2)
		\nonumber.
	\end{align}

	Note that because of Assumption (H2), the terms $\E[\delta_i^2]$ and $\E[\delta_i^2\delta_j^2]$ are positive and finite constants. The behaviour of the sums (\ref{nowhy}) and (\ref{nowhy2}) is ruled by their first summands. In fact,
	\[
	\sum_{i,j} \frac{1}{(2N+1)^2} \sum_{|l|\leq N} l^2 \frac{4\pi^2}{T^2}= \frac{n^2}{(2N+1)^2} \frac{4 \pi^2}{T^2} \Big( \frac{N^3}{3}+\frac{N^2}{2}+\frac{N}{6} \Big),
	\]
	which diverges as $n, N \to \infty$.
\end{proof}

\end{document}